\documentclass[reqno,11pt]{article}
\usepackage[utf8]{inputenc}   
\usepackage[T1]{fontenc}      


\usepackage{amsmath,amsthm} 
\usepackage{amssymb,mathrsfs} 
\usepackage{amssymb}
\usepackage{amsfonts}
\usepackage{upgreek}
\usepackage{nicefrac}
\usepackage{geometry}
\usepackage{authblk}
\usepackage{graphicx}
\usepackage[hypertexnames=false]{hyperref}
\hypersetup{
  colorlinks = true,
  citecolor = blue,
}
 \usepackage{color}
 \usepackage{algorithm2e}
\usepackage{stmaryrd}
\usepackage[caption = false]{subfig} 
\usepackage[shortlabels,inline]{enumitem}
\usepackage{url}
\usepackage{graphicx} 
\usepackage{lmodern} 
\usepackage{xcolor}
\usepackage{bbm}

\usepackage{ifthen}
\usepackage{xargs}


\usepackage[disable]{todonotes}

\usepackage{aliascnt}
\usepackage{cleveref}
\usepackage{autonum}
\makeatletter
\newtheorem{theorem}{Theorem}
\crefname{theorem}{theorem}{Theorems}
\Crefname{Theorem}{Theorem}{Theorems}

\newtheorem*{lemma_nonumber*}{Lemma}

\newaliascnt{lemma}{theorem}
\newtheorem{lemma}[lemma]{Lemma}
\aliascntresetthe{lemma}
\crefname{lemma}{lemma}{lemmas}
\Crefname{Lemma}{Lemma}{Lemmas}

\newaliascnt{corollary}{theorem}
\newtheorem{corollary}[corollary]{Corollary}
\aliascntresetthe{corollary}
\crefname{corollary}{corollary}{corollaries}
\Crefname{Corollary}{Corollary}{Corollaries}

\newaliascnt{proposition}{theorem}
\newtheorem{proposition}[proposition]{Proposition}
\aliascntresetthe{proposition}
\crefname{proposition}{proposition}{propositions}
\Crefname{Proposition}{Proposition}{Propositions}

\newaliascnt{definition}{theorem}

\aliascntresetthe{definition}
\crefname{definition}{definition}{definitions}
\Crefname{Definition}{Definition}{Definitions}

\newaliascnt{remark}{theorem}

\aliascntresetthe{remark}
\crefname{remark}{remark}{remarks}
\Crefname{Remark}{Remark}{Remarks}

\crefname{example}{example}{examples}
\Crefname{Example}{Example}{Examples}

\crefname{figure}{figure}{figures}
\Crefname{Figure}{Figure}{Figures}

\newtheorem{assumption}{\textbf{A}\hspace{-3pt}}
\crefformat{assumption}{{\textbf{A}}#2#1#3}

\Crefname{assumptionB}{\textbf{B}\hspace{-3pt}}{\textbf{B}\hspace{-3pt}}
\crefname{assumptionB}{\textbf{B}}{\textbf{B}}

\Crefname{assumptionC}{\textbf{C}\hspace{-3pt}}{\textbf{C}\hspace{-3pt}}
\crefname{assumptionC}{\textbf{C}}{\textbf{C}}

\Crefname{assumptionH}{\textbf{H}\hspace{-3pt}}{\textbf{H}\hspace{-3pt}}
\crefname{assumptionH}{\textbf{H}}{\textbf{H}}

\Crefname{assumptionT}{\textbf{T}\hspace{-3pt}}{\textbf{T}\hspace{-3pt}}
\crefname{assumptionT}{\textbf{T}}{\textbf{T}}

\Crefname{assumptionT}{\textbf{T}\hspace{-3pt}}{\textbf{T}\hspace{-3pt}}
\crefname{assumptionT}{\textbf{T}}{\textbf{T}}

\Crefname{assumptionL}{\textbf{L}\hspace{-3pt}}{\textbf{L}\hspace{-3pt}}
\crefname{assumptionL}{\textbf{L}}{\textbf{L}}

\newtheorem{assumptionG}{\textbf{G}\hspace{-3pt}}
\Crefname{assumptionG}{\textbf{G}\hspace{-3pt}}{\textbf{G}\hspace{-3pt}}
\crefname{assumptionG}{\textbf{G}}{\textbf{G}}


\Crefname{assumptionAR}{\textbf{AR}\hspace{-3pt}}{\textbf{AR}\hspace{-3pt}}
\crefname{assumptionAR}{\textbf{AR}}{\textbf{AR}}

\newtheoremstyle{named}%
    {}{}{\itshape}{}{\bfseries}{.}{.5em}{\thmnote{#3}}
\theoremstyle{named}
\newtheorem{assumptionD*}{Theorem}

\Crefname{assumptionQ}{\textbf{Q}\hspace{-3pt}}{\textbf{Q}\hspace{-3pt}}
\crefname{assumptionQ}{\textbf{Q}}{\textbf{Q}}

\def\msa{\mathsf{A}}
\def\msb{\mathsf{B}}
\def\msc{\mathsf{C}}


\newcommandx{\functionspace}[2][1=+]{\mathbb{F}_{#1}(#2)}
\def\supp{\operatorname{supp}}


\newcommand{\1}{\mathbbm{1}}
\newcommand{\indi}[2]{\mathbbm{1}_{#1}\left( #2 \right)}

\newcommand{\borelSet}{\mathcal{B}}

\def\Tr{\operatorname{T}}
\def\BB{\operatorname{B}}

\newcommand{\LeftEqNo}{\let\veqno\@@leqno}





\newcommand{\PE}{\mathbb{E}}
\newcommand{\PP}{\mathbb{P}}


\newcommand{\abs}[1]{\left\vert #1 \right\vert}
\newcommand{\absLigne}[1]{\vert #1 \vert}
\newcommand{\tvnorm}[1]{\| #1 \|_{\mathrm{TV}}}

\newcommandx{\Vnorm}[2][1=V]{\| #2 \|_{#1}}
\newcommandx{\VnormEq}[2][1=V]{\left\| #2 \right\|_{#1}}
\newcommandx{\norm}[2][1=]{\ifthenelse{\equal{#1}{}}{\left\Vert #2 \right\Vert}{\left\Vert #2 \right\Vert^{#1}}}
\newcommandx{\normLigne}[2][1=]{\ifthenelse{\equal{#1}{}}{\Vert #2 \Vert}{\Vert #2 \Vert^{#1}}}
\newcommandx{\normsup}[2][1=]{\ifthenelse{\equal{#1}{}}{\left\Vert #2 \right\Vert_{\infty}}{\left\Vert #2 \right\Vert^{#1}_{\infty}}}
\newcommandx{\normLine}[2][1=]{\ifthenelse{\equal{#1}{}}{\Vert #2 \Vert}{\Vert #2\Vert^{#1}}}
\newcommand{\normop}[1]{\left\Vert #1 \right\Vert}

\newcommand{\parenthese}[1]{\left(#1 \right)}

\newcommand{\parentheseDeux}[1]{\left[ #1 \right]}
\newcommand{\defEns}[1]{\left\lbrace #1 \right\rbrace }
\newcommand{\defEnsLigne}[1]{\lbrace #1 \rbrace }

\newcommand{\ps}[2]{\left\langle#1,#2 \right\rangle}
\newcommand{\eqdef}{\overset{\text{\tiny def}} =}



\newcommandx\probaMarkovTilde[2][2=]
{\ifthenelse{\equal{#2}{}}{{\widetilde{\mathbb{P}}_{#1}}}{\widetilde{\mathbb{P}}_{#1}\left[ #2\right]}}

\newcommandx{\expe}[2][1=]{\ifthenelse{\equal{#1}{}}{\PE \left[ #1 \right]}{\PE^{#1}\left[ #1 \right]}}

\newcommand{\bigO}{\ensuremath{\mathcal O}}




\newcommand{\plusinfty}{+\infty}



\def\ie{i.e.}
\def\eqsp{\;}
\newcommand{\coint}[1]{\left[#1\right)}
\newcommand{\ocint}[1]{\left(#1\right]}
\newcommand{\ooint}[1]{\left(#1\right)}
\newcommand{\ccint}[1]{\left[#1\right]}

\newcommand{\oointLigne}[1]{(#1)}

\newcommandx{\weight}[2][2=n]{\omega_{#1,#2}^N}

\newcommand{\boule}[2]{\operatorname{B}(#1,#2)}
\newcommand{\ball}[2]{\operatorname{B}(#1,#2)}

\def\rmd{\mathrm{d}}
\newcommandx\sequence[3][2=,3=]
{\ifthenelse{\equal{#3}{}}{\ensuremath{( #1_{#2})}}{\ensuremath{( #1_{#2})_{ #2 \in #3}}}}
\newcommandx{\sequencen}[2][2=n\in\N]{\ensuremath{(#1)_{#2}}}
\newcommandx\sequenceDouble[4][3=,4=]
{\ifthenelse{\equal{#3}{}}{\ensuremath{\{ (#1_{#3},#2_{#3}) \}}}{\ensuremath{\{  (#1_{#3},#2_{#3}), \eqsp #3 \in #4 \}}}}
\newcommandx{\sequencenDouble}[3][3=n\in\N]{\ensuremath{\{ (#1_{n},#2_{n}), \eqsp #3 \}}}
\newcommand{\wrt}{w.r.t.}

\def\rme{\mathrm{e}}

\def\eg{e.g.}

\def\rset{\mathbb{R}}
\def\rsetep{\rset^*_+}

\def\rsetp{\rset_+}

\def\nset{\mathbb{N}}
\def\nsets{\nset^{*}}

\newcommandx{\CPE}[3][1=]{{\mathbb E}^{#3}_{#1}\left[#2 \right]} 
\newcommandx{\CPVar}[3][1=]{\mathrm{Var}^{#3}_{#1}\left\{ #2 \right\}}
\newcommand{\CPP}[3][]
{\ifthenelse{\equal{#1}{}}{{\mathbb P}\left(\left. #2 \, \right| #3 \right)}{{\mathbb P}_{#1}\left(\left. #2 \, \right | #3 \right)}}

\def\mcb{\mathcal{B}}

\newcommand{\chunk}[4][]%
{\ifthenelse{\equal{#1}{}}{\ensuremath{{#2}_{#3:#4}}}{\ensuremath{#2^#1}_{#3:#4}}
}

\def\Id{\operatorname{Id}}

\def\eventA{\mathsf{A}}
\def\eventB{\mathsf{B}}


\newcommand{\N}{\mathbb{N}}

\def\complementary{\operatorname{c}}

\def\Ham{H}

\def\Leb{\operatorname{Leb}}
\def\Lip{\operatorname{Lip}}

\def\constone{\operatorname{A}_3}
\def\consttwo{\operatorname{A}_4}
\def\constthree{\operatorname{A}_1}
\def\constfour{\operatorname{A}_2}
\def\constfive{\operatorname{A}_5}
\def\rhtwo{\operatorname{R_U}}
\def\expozero{\beta}

\def\alphaacc{\alpha_{\Ham}}
\def\balphaacc{\bar{\alpha}_{\Ham}}
\def\alphagen{\alpha}

\def\constzero{\mathrm{L}_1}
\def\constzeroT{\mathrm{M}_1}

\def\Xset{\mathsf{X}}
\def\Xtribu{\mathcal{X}}

\def\Vgeo{V}
\def\lambdageo{\lambda}
\def\bgeo{b}
\def\lambdageotilde{\tilde{\lambda}}
\def\bgeotilde{\tilde{b}}
\def\rejectregion{\mathscr{R}}
\def\ballV{\mathscr{B}}

\def\ouvert{\mathsf{U}}
\def\open{\ouvert}

\def\hfunb{f}
\def\phib{\phi}
\def\Pkerb{\mathrm{K}}
\def\constLiphx{L_{\hfunb}}

\def\rassG{R}

\def\Pker{\mathrm{P}}
\def\Kker{\mathrm{K}}

\def\Psiverlet{\Psi}
\newcommandx{\gperthmc}[2][1=,2=]{\ifthenelse{\equal{#1}{}}{\Xi}{\ifthenelse{\equal{#2}{}}{\Xi_{h,#1}}{\Xi_{#2,#1}}}}
\newcommandx{\Phiverlet}[2][1=,2=]{\ifthenelse{\equal{#1}{}}{\Phi}{\Phi_{#1}^{\circ (#2)}}}
\newcommandx{\gpertub}[2][1=,2=]{\ifthenelse{\equal{#1}{}}{g}{g_{#1}^{#2}}}
\newcommandx{\Phiverletq}[2][1=,2=]{\ifthenelse{\equal{#1}{}}{\widetilde{\Phi}}{\widetilde{\Phi}_{#1}^{\circ (#2)}}}
\newcommandx{\Phiverletqi}[2][1=,2=]{\ifthenelse{\equal{#1}{}}{\bar{\Psi}}{\bar{\Psi}_{#1}^{(#2)}}}
\newcommandx{\Pkerhmc}[2][1=,2=]{\ifthenelse{\equal{#1}{}}{\mathrm{P}}{\mathrm{P}_{#1, #2}}}
\newcommandx{\tPkerhmc}[2][1=,2=]{\ifthenelse{\equal{#1}{}}{\tilde{\mathrm{P}}}{\tilde{\mathrm{P}}_{#1, #2}}}
\newcommandx{\PkerhmcD}[2][1=,2=]{\ifthenelse{\equal{#1}{}}{\mathrm{K}}{\mathrm{K}_{#1, #2}}}

\def\projq{\mathrm{proj}}

\def\phia{\phi}

\def\detj{\mathrm{D}}
\def\det{\operatorname{det}}
\def\Jac{\mathbf{J}}
\def\sign{\operatorname{sign}}

\def\ga{g}

\def\ra{R}
\def\Cga{C}

\def\bg{b}
\def\hga{f^{\ga}}
\def\hog{\mathrm{H}^g}

\def\Dset{\mathsf{D}}
\def\Dsetc{\overline{\mathsf{D}}}
\def\dist{\operatorname{dist}}
\def\hpy{\mathrm{H}}
\def\matrix{\mathrm{M}}
\def\linearmap{\mathrm{L}}
\def\harmonic{\phi}

\def\lipgr{L_{\ga,\ra}}
\def\kernel{\mathrm{K}}
\def\qker{\mathrm{k}}
\def\MassG{M}
\def\tMassG{\tilde{M}}
\def\q{q}
\def\Q{Q}
\def\p{p}
\def\P{P}
\def\x{x}

\def\Rexp{R_1}
\def\m{m}
\def\lset{\mathsf{L}}
\newcommand{\clos}[1]{\overline{#1}}
\newcommand{\interior}[1]{#1^{\circ}}
\newcommand{\boundary}[1]{\partial #1}

\def\F{U}
\def\U{\F}
\def\a{a}
\def\b{b}

\newcommandx{\Vdrifta}[1][1=]{V_{#1}}
\newcommandx{\Eproof}[2][1=h,2=T]{\operatorname{E}_{#1,#2}}

\newcommand{\ensemble}[2]{\left\{#1\,:\eqsp #2\right\}}
\newcommand{\set}[2]{\ensemble{#1}{#2}}

\def\iff{if and only if}

\def\randomkerhmc{\overline{\mathrm{P}}}

\def\Csf{\mathsf{C}}
\def\Asf{\mathsf{A}}
\def\Bsf{\mathsf{B}}
\def\tildeAlphaacc{\tilde{\alpha}_{\Ham}}
\def\bfvarpi{\pmb{\varpi}}
\def\Tkernel{$\operatorname{T}$-kernel}
\def\Tker{\mathrm{T}}

\def\Sigmabf{\boldsymbol{\Sigma}}

\def\rmD{\mathrm{D}}


\title{On the convergence of Hamiltonian Monte Carlo}

\author[1]{Alain Durmus \footnote{Email: alain.durmus@cmla.ens-cachan.fr} }
\author[2]{\'Eric Moulines \footnote{Email: eric.moulines@polytechnique.edu} }
\author[3]{Eero Saksman \footnote{Email: eero.saksman@helsinki.fi} }

\affil[1]{CMLA - \'Ecole normale supérieure Paris-Saclay, CNRS, Université Paris-Saclay, 94235 Cachan, France.}
\affil[2]{Centre de Math\'ematiques Appliqu\'ees,\\ UMR 7641, Ecole Polytechnique}
\affil[3]{University of Helsinki, Department of Mathematics and Statistics}

\begin{document}

\maketitle

\begin{abstract}
This paper discusses the irreducibility and geometric ergodicity of the Hamiltonian Monte Carlo (HMC) algorithm.
We consider cases where the number of steps of the symplectic integrator is either  fixed or random. Under mild conditions on the potential $\F$ associated with target distribution $\pi$, we first show that the Markov kernel associated to the HMC algorithm is irreducible and recurrent.
Under more stringent conditions, we then establish that the Markov kernel is Harris recurrent. Finally, we provide verifiable  conditions on $\F$  under which the HMC sampler is geometrically ergodic. We compare our assumptions with those recently presented in  \cite{livingstone:betancourt:byrne:girolami:2016} and \cite{bou:sanz:2017}.
\end{abstract}

\section{Introduction }

We consider in this paper the Hamiltonian Monte
Carlo (HMC),  a Metropolis-Hastings
algorithm designed to sample  target probability density
$\pi$ on $\rset^d$. This method was first proposed by \cite{duane:1987}  in computational physics. It has later been introduced in the  statistics community in the early paper of \cite{neal:1993} and quickly  gained popularity; see for example \cite[chapter~9]{liu:2008}, \cite{neal:2011} and \cite{girolami:calderhead:2011}. The most attractive feature of the HMC algorithm is to allow the possibility of generating proposals -  obtained by integrating a system of Hamiltonian equations - that are far away from the current position but still having a high probability of being accepted. The HMC algorithm therefore offer promise for eliminating the random walk behavior of most classical Monte Carlo algorithms. The distance between the current state and the proposal is controlled by length of the time interval along which the Hamiltonian equations are integrated; see \cite[chapter~9]{liu:2008} and \cite{sanz-serna:2014}.

HMC algorithms have achieved many empirical successes. Recently, the theory on HMC have been addressed by many authors 
; see
\cite{byrne:girolami:2013,tang:srivastava:salakhutdinov:2014,
schofield:barker:gelman:cook:briffa:2016,betancourt-bernoulli:2017,livingstone:betancourt:byrne:girolami:2016}. An in depth discussion of the HMC method and a survey of the existing results are given in \cite{bou-rabee:sanz-serna:2018}.

Consider a target probability density  $\pi$ on $\rset^d$ with respect to the Lebesgue measure,  defined for all $\q \in \rset^d$ by
\begin{equation}
\label{eq:def_density_pi}
\pi(\q) = \left. \rme^{-\F(\q)} \middle / \int_{\rset^d}\rme^{-\F(\tilde{\q})} \rmd \tilde{\q} \right.\eqsp,
\end{equation}
where $\F: \rset^d \to
\rset$ is a continuously differentiable function. Note that this representation implies that the density is nonzero everywhere (this can be relaxed; see \cite[Section~5.5.1]{neal:2011}).

The properties of Hamiltonian dynamics have been discussed in numerous
papers. We provide here only a brief outlook mainly aimed at
introducing the notations and the essence of the main ideas. We refer
the interested readers to the monograph \cite{leimkuhler:reich:2004}
and the surveys given in \cite[Chapter~9]{liu:2008}, \cite{neal:2011},
\cite{betancourt-bernoulli:2017} and
\cite{bou-rabee:sanz-serna:2018}. The key idea behind HMC is to
exploit the measure-preserving properties of Hamiltonian flow over an
extended phase space. For simplicity, we restrict our study to the
phase space $\rset^{2d}$. Hamiltonian dynamics describes the evolution
of a physical system which consists in the \emph{position}
$\q \in \rset^d$ and the \emph{momentum} $\p \in \rset^d$.
The total energy of the system is given by the Hamiltonian function $\Ham$ defined for  $(\q,\p) \in \rset^d \times \rset^d$ by
\begin{equation}
  \label{eq:def_ham}
\Ham(\q,\p) = \F(\q) + \norm[2]{\p}/2 \eqsp,
\end{equation}
which is the sum of a potential energy $U : \rset^d \to \rset$, a
function solely of the position, and the kinetic energy
$p \mapsto \norm[2]{p}/2$ (note that other choices of kinetic energy
have proposed recently, see \eg~\cite{livingstone:faulkner:roberts:2017} and \cite{lu:et:al:2016}).
The system then evolves in time $(\q(t),\p(t))_{t \geq 0}$ according to Hamilton's equations on $\rset^d \times \rset^d$,
\begin{equation}
  \label{eq:hamil_ode}
  \begin{aligned}
\frac{\rmd }{\rmd t}
\left[
  \begin{array}{c}
    q(t) \\
    p(t) \\
  \end{array}
\right]
& = J^{-1} \nabla \Ham(\q(t),\p(t)) \eqsp, \\
\quad \text{where} \quad  J&= \left[
                                         \begin{array}{cc}
                                           0_{d \times d} & -\Id_{d \times d} \\
                                           \Id_{d \times d} & 0_{d \times d} \\
                                         \end{array}
                                       \right]
                                       \eqsp,
                                       \quad \nabla \Ham(\q,\p) = \begin{bmatrix} \nabla U(\q) \\ \p \end{bmatrix}           \eqsp.
                                     \end{aligned}
                                   \end{equation}
We denote by $(\varphi_t)_{t \geq 0}$ the differential flow associated to the system \eqref{eq:hamil_ode}. For each $t \in \rset$, $\varphi_t: \rset^{2d} \to \rset^{2d}$ is the map that associates to each $(\p_0,\q_0)$ the value at time $t$ of the (unique) solution of \eqref{eq:hamil_ode} that takes the value $(\p_0,\q_0)$ at time $t=0$. We shall assume hereafter that $\varphi_t(\p_0,\q_0)$ is defined for any $(\p_0,\q_0) \in \rset^{2d}$ and $t \in \rset$.

A mapping $\Phi: \rset^{2d} \to \rset^{2d}$ is said to be \emph{symplectic} if, at each point $(p,q) \in \rset^{2d}$, $\Jac_{\Phi}(p,q)^T J \Jac_{\Phi}(p,q)= J$, where $\Jac_{\Phi}(p,q)$ denotes the $2d \times 2d$ Jacobian of $\Phi$. Note that in particular, 
symplectic transformations are volume preserving on $\rset^{2d}$.
An important property of Hamiltonian systems \eqref{eq:hamil_ode} is that, for each $t \in \rset$, $\varphi_t$ is a symplectic mapping; see \cite[Theorem~2.1]{bou-rabee:sanz-serna:2018}.

Another important property of Hamiltonian flow is the conservation of energy.
Since $J^{-1}$ is skew-symmetric, for any solution $(\q(t),\p(t))$ of \eqref{eq:hamil_ode}
\[
\frac{\rmd }{\rmd t} \Ham(\q(t),\p(t)) =\nabla \Ham(\q(t),\p(t))^T J^{-1} \nabla \Ham(\q(t),\p(t))= 0.
\]
Then, the value of the Hamiltonian function is preserved by the flow of the corresponding Hamiltonian system, $\Ham \circ \varphi_t= \Ham$ for each $t \in \rset$.

Denote by $S$ the momentum flip involution, $S(\q,\p)= (\q,-\p)$, $(\q,\p) \in \rset^{2d}$. A mapping $\Phi: \rset^{2d} \to \rset^{2d}$ is said to be \emph{reversible} with respect to $S$ (or $S$-reversible for short) if $ S \circ \Phi= \Phi^{-1} \circ S$. If the mapping $\Phi$ is differentiable, the $S$-reversibility implies that $|\det \Jac_{\Phi} (S \circ \Phi)|= |\det \Jac_{\Phi}|^{-1}$, where $\Jac_{\Phi}$ is the Jacobian of $\Phi$.
By uniqueness of solutions of
\eqref{eq:hamil_ode}, for all $t \in \rset$, the flow $\varphi_t$ is  a
$S$-reversible mapping. More precisely, if $(\q_0,\p_0)$ is the initial state and
$(\q(t),\p(t))= \varphi_t(\q_0,\p_0)$ is the state of the system after
$t$ units of times, then
\[
\varphi_t(\q(t),-p(t))= \varphi_t( S \circ \varphi_t(\q_0,\p_0)= (\q_0,-\p_0)=S (\q_0,\p_0) \eqsp.
\]

Consider the extended target distribution with density given for any $(\q,\p) \in \rset^{2d}$ by 
\begin{equation}
\label{eq:def_ext_pi}
 \tilde{\pi}(\q,\p) = Z^{-1}
\exp(-\Ham(\q,\p)) \eqsp,   \, \, Z= \int_{\rset^{2d}} \exp(-\Ham(\q,\p)) \rmd \q \rmd \p \eqsp.
\end{equation}
Since the flow $\varphi_t$ preserves the oriented volumes and the Hamiltonian, the probability measure with density $\tilde{\pi}$ is preserved by the flow $\varphi_t$, for each $t \in \rset$ and $A \in \borelSet(\rset^{2d})$ (the Borel sets of $\rset^{2d}$), $\tilde{\pi}(\varphi_t(A))= \tilde{\pi}(A)$ where, with a slight abuse in notations,
$\tilde{\pi}(A)= \int_{\rset^{2d}} \indi{A}{q,\p} \tilde{\pi}(\q,\p) \rmd\q \rmd \p$.

For the Hamiltonian function \eqref{eq:def_ham}, the density $\tilde{\pi}$ may be factorized
\[
\exp(-\Ham(\q,\p)) = \exp(-\F(\q)) \exp(- \norm[2]{\p}/2)
\]
and then, under the distribution $\tilde{\pi}$, the position $\q$ and the momentum $\p$ are independent,
the marginal distribution of the position has a probability density function proportional to the target distribution \eqref{eq:def_density_pi} and the momentum $\p$ is Gaussian with zero mean and identity covariance matrix.

In most cases, it is not possible to compute explicitly the solutions of \eqref{eq:hamil_ode}; discretization must be used instead.
 A crucial point in the construction of HMC sampler is that  symplectiness and $S$-reversibility can be preserved exactly by discretization, provided that we use a \emph{symplectic integrator} like the Störmer-Verlet (referred to as \emph{leap-frog}) integrator. The Hamiltonian is not exactly preserved in the discretization, but it is expected that a sensible integrator conserves this quantity at least "approximately".
 Given a time step $h \in \rset^*_+$ and a number of iterations $T \in \nset^*$, the Störmer-Verlet integrator proceeds as follows: starting from an initial point $(\q_0,\p_0) \in
\rset^d \times \rset^d$, $T$ leap frog steps are performed, where for each $\ell \in \{0,\dots,T-1\}$ the $\ell$-th leap frog step is defined as
\begin{equation}
\label{eq:iteration_verlet}
  \begin{cases}
\p_{\ell+1/2} &= \p_\ell - (h/2) \nabla \F(\q_\ell)\\
\q_{\ell+1} &= \q_\ell +h\p_{\ell+1/2}\\
\p_{\ell+1} &= \p_{\ell+1/2}-(h/2) \nabla \F(\q_{\ell+1}) \eqsp.
  \end{cases}
\end{equation}
The sequence $(\q_\ell,\p_\ell)_{\ell \in \{0,\dots,T\}}$ is an
approximation of the solution of \eqref{eq:hamil_ode} at times $\{\ell h
\, : \, \ell \in \{0,\ldots, T\}\}$ started at $(\q_0,\p_0)$. This sequence defines a discrete dynamical system given for $\ell \in \{0,\ldots,T-1\}$ by
\begin{equation}
  (\q_{\ell+1},\p_{\ell+1}) =  \Psiverlet^{(1)}_{h/2} \circ \Psiverlet^{(2)}_h \circ \Psiverlet^{(1)}_{h/2} (\q_\ell,\p_\ell) = \Phiverlet^{(1)}_h(\q_\ell,\p_\ell) \eqsp,
\end{equation}
where for each $t \in \rset_+$, $\Psiverlet^{(1)}_t, \Psiverlet^{(2)}_t :\rset^{2d} \to \rset^{2d}$ are given for all $(\q,\p) \in \rset^{2d}$ by
\begin{equation}
  \label{eq:def_Psiverlet_0}
\Psiverlet^{(1)}_t(\q,\p) = (\q, \p-t\nabla \F(\q)) \eqsp, \quad\Psiverlet^{(2)}_t(\q,\p) = (\q+t\p, \p)
\end{equation}
These mappings in molecular dynamics are called the \emph{kick} (the system stays in its current configuration and the momentum is incremented by action of the force $\nabla U(\q)$) and the \emph{drift} (the position $q$ advances at constant speed while the momentum $\p$ remains constant). One iteration of the Störmer-Verlet formula comprises two kicks of duration $h/2$ separated by a drift of duration $h$.
Define the sequence of iterates $\{\Phiverlet[h][\ell] : \rset^d \times \rset^d \to \rset^d \times \rset^d \, : \, \ell \in \nset^*\} $ for $\ell \geq 1$ by induction
\begin{equation}
  \label{eq:def_Phiverlet}
\Phiverlet[h][\ell+1] = \Phiverlet[h][\ell] \circ \Phiverlet[h][1] \eqsp, \quad
\end{equation}
Set for all $\ell \geq 1$,  
\begin{equation}
  \label{eq:def_Phiverletq}
  \Phiverletq[h][\ell] = \projq \circ \Phiverlet[h][\ell] \eqsp,
\end{equation}
where $ \projq : \rset^d \times \rset^d \to \rset^d$ is the projection on the first $d$ coordinates, for all $(\q,\p) \in \rset^d \times \rset^d$, $\projq(\q,\p) = \q$.
Thus, with our notation for all $\ell \in \{1,\ldots, T\}$, $ (\q_\ell,\p_\ell) =\Phiverlet[h][\ell](\q_0,\p_0)$ and $ \q_\ell = \Phiverletq[h][\ell](\q_0,\p_0)$

 Because each inner step in the leap-frog step are \emph{shear} transformations of the phase variable (only the position or the momentum are updated by a quantity that depends only on the variable that do not change), it is clear that this transformation is volume preserving (the Jacobian of each individual transformation is equal to $1$). Each inner leap frog step due of its symmetry is also $S$-reversible: starting from $(\q_{\ell+1},-\p_{\ell+1})$, applying the leap-frog step forward and then negating the momentum variable again, we obtain again $(\q_\ell,\p_\ell)$.

We now have all the background required to describe the HMC algorithm,
which is a special instance of Metropolis-Hastings algorithm aimed at sampling the  target distribution  $\pi$.
It is similar to most classical MCMC algorithm in that we propose a new point based on the current position and then either accept or reject it.
Denote by $(\Q_k,\P_k)$ the value of the position and momentum at the $k$-th iteration of the algorithm. Each iteration of the algorithm may be decomposed into two steps, which
are constructed to leave the extended
distribution $\tilde{\pi}$ invariant; see \cite{neal:2011}, \cite{fang:sanz-serna:skeel:2016} and \cite[Theorem 5.2]{bou-rabee:sanz-serna:2018}.
In the first step, we draw $G_{k+1}$ from the $d$-dimensional normal
distribution with zero mean and identity covariance matrix, independent of $\{(\Q_j,\P_j)\}_{j=0}^k$.
In the second step, we set the initial conditions $(\Q_k,G_{k+1})$ and compute the position and the momentum after $T$ leapfrog steps. This move is accepted with probability $\alphaacc\defEns{(\Q_k,G_{k+1}),\Phiverlet[h][T](\Q_k,G_{k+1})}$ where
for all $(\q,\p) \in \rset^d \times \rset^d$, $(\tilde{\q},\tilde{\p}) \in \rset^d \times \rset^d$ by
\begin{equation}
\label{eq:def_acc_ratio}
  \alphaacc\defEns{(\q,\p),(\tilde{\q},\tilde{\p})} = \min \parentheseDeux{1,   \exp\parenthese{\Ham(\q,\p)-\Ham(\tilde{\q},\tilde{\p}})} \eqsp.
\end{equation}
It is easily seen that $\tilde{\pi}$ is invariant with respect to $\tilde{\pi}$ (see \cite{fang:sanz-serna:skeel:2016}).  Since $\tilde{\pi}$ \eqref{eq:def_ext_pi} is invariant with respect to the Markov kernel defined by the HMC algorithm on the extended state
space $\rset^d \times \rset^d$, it naturally implies that $\pi$ is a
stationary distribution for the Markov chain $(\Q_k)_{k \geq 0}$,
which is the process which we are interested in. The number of steps $T$ is either a deterministic quantity or a random variable independent of the current state. If the number of
steps $T=1$, then the algorithm reduces to the Metropolis Adjusted
Langevin Algorithm (MALA).



Despite many recent advances, theoretical properties of the HMC algorithm are still not completely
understood. This paper addresses two important issues in the analysis of HMC algorithm: irreducibility and geometric ergodicity.

Irreducibility  plays
an essential role in the theory of Markov chains. In particular, it implies uniqueness of a
invariant distribution. The classical approach to derive
irreducibility of Hastings-Metropolis algorithms on $\rset^d$, outlined for example in \cite{mengersen:tweedie:1996}
\cite{roberts:tweedie:1996:biometrika}, is to use that the proposal
distribution admits a (sufficiently regular) transition density with respect to the Lebesgue
measure. 
For HMC, this condition does not necessarily hold. HMC has been shown
to be irreducible in \cite{cances:legoll:stoltz} in the case where the
state space is compact and the potential is twice continuously
differentiable. In \cite{livingstone:betancourt:byrne:girolami:2016},
under appropriate conditions, irreducibility is shown for a version
of HMC where the number of leap-frog steps $T$ is random, independent
of the proposal, and such that $T=1$ with positive probability.  Under such assumption, irreducibility of HMC boils down to irreducibility of MALA which has been established in
\cite{roberts:tweedie:1996}.  In this paper, we establish the
irreducibility of the HMC algorithm under a general tail condition of
the target density which significantly relaxes the condition of
\cite{cances:legoll:stoltz} and
\cite{livingstone:betancourt:byrne:girolami:2016}.  This result
follows from a general irreducibility result for iterative Markov
models (derived under conditions which are weaker than the ones
reported in the literature) which we believe to be of independent
interest; see \Cref{sec:irred-class-iter}.  Our main tool to establish
irreducibility is the degree theory for continuous maps \cite{outerelo:ruiz:2009}.

In a second part, we establish the geometric ergodicity of the HMC
sampler under the assumptions that the potential $U$ is homogeneous
outside a ball (or is a perturbation of an homogeneous function) and
that the level sets are convex.  Our assumptions
imply that the proposal kernel of HMC satisfies an ‘inwards
acceptance’ property \cite{roberts:tweedie:1996}, which is essential
to show that HMC (and MALA) is geometrically ergodic.

Our results complement the recent paper \cite{livingstone:betancourt:byrne:girolami:2016}. This paper provides a variety of conditions under which the HMC algorithm is not geometrically ergodic. It establishes the geometric ergodicity under the abstract 'inwards acceptance' property for which we provide verifiable sufficient conditions.

In \cite{bou:sanz:2017}, a variant of HMC, referred to as the
Randomized Hamiltonian Monte Carlo (RHMC), is analyzed. This method is
associated with a conti\-nuous-time Markov process for which
$\tilde{\pi}$ given by \eqref{eq:def_ext_pi} is invariant
\cite[Proposition 3.1]{bou:sanz:2017}.  However, sampling such a
process requires the exact Hamiltonian flow (and hence exact
integration of the Hamilton dynamics \eqref{eq:hamil_ode}). The use of
the Hamiltonian flow allows to by-pass the acceptance-rejection step
and makes the analysis easier.
By-passing the discretization step nevertheless reduces the
applicability of the results, since direct integration of the
Hamiltonian flow is most of the time not an option.
We numerically show on a simple example
that the conditions given by \cite{bou:sanz:2017} which imply
geometric ergodicity of RHMC are not sufficient in the case of HMC.

The paper is organized as follows. In \Cref{sec:ergodicity-hmc}, conditions upon which the
HMC kernel, associated with $(Q_k)_{k \in \nset}$, is irreducible, recurrent and Harris-recurrent are given.
In \Cref{sec:geom-ergod-hmc}, conditions under which the HMC kernel is $V$-uniformly geometrically ergodic  are developed and discussed. Some general irreducibility results which are of independent interest, are stated in \Cref{sec:irred-class-iter}. The proofs are gathered in
\Cref{sec:postponed-proofs}.


\subsection*{Notations}
Denote by $\rset_+$ and $\rset_+^*$, the set of non-negative and
  positive real numbers respectively. Denote by $\operatorname{I}_n$ the identity matrix. Denote by $\norm{\cdot}$ the
Euclidean norm on $\rset^d$.  Denote by $\borelSet(\rset^d)$ the
Borel $\sigma$-field of $\rset^d$, $\functionspace[]{\rset^d}$ the set
of all Borel measurable functions on $\rset^d$ and for $f \in
\functionspace[]{\rset^d}$, $\Vnorm[\infty]{f}= \sup_{x \in \rset^d}
\abs{f(x)}$. Denote by $\Leb$ the Lebesgue-measure on $\rset^d$. For
$\mu$ a probability measure on $(\rset^d, \borelSet(\rset^d))$ and
$f \in \functionspace[]{\rset^d}$ a $\mu$-integrable function, denote
by $\mu(f)$ the integral of $f$ \wrt~$\mu$.  For $f \in
\functionspace[]{\rset^d}$, set $\Vnorm[\infty]{f}= \sup_{x \in
  \rset^d} \abs{f(x)}$.  Let $V: \rset^d \to \coint{1,\infty}$ be a
measurable function. For $f \in \functionspace[]{\rset^d}$, the
$V$-norm of $f$ is given by $\Vnorm[V]{f}= \Vnorm[\infty]{f/V}$. For
two probability measures $\mu$ and $\nu$ on $(\rset^d,
\borelSet(\rset^d))$, the $V$-total variation distance of $\mu$ and
$\nu$ is defined as
\begin{equation}
\Vnorm[V]{\mu-\nu} = \sup_{f \in \functionspace[]{\rset^d}, \Vnorm[V]{f} \leq 1}  \abs{\int_{\rset^d } f(x) \rmd \mu (x) - \int_{\rset^d}  f(x) \rmd \nu (x)} \eqsp
\end{equation}
If $V \equiv 1$, then $\Vnorm[V]{\cdot}$ is the total variation  denoted by $\tvnorm{\cdot}$.
For all $x \in \rset^d$ and $M >0$, we denote by $\boule{x}{M}$, the
ball centered at $x$ of radius $M$.  Let $\matrix$ be a $d \times
m$-matrix, then denote by $\matrix^{\Tr}$ and $\det(\matrix)$ (in the case $m=d$) the
transpose and the determinant of $\matrix$ respectively.  Let $k \geq
1$. Denote by $(\rset^{d})^{\otimes k}$ the $k^{\text{th}}$ tensor
power of $\rset^d$, for all $x \in \rset^d, y \in \rset^{\ell}$, $x
\otimes y \in (\rset^d)^{\otimes 2}$ the tensor product of $x$ and
$y$, and $x^{\otimes k} \in (\rset^{d})^{\otimes k}$ the
$k^{\text{th}}$ tensor power of $x$.  For all $x_1,\ldots,x_k \in
\rset^d$, set $\norm{x_1\otimes \cdots \otimes x_k} = \sup_{i \in
  \{1,\ldots,k\}} \norm{x_i}$. We let $\mathcal{L}((\rset^{d})^{\otimes
  k} , \rset^{\ell})$ stand for the set of linear maps from
$(\rset^{n})^{\otimes k} $ to $\rset^{\ell}$ and for $\linearmap \in
\mathcal{L}((\rset^{d})^{\otimes k} , \rset^{\ell})$, we denote by
$\normop{\linearmap}$ the operator norm of $\linearmap$. Let $f :
\rset^d \to \rset^{\ell}$ be a Lipschitz function, namely there exists
$C \geq 0$ such that for all $x,y \in \rset^d$, $ \norm{f(x) - f(y)}
\leq C \norm{x-y}$. Then we denote $\norm{f}_{\Lip} = \inf \{
\norm{f(x) - f(y)}/\norm{x-y} \ | \ x,y \in \rset^d , x \not = y
\}$.  Let $k \geq 0$ and $\open$ be an open subset of
$\rset^d$. Denote by $C^k(\open,\rset^{\ell})$ the set of all $k$
times continuously differentiable funtions from $\open$ to
$\rset^{\ell}$. Let $\Phi \in C^k(\open, \rset^{\ell})$. Write
$\Jac_{\Phi}$ for the Jacobian matrix of $\Phi \in C^{1}(\rset^d,
\rset^{\ell})$, and $D^k \Phi : \open \to
\mathcal{L}((\rset^{d})^{\otimes k} , \rset^{\ell})$ for the
$k^{\text{th}}$ differential of $\Phi \in C^{k}(\rset^d,
\rset^{\ell})$. For smooth enough functions $f
: \rset^d \to \rset$, denote by $\nabla f$ and $\nabla^2 f$ the gradient
and the Hessian of $f$ respectively. Let $\eventA \subset
\rset^d$. We write  $\clos{\eventA},\interior{\eventA}$ and
$\boundary{\eventA}$ for the closure, the interior and the boundary of
$\eventA$, respectively. For any $n_1,n_2 \in \nset$, $n_1 > n_2$, we take the convention that $\sum_{k=n_2}^{n_1} = 0$. 

\section{Ergodicity of the HMC algorithm}
\label{sec:ergodicity-hmc}

For $h >0$ and $T \in \nset^*$, consider the Markov kernel $\Pkerhmc[h][T]$ associated with the Markov chain of the HMC algorithm $(Q_k)_{k \in \nset}$, given for all $\q \in \rset^d$ and $\eventA \in \borelSet(\rset^d)$ by
\begin{align}
  \Pkerhmc[h][T](\q, \eventA) &= \int_{\rset^d} \indi{\eventA}{\Phiverletq[h][T](\q,\tilde{\p})} \ \alphaacc\defEns{(\q,\tilde{\p}),\Phiverlet[h][T](\q,\tilde{\p})}\frac{\rme^{-\norm[2]{\tilde{\p}}/2}}{ (2 \uppi)^{d/2}}  \rmd \tilde{\p}
                                  \nonumber
  \\
&\qquad + \updelta_{\q}(\eventA)  \,   \int_{\rset^d}  \parentheseDeux{1-\alphaacc\defEns{(\q,\tilde{\p}),\Phiverlet[h][T](\q,\tilde{\p})}} \frac{\rme^{-\norm[2]{\tilde{\p}}/2}}{ (2 \uppi)^{d/2}}  \rmd \tilde{\p} \eqsp,
 \label{eq:def_kernel_hmc}
\end{align}
where $\Phiverletq[h][T]$, $\Phiverlet[h][T]$ and $\alphaacc$ are defined by \eqref{eq:def_Phiverlet}-\eqref{eq:def_Phiverletq} and \eqref{eq:def_acc_ratio} respectively.
In this Section, we establish conditions upon which the Markov kernel $ \Pkerhmc[h][T]$ is irreducible or
(Harris) recurrent. 
For
$\expozero \in \ccint{0,1}$, we consider the following assumption on
the potential $\F$.

\begin{assumption}[$\expozero$]
  \label{assum:regOne}
  $\F$ is continuously differentiable and
  \begin{enumerate}[label=(\roman*)]
  \item
  \label{assum:regOne_a}
 there exists $\constzero > 0$  such that for all $\q,x \in \rset^d$,
\begin{equation}
\norm{\nabla \F(\q) - \nabla \F(x)} \leq \constzero\norm{\q-x} \eqsp.
  \end{equation}
\item    \label{assum:regOne_b}
there exists $\constzeroT \geq 0$  such that for all $\q \in \rset^d$,
\begin{equation}
  \norm{\nabla \F(\q)} \leq \constzeroT\defEns{ 1 + \norm{\q}^{\expozero}} \eqsp.
\end{equation}
  \end{enumerate}
\end{assumption}

Before going further, we need to briefly recall some definitions pertaining to Markov chains.
Let $\Pker$ be a Markov kernel on $(\rset^d,\borelSet(\rset^d))$. Let $n$ be an integer and $\mu$
be a nontrivial measure on $\borelSet(\rset^d)$. A
set $\Csf \in \borelSet(\rset^d)$ is called a $(n,\mu)$-small set for $\Pker$ if
for all $x \in \Csf$ and $\Asf \in \borelSet(\rset^d)$, $\Pker^n(x, \msa) \geq \mu(\msa)$.
A set $\Asf \in \borelSet(\rset^d)$ is said to be accessible for $\Pker$
  if for all $x \in \rset^d$, $\sum_{i=1}^\infty \Pker^i(x,\Asf) > 0$.
  A non-trivial $\sigma$-finite measure $\mu$ is an irreducibility
  measure of $\Pker$ \iff\ any set $\Asf \in \borelSet(\rset^d)$
  satisfying $\mu(\Asf) >0$ is accessible.  The Markov kernel $\Pker$ is said to be
  irreducible if it admits an accessible small set or equivalently an
  irreducibility measure (in \cite{meyn:tweedie:2009}, our notion of irreducibility  is referred to as $\phi$-irreducibility, where $\phi$ is an irreducibility measure; here irreducibility therefore means $\phi$-irreducibility). $\Pker$ is said to be a \Tkernel~is there exists a kernel $\Tker$ on $\rset^d \times \mcb(\rset^d)$ and a sequence of non-negative numbers $(a_i)_{i \in \nsets}$ satisfying $\sum_{i=1}^{\plusinfty} a_i =1$, such that
  \begin{enumerate*}[label=(\roman*)]
  \item for any $x \in \rset^d$, $\Tker(x, \rset^d) >0$;
  \item for any $\msa \in \mcb(\rset^d)$, $x \mapsto \Tker(x,\msa)$ is lower semi-continuous;
\item for any $x \in \rset^d$, $\msa \in \mcb(\rset^d)$, $\sum_{i=1}^{\plusinfty} a_i \Pker^i(x,\msa) \geq \Tker(x,\msa)$.
  \end{enumerate*}
  $\Tker$ is referred to as a continuous component of $\Pker$.

  Let $(X_n)_{n \geq 0}$ be the canonical chain associated with $\Pker$
  defined on the canonical space $(\Omega,\mathcal{F},(\mathbb{P}_x, x \in \rset^d))$. A
  set $\Asf \in \borelSet(\rset^d)$ is said to be recurrent if for all $x \in \msa$, $\PE_x[N_\msa]= \plusinfty$ where $N_\msa = \sum_{i=0}^{\plusinfty} \1_{\msa}(X_i)$ is the number of visits to $\msa$. The set $\msa$ is Harris recurrent  if for any $x \in \msa$, $\mathbb{P}_x(N_\msa = \plusinfty) = 1$. The Markov kernel $\Pker$ is said to be Harris
  recurrent if all accessible sets are Harris recurrent. In this case, for all $x \in \rset^d$, and all accessible sets $\msa$, $\PP_x(N_\msa = \plusinfty)=1$.

  Define $\vartheta_1 : \rset_+ \to \rset_+$, for any $s \in \rset_+$ by
  \begin{equation}
    \label{eq:def_vartheta_1}
    \vartheta_1(s) = 1+  s/2 + s^2/4\eqsp.
  \end{equation}
\begin{theorem}
  \label{theo:irred_harris}
Assume \Cref{assum:regOne}($\expozero$) for some $\expozero \in \ccint{0,1}$ and that $\F$ is twice continuously
differentiable. Then, for all $T \in \nsets$, and  $h > 0$ satisfying
\begin{equation}
\label{eq:condition-h,T-harris}
 \left[ \{1 + h\constzero^{1/2} \vartheta_1(h\constzero^{1/2}) \}^T - 1 \right] < 1 \eqsp,
\end{equation}
and $q \in \rset^d$, there exists a $C^1(\rset^d,\rset^d)$-diffeomorphism $\tilde{q} \mapsto \Phiverletqi[h][T](q,\tilde{q})$ such that for any $p \in \rset^d$,
\begin{equation}
\label{theo:irred_harris_a}
\text{if $q_T =   \Phiverletq[h][T](q,p)$, defined by \eqref{eq:def_Phiverletq}, then $p = \Phiverletqi[h][T](q,q_T)$} \eqsp.
\end{equation}
Moreover,
\begin{enumerate}[label=(\roman*), wide, labelwidth=!, labelindent=0pt]
\item   \label{theo:irred_harris_b}
The Markov kernel $\Pkerhmc[h][T]$, is a \Tkernel; more precisely, for any $\eventB \in \mcb(\rset^d)$,
\begin{align}
\label{eq:def_kernel_hmc_false_density}
&\Pkerhmc[h][T](q, \eventB) =  \Tker_{h,T}(q,\eventB) \\
&\qquad + \updelta_{q}(\eventB)(2 \uppi)^{-d/2} \int_{\rset^d}  \parentheseDeux{1-\alphaacc\defEns{(q,\tilde{p}),\Phiverlet[h][T](q,\tilde{p})}} \rme^{-\norm{\tilde{p}}^2/2} \rmd \tilde{p} \eqsp,
\end{align}
where the kernel $ \Tker_{h,T}$ is a continuous component of $\Pkerhmc[h][T]$  and is given by
\begin{equation}
  \label{eq:def_tker}
\Tker_{h,T}(q,\eventB)
  =   (2 \uppi)^{-d/2} \int_{\eventB}    \balphaacc(q,\bar{q})\rme^{-\norm{\Phiverletqi[h][T](q,\bar{q})}^2/2} \detj_{\Phiverletqi[h][T](q,\cdot)}(\bar{q})  \rmd \bar{q} \eqsp,
\end{equation}
setting for $q,\tilde{q} \in \rset^d$, $\balphaacc(q,\bar{q}) =  \alphaacc\defEns{(q,\Phiverletqi[h][T](q,\bar{q})),\Phiverlet[h][T](q,\Phiverletqi[h][T](q,\bar{q}))}$ and  $\detj_{\Phiverletqi[h][T](q,\cdot)}(\tilde{q}) = \absLigne{\det(\Jac_{\Phiverletqi[h][T](q,\cdot)}(\tilde{q}))}$.
\item \label{theo:irred_harris_c} The Markov kernel $\Pkerhmc[h][T]$ is irreducible and the Lebesgue measure is an irreducibility measure. Moreover,  $\Pkerhmc[h][T]$ is aperiodic, Harris recurrent and all the compact sets are $1$-small. Therefore, for all $\q \in \rset^d$,
\begin{equation}
\label{eq:harris-theorem}
\lim_{n \to \plusinfty}    \tvnorm{\delta_\q \Pkerhmc[h][T]^n - \pi} = 0 \eqsp.
\end{equation}
\end{enumerate}
\end{theorem}

\begin{proof}
The proof is postponed to \Cref{sec:proof-crefth-harris_0}.
\end{proof}
For all $h > 0$ and $T \in \nsets$, we have
$\{1 + h \constzero^{1/2} \vartheta_1(h \constzero^{1/2} ) \}^T -1 \leq \rme^{h \constzero^{1/2} T \vartheta_1(h \constzero^{1/2} T)} -1$.
using that  $\vartheta_1$ is nondecreasing.
Then, setting $\bar{S} = c \constzero^{-1/2}$ where $c$ is   the unique positive root  of the equation
$c \vartheta_1(c)  = \log(2)$,  all $T \in \nsets$ and $h  \in \ooint{0,\bar{S}/T}$ satisfy \eqref{eq:condition-h,T-harris}\footnote{Note that conversely, if $h >0$ and $T \in \nsets$ satisfies \eqref{eq:condition-h,T-harris}, necessarily $h \in \oointLigne{0,\constzero^{-1/2}}$ because for any $s > 0$, $\vartheta_1(s) \geq 1$. In addition, since $\rme^{\log(2) s} \leq (1+s)$ for all $s \in \oointLigne{0,1}$, $T$ and $h$ satisfy $hT \leq \tilde{S}= \constzero^{-1/2}$.
}.



In our next result, we relax the second order differentiability
condition on $\F$, and in the case $\beta <1$ we even allow for
arbitrary large values of the step size $h$ and the number of iterations $T$.
The result is less quantitative and the proof
is more involved: we use degree theory for continuous
mapping (the main notions  required in the proof are recalled in  \Cref{sec:defin-usef-results}).
\begin{theorem}\label{theo:irred_D}
Let $h > 0$ and $T \in \nsets$ and assume either
\begin{enumerate}[label=(\alph*)]
\item
\label{theo:irred_D_a}
\Cref{assum:regOne} $(\expozero)$ for some $\expozero \in \coint{0,1}$,
\item
\label{theo:irred_D_b}
\Cref{assum:regOne} $(1)$ and that  $T \in \nsets$ and $h > 0$ satisfy \eqref{eq:condition-h,T-harris}.
\end{enumerate}
Then,
\begin{enumerate}[label=(\roman*)]
\item the HMC kernel $\Pkerhmc[h][T]$ defined by \eqref{eq:def_kernel_hmc} is irreducible, aperiodic, the Lebesgue measure is an
  irreducibility measure and any compact set of $\rset^d$ is  small.
\item $\Pkerhmc[h][T]$ is recurrent and for $\pi$-almost every $q \in \rset^d$,
$\lim_{n \to \plusinfty}    \tvnorm{\delta_q \Pkerhmc[h][T]^n - \pi} = 0$.
\end{enumerate}
\end{theorem}

\begin{proof}
The proof is postponed to \Cref{sec:proof-crefth_irred_D}.
\end{proof}

To the best of the author's knowledge, the first results regarding the
irreducibility of the HMC algorithm are established in
\cite{cances:legoll:stoltz} under the assumption that $\U$ and
$\norm{\nabla \U}$ are bounded above. Note that these assumptions are
in general satisfied only for compact state space. Irreducibility has
also been tackled in
\cite{livingstone:betancourt:byrne:girolami:2016}: in this work
however, the number of leapfrog steps $T$ is assumed to be random and
independent of the current position and momentum. Under this setting
and additional conditions which in particular imply that the number of
leapfrog steps $T$ is equal to $1$ with positive probability,
\cite{livingstone:betancourt:byrne:girolami:2016} shows that the
kernel associated with the HMC algorithm is irreducible. Under this condition,
the proof is  a direct
consequence of the irreducibility of the MALA algorithm - a mixture of
Markov kernels is irreducible as soon as one component of the mixture
is irreducible; the irreducibility of MALA kernel has been established
in \cite{roberts:tweedie:1996}). Finally, \cite[Proposition
3.7]{bou:sanz:2017} shows that RHMC is irreducible under the condition
that $U$ is at least quadratic.  Note that \Cref{theo:irred_D}
establishes irreducibility of HMC of sub-quadratic potential. However,
leap-frog integrator is not numerically stable for lighter than
Gaussian target density, therefore other kind of integrators should be
used instead, see \eg~\cite[Chapter VI]{hairer:wanner:lubish:2002}. 

Note that if $\expozero < 1$, then there is no condition in
\Cref{theo:irred_D} on the step-size for HMC to be ergodic. This
conclusion may at first glance be surprising since if $\pi$ is a
$d$-dimensional Gaussian distribution with covariance matrix $\Sigma$,
then the step-size $h$ has to be chosen smaller than
$2/\sqrt{\lambda_{\mathrm{max}}}$, where $\lambda_{\mathrm{max}}$ is
the largest eigenvalues of $\Sigma$, which is also the Lipschitz
constant of the gradient of the associated potential. If a larger
step-size $h$ is used, the leapfrog integrator is unstable, see
\eg~\cite[Example 3.4, Proposition 3.1]{bou-rabee:sanz-serna:2018},
meaning that the iterates of the algorithm diverge. But the Gaussian
distribution satisfies \Cref{assum:regOne}$(\expozero)$ for
$\expozero=1$ strictly.
We illustrate on a numerical example that under \Cref{assum:regOne}$(\expozero)$, for $\expozero <1$,
  the \textit{unadjusted} HMC proposal is in fact numerically stable
  and the HMC algorithm does converge for a step-size $h > 2 /
  \sqrt{\constzero}$, where $\constzero$ is the Lipschitz constant of $\nabla U$.
  In this example, we consider the potential $U : \rset \to \rset$ given
  for all $x \in \rset$ by $U(x) = 2 \defEnsLigne{1+\abs{x}^2}^{3/4}$. Then
  $U'(x) = 3(\abs{x}^2 +1)^{-1/4}x$ and is Lipschitz with constant
  $\constzero = 3$. We then run the unadjusted/adjusted HMC algorithm for a
  step-size $h =1.5 > 2/\sqrt{\constzero} \approx 1.15$ and a number of
  leapfrog-step $T =2$.  We can observe in
  \Cref{fig:experiments_convergence} the convergence of the HMC
  algorithm for the test function $f : q \mapsto \abs{q}^2$.
  \Cref{fig:experiments_stability} illustrates that the
  adjusted/unadjusted HMC are numerically stable even if $h =1.5 > 2/\sqrt{\constzero} \approx 1.15$, since the gradient
  is sub-linear.

\begin{figure}[h]
	\begin{center}		
		\includegraphics[width=12.2cm]{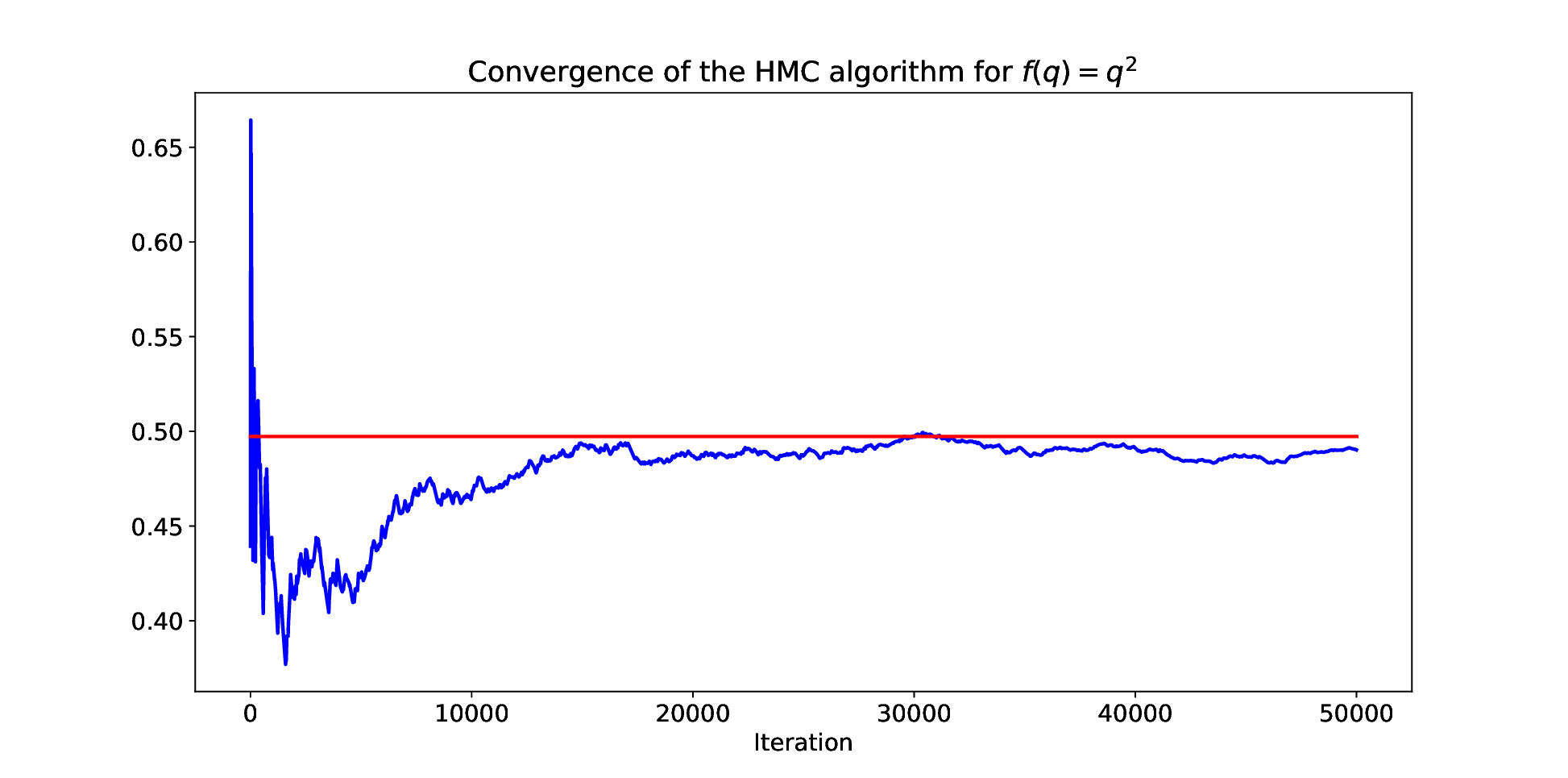}
\end{center}
	\caption{Convergence of the HMC algorithm for $U(x) = 2
  \defEnsLigne{1+\abs{x}^2}^{3/4}$, $h =1.5 > 2/\sqrt{\constzero}$ and $T=2$. The test function is $f : q \mapsto \abs{q}^2$. The red line indicates the real value of $\int_{\rset} f(q) \rmd \pi(q)$ estimated by numerical integration}
	\label{fig:experiments_convergence}
\end{figure}

\begin{figure}[h]
	\begin{center}
	\begin{tabular}{p{0.1cm}cp{0.1cm}c}
&		\includegraphics[width=5.8cm]{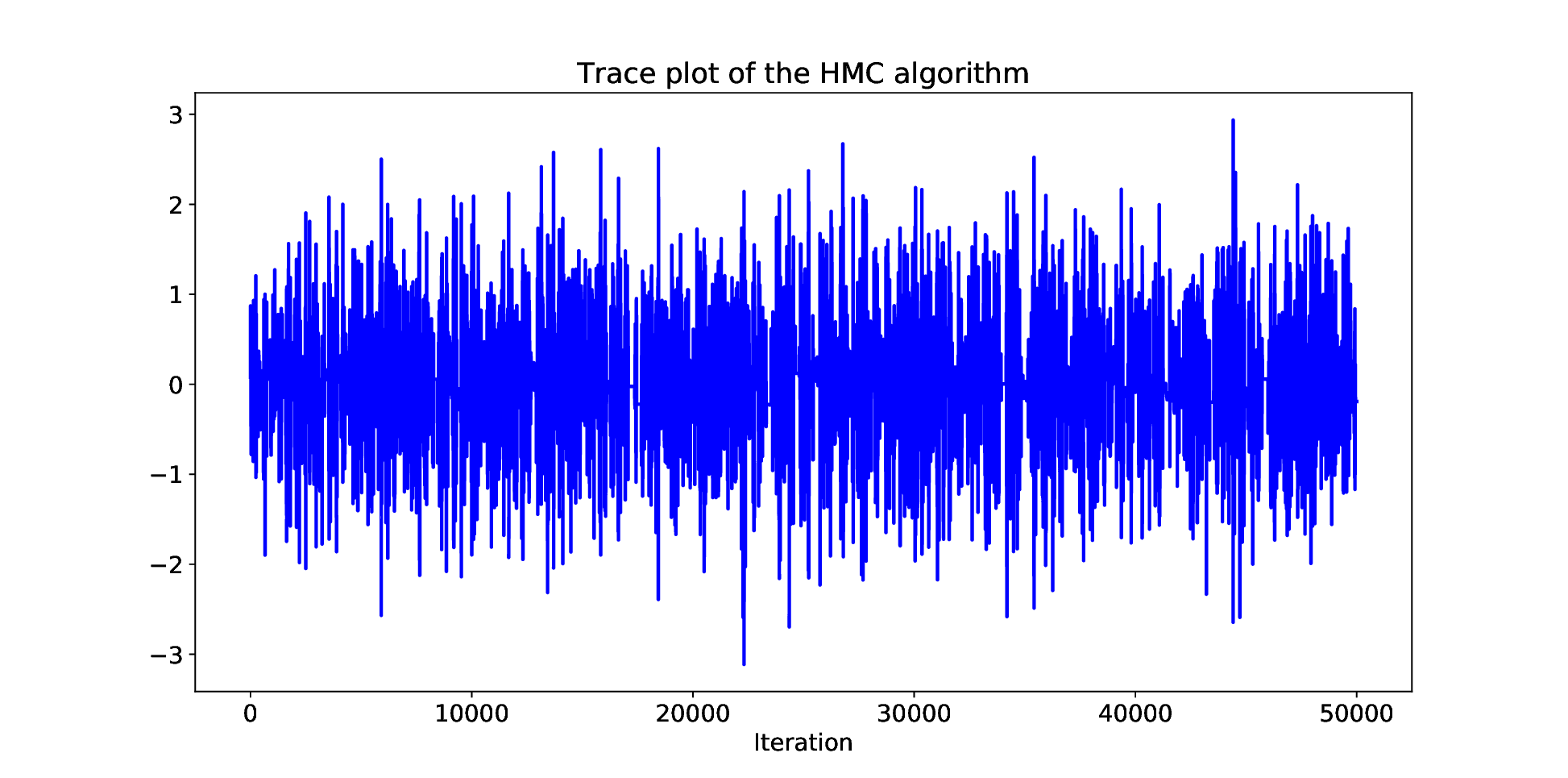}
		& &
		\includegraphics[width=5.8cm]{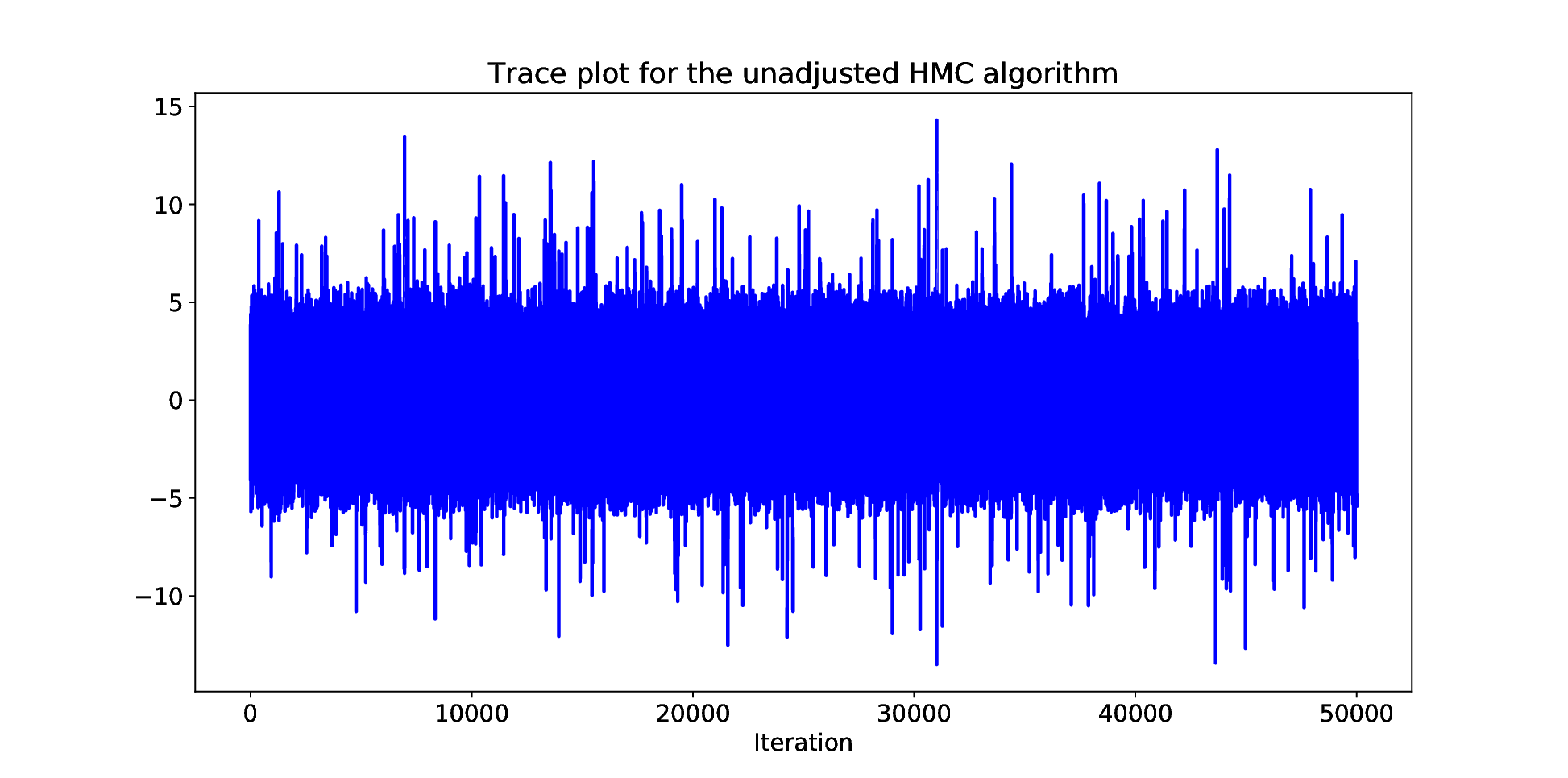}\\
& (a) & & (b)
	\end{tabular}
\end{center}
	\caption{Trace plots for the adjusted (a) / unadjusted (b) HMC algorithm for $U(x) = 2
  \defEnsLigne{1+\abs{x}^2}^{3/4}$, $h =1.5 > 2/\sqrt{\constzero}$ and $T=2$.}
	\label{fig:experiments_stability}
\end{figure}

Finally, note that our results can be easily extended to the case
where the number of steps is random. We briefly describe the main arguments to obtain such
extension.
Let $(\varpi_i)_{i\in \nset^*}$ be a probability distribution on $\nset^*$ and $(h_i)_{i \in \nset^*}$ be a sequence of  positive real
numbers.  Define the randomized Hamiltonian kernel
$\randomkerhmc_{\mathbf{h},\bfvarpi}$ on $(\rset^d, \borelSet(\rset^d))$ associated with $(\varpi_i)_{i\in \nset^*}$ and
$(h_i)_{i \in \nset^*}$ by
\begin{equation}
\label{eq:randomhmc}
\randomkerhmc_{\mathbf{h},\bfvarpi} = \sum_{i\in \nset^*} \varpi_i \Pkerhmc[h_i][i] \eqsp.
\end{equation}
We denote by $\supp(\bfvarpi)= \set{i \in \nset^*}{\omega_i \ne 0}$ the support of the distribution $\bfvarpi$.
\begin{corollary}
  \label{coro:ergod-hmc-algor}
Let $\expozero \in \ccint{0,1}$ and assume \Cref{assum:regOne}($\expozero$).
Let $(\varpi_i)_{i\in \nset^*}$ be a probability distribution on $\nset^*$, $(h_i)_{i \in \nset^*}$ be a sequence of  positive real
numbers, and $\randomkerhmc_{\mathbf{h},\bfvarpi}$ be the randomized Hamiltonian kernel associated with $(\varpi_i)_{i\in \nset^*}$ and
$(h_i)_{i \in \nset^*}$.
\begin{enumerate}[label=(\alph*)]
\item
\label{coro:ergod-hmc-algor_a}
Assume that $\F$ is twice continuously and there exists $i \in \nset^*$ such that   $ [ \{1 + h_i\constzero^{1/2} \vartheta_1( h_i \constzero^{1/2}) \}^i - 1 ] < 1$ and  $\varpi_i > 0$   where $\vartheta_1$ is given by  \eqref{eq:def_vartheta_1}. Then the conclusions  of \Cref{theo:irred_harris}-\ref{theo:irred_harris_c} hold for $\randomkerhmc_{\mathbf{h},\bfvarpi}$.
\item
\label{coro:ergod-hmc-algor_b}
 If $\expozero \in \coint{0,1}$, then the conclusions of \Cref{theo:irred_D}-\ref{theo:irred_D_a} hold for $\randomkerhmc_{\mathbf{h},\bfvarpi}$.
\item
\label{coro:ergod-hmc-algor_c}
 If $\expozero = 1$ and there exists $i \in \supp(\bfvarpi)$ such that  $ [ \{1 + h_i \constzero^{1/2} \vartheta_1(h_i \constzero^{1/2}) \}^i - 1 ] < 1$, then the conclusions of \Cref{theo:irred_D}-\ref{theo:irred_D_b} hold for $\randomkerhmc_{\mathbf{h},\bfvarpi}$.
\end{enumerate}
\end{corollary}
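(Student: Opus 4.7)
The three parts share a common skeleton. In each case the hypothesis supplies an index $i_0 \in \nset^*$ with $a_{i_0} > 0$ such that the fixed-step kernel $\Pkerhmc[h_{i_0}][i_0]$ already satisfies the conclusions of the corresponding ``parent'' theorem: \Cref{theo:irred_harris} in \ref{coro:ergod-hmc-algor_a}, \Cref{theo:irred_D}\ref{theo:irred_D_a} in \ref{coro:ergod-hmc-algor_b} (any $i_0$ with $a_{i_0}>0$ works since that theorem holds for arbitrary $h>0$, and such an index exists because $\sum_i a_i = 1$), and \Cref{theo:irred_D}\ref{theo:irred_D_b} in \ref{coro:ergod-hmc-algor_c}.

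The structural observation that drives everything is the pointwise inequality
\begin{equation*}
\randomkerhmc_{\mathbf{h},\mathbf{a}}^{n}(\q,\eventA) \geq a_{i_0}^{n}\, \Pkerhmc[h_{i_0}][i_0]^{n}(\q,\eventA) \eqsp, \qquad \q \in \rset^d,\ \eventA \in \borelSet(\rset^d),\ n \geq 1 \eqsp,
\end{equation*}
obtained by expanding $\randomkerhmc_{\mathbf{h},\mathbf{a}}^{n} = \bigl(\sum_i a_i \Pkerhmc[h_i][i]\bigr)^{n}$ into a sum of compositions $a_{i_1}\cdots a_{i_n}\Pkerhmc[h_{i_1}][i_1]\cdots\Pkerhmc[h_{i_n}][i_n]$ and keeping only the single summand with $i_1=\cdots=i_n=i_0$, the rest being products of non-negative sub-Markov kernels. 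From this single inequality, the following properties transfer immediately from $\Pkerhmc[h_{i_0}][i_0]$ to $\randomkerhmc_{\mathbf{h},\mathbf{a}}$: the Lebesgue measure is an irreducibility measure, every $(m,\mu)$-small set of $\Pkerhmc[h_{i_0}][i_0]$ is $(m,a_{i_0}^{m}\mu)$-small for the mixture (so compact sets remain small in \ref{coro:ergod-hmc-algor_b}--\ref{coro:ergod-hmc-algor_c}), and aperiodicity is preserved. Invariance of $\pi$ under $\randomkerhmc_{\mathbf{h},\mathbf{a}}$ is immediate from linearity and the fact that $\pi\Pkerhmc[h_i][i]=\pi$ for every $i$.

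For parts~\ref{coro:ergod-hmc-algor_b} and~\ref{coro:ergod-hmc-algor_c} this is already enough: a $\phi$-irreducible aperiodic kernel with an invariant probability measure and accessible small sets is recurrent and satisfies the $\pi$-a.e.\ total-variation convergence asserted by \Cref{theo:irred_D}, by the standard aperiodic ergodic theorem for positive chains (see, \eg, \cite[Thm.~13.0.1]{meyn:tweedie:2009}).

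The main technical point is the upgrade from recurrence to \emph{Harris} recurrence needed for part~\ref{coro:ergod-hmc-algor_a}. I would handle it through the randomized representation of the chain: sample i.i.d.\ indices $I_n$ with $\mathbb{P}(I_n = i) = a_i$ and, conditionally on $I_n = i$, perform one $\Pkerhmc[h_i][i]$-step. Since $\mathbb{P}(I_n = i_0) = a_{i_0} > 0$, the strong law forces $I_n = i_0$ infinitely often almost surely. Fix an accessible $(1,\mu)$-small set $\Csf$ of $\Pkerhmc[h_{i_0}][i_0]$, which (by the inequality above) is also $(1,a_{i_0}\mu)$-small for the mixture. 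At each time $n$ at which $I_{n+1} = i_0$, the conditional probability of entering $\Csf$ at the next step is bounded below by $\mu(\Csf) > 0$ uniformly in the past; a conditional Borel--Cantelli argument therefore yields $\mathbb{P}_\q(N_\Csf = \infty) = 1$ for every $\q \in \rset^d$. The standard equivalence ``one accessible small set is Harris $\Longleftrightarrow$ all accessible sets are Harris'' (\cite[Chap.~9]{meyn:tweedie:2009}) then promotes this to Harris recurrence of $\randomkerhmc_{\mathbf{h},\mathbf{a}}$, whence the total-variation convergence from every starting point follows from the aperiodic Harris ergodic theorem.
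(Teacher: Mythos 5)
Your handling of invariance, irreducibility, small sets, aperiodicity, and of parts \ref{coro:ergod-hmc-algor_b} and \ref{coro:ergod-hmc-algor_c} via the bound $\randomkerhmc_{\mathbf{h},\mathbf{a}}^{n}\geq a_{i_0}^{n}\Pkerhmc[h_{i_0}][i_0]^{n}$ is fine and matches what the paper means by ``straightforward applications of \Cref{theo:irred_D}''. The gap is in the Harris-recurrence step of part \ref{coro:ergod-hmc-algor_a}. You assert that at every time at which the index $i_0$ is drawn, ``the conditional probability of entering $\Csf$ at the next step is bounded below by $\mu(\Csf)>0$ uniformly in the past''. A $(1,\mu)$-small set does not give this: the minorization $\Pkerhmc[h_{i_0}][i_0](x,\cdot)\geq\mu(\cdot)$ holds only for $x\in\Csf$, not from an arbitrary current state. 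From a state $q$ with $\norm{q}$ large there is no uniform lower bound on $\Pkerhmc[h_{i_0}][i_0](q,\Csf)$: by \eqref{eq:qk} the proposal lands in a fixed compact set only if the Gaussian momentum has norm of order $\norm{q}$, an event whose probability tends to $0$ as $\norm{q}\to\plusinfty$. Hence the conditional Borel--Cantelli argument does not yield $\PP_q(N_{\Csf}=\infty)=1$ for every $q$; what it would need --- return to $\Csf$ from \emph{every} starting point with probability one --- is precisely the content of Harris recurrence, i.e.\ the gap between recurrence and Harris recurrence that the corollary must bridge. Irreducibility, an invariant probability and accessible small sets alone cannot close it, since recurrent chains can fail to be Harris from an exceptional null set of starting points.

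The paper closes this gap by a different device, \Cref{propo:harris_rec}, a harmonic-function argument in the spirit of Tierney's Corollary~2. Writing $\randomkerhmc_{\mathbf{h},\mathbf{a}}$ in the mixture Metropolis--Hastings form \eqref{eq:mixture_kernel_Harris}--\eqref{eq:form_MH_gene}, one uses that any bounded harmonic function $\harmonic$ equals $\pi(\harmonic)$ $\pi$-a.e.\ (hence Lebesgue-a.e., by mutual absolute continuity of $\pi$ and $\Leb$); the acceptance/rejection structure then propagates this a.e.\ identity to every point, giving $\harmonic(x)=\pi(\harmonic)$ for all $x$, provided some component with $a_{i_0}>0$ has rejection probability $r_{i_0}(x)<1$ at every $x$. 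This last condition holds for $h_{i_0}<\hirr_{i_0}$ because the proof of \Cref{theo:irred_harris} shows that the corresponding proposal has an everywhere positive density, see \eqref{eq:minoration_pseudo_density_P}. Harris recurrence of the mixture, and thus total-variation convergence from every starting point, then follows. To repair your proof you should either invoke \Cref{propo:harris_rec} in this way, or replace the false uniform minorization by an argument guaranteeing return to $\Csf$ from everywhere (e.g.\ a drift condition), neither of which appears in your proposal.
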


\begin{proof}
  \ref{coro:ergod-hmc-algor_a} follows from \Cref{theo:irred_harris} and \Cref{propo:harris_rec}. \ref{coro:ergod-hmc-algor_b} and \ref{coro:ergod-hmc-algor_c} are straightforward applications of \Cref{theo:irred_D}.
\end{proof}






\section{Geometric ergodicity of HMC}
\label{sec:geom-ergod-hmc}

In this section, we give conditions on the potential $\F$ which imply that
the HMC kernel \eqref{eq:def_Phiverletq} converges geometrically fast to its invariant distribution.
Let $V: \rset^d \to \coint{1,\plusinfty}$ be a measurable function and $\Pker$
be a Markov kernel on $(\rset^d,\borelSet(\rset^d))$. The Markov kernel $\Pker$ is said to
be $V$-uniformly geometrically ergodic if $\Pker$ admits an invariant probability $\pi$
and there exists $\rho \in \coint{0,1}$
and $\varsigma \geq 0$ such that for all $\q \in \rset^d$ and $k \in \nset^*$,
\begin{equation}
  \Vnorm[V]{\Pker^k(\q,\cdot)-\pi} \leq \varsigma \rho^{k} V(\q) \eqsp.
\end{equation}
By \cite[Theorem 16.0.1]{meyn:tweedie:2009}, if $\Pker$ is aperiodic, irreducible and satisfies a Foster-Lyapunov drift condition, \ie~there exists a small set $\Csf$ for $\Pker$, $\lambda \in \coint{0,1}$ and $b < \plusinfty$ such that for all $\q \in \rset^d$,
\begin{equation}
\label{eq:foster-lyapunov}
\Pker V  \leq \lambda V + b \1_{\Csf} \eqsp,
\end{equation}
then $\Pker$ is $V$-uniformly geometrically ergodic. If a function $V : \rset^d \to \coint{1,\infty}$
satisfies \eqref{eq:foster-lyapunov}, then $V$ is said to be a Foster-Lyapunov function for $\Pker$.
We first give an elementary condition to establish the $V$-uniform geometric
ergodicity for a class of generalized Metropolis-Hastings  kernels which includes HMC kernels as a particular example.

Let $\Kker$ be a proposal kernel on $(\rset^d, \borelSet(\rset^{2d}))$ and $\alphagen : \rset^{3 d } \to
\ccint{0,1}$ be an acceptance probability, assumed to be Borel measurable. Consider the Markov kernel $\Pker$ on $(\rset^d,\borelSet(\rset^d))$ defined for all $\q \in \rset^d$ and
$\eventA \in \borelSet(\rset^d)$ by
\begin{equation}
\label{eq:def_kenel_MH}
  \Pker(\q,\eventA) = \int_{\rset^{2d}} \1_{\eventA}(\projq(z)) \alphagen(\q,z) \Kker(\q, \rmd z )
+ \updelta_{\q}(\eventA) \int_{\rset^{2d}} \defEns{1- \alphagen(\q,z) }\Kker(\q, \rmd z)  \eqsp,
\end{equation}
where $\projq : \rset^{d} \times \rset^d \to \rset^d$ is the canonical projection onto the first
$d$ components.
For $h \in \rset^*_+$ and $T \in \nset^*$, $\Pkerhmc[h][T]$ corresponds to $\Pker$ with
$\Kker$ and $\alphagen$  given for all $\q,\p,\x \in \rset^d$ and $\Bsf \in
\borelSet(\rset^{2d})$ respectively by
\begin{align}
\label{eq:def_Pker_proposition_double}
  \PkerhmcD[h][T](\q,\Bsf) &= (2\uppi)^{-d/2}\int_{\rset^{d}} \1_{\Bsf}\parenthese{\Phiverletq[h][T](\q,\tilde{\p}),\tilde{\p}} \rme^{-\norm{\tilde{\p}}^2/2} \rmd \tilde{\p} \eqsp, \\
\label{eq:def_alpha_acc_tilde_hmc}
\tildeAlphaacc(\q,(\tilde{\q},\tilde{\p})) & =
\begin{cases}
\alphaacc\defEns{(\q,\tilde{p}),\Phiverlet[h][T](\q,\tilde{p})} \eqsp, & \text{if}\, \tilde{q}= \Phiverletq[h][T](\q,\tilde{\p}) \eqsp, \\
0 & \text{otherwise} \eqsp,
\end{cases}
\end{align}
where  $\Phiverlet[h][T]$, $\Phiverletq[h][T]$ and $\alphaacc$  are  defined in   \eqref{eq:def_Phiverlet}, \eqref{eq:def_Phiverletq} and \eqref{eq:def_acc_ratio}, respectively. Let $\Vgeo : \rset^d \to \coint{1,\plusinfty}$ be a
\emph{norm-like}  function,  \ie\ a measurable function such that for all $M \in \rset_+$, the level sets $\set{\q \in \rset^d}{\Vgeo(\q) \leq M}$ are compact. Note that if $\Vgeo$ is norm-like, for any $M \in \rset_+$, $\set{\q \in \rset^d}{\Vgeo(\q) \leq M}^{\complementary}$ is non-empty.   The function $\Vgeo$ naturally extends on $\rset^{2d}$ by
setting for all $(\q,\p) \in \rset^{2d}$, $\Vgeo(\q,\p) = \Vgeo(\q)$.
For all $\q \in \rset^d$, define:
\begin{equation}
\label{eq:def_rej_ballV}
  \rejectregion(\q) = \defEns{z \in \rset^{2d} \, , \, \alphagen(\q,z) < 1  } \eqsp, \,
    \ballV(\q) = \defEns{z \in \rset^{2d} \, , \, \Vgeo(\projq(z)) \leq \Vgeo(\q) } \eqsp.
\end{equation}
The set $\rejectregion(\q)$ is the potential rejection region.
Our next result gives a condition on $\Kker$ and $\alphagen$ which
implies that if $V$ is a Foster-Lyapunov function for $\Kker$ then
$\Pker$ satisfies a Foster-Lyapunov drift condition as well. This
result is inspired by \cite[Theorem~4.1]{roberts:tweedie:1996}, which is used to show the $V$-uniform geometric ergodicity of the MALA algorithm.
\begin{proposition}
\label{propo:geo_drift_MH}
  Let $\Vgeo : \rset^d \to \coint{1,\plusinfty}$ be a norm-like  function.
  Assume moreover that there exist  $\lambdageo \in \coint{0,1}$ and $\bgeo \in \rset_+$ such that
  \begin{equation}
  \label{eq:assum:geo_ergo_1}
  \Kker \Vgeo \leq  \lambdageo \Vgeo + \bgeo \eqsp.
  \end{equation}
and
\begin{equation}
\label{eq:assum:geo_ergo_2}
  \lim_{M \to \plusinfty} \sup_{\set{\q \in \rset^d}{\Vgeo(\q) \geq M}} \Kker(\q,\rejectregion(\q) \cap \ballV(\q)) = 0  \eqsp.
\end{equation}
 Then there exist  $\lambdageotilde \in \coint{0,1}$ and $\bgeotilde \in \rset_+$ such that
 $\Pker \Vgeo \leq  \lambdageotilde \Vgeo + \bgeotilde$ where $\Pker$ is given by \eqref{eq:def_kenel_MH}.
\end{proposition}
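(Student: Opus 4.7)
The plan is to manipulate the expression for $\Pker V$ into a form where the drift $\Kker V \leq \lambdageo V + \bgeo$ is the dominant term, plus a correction term controlled by the rejection probability on $\ballV(\q)$.

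\medskip

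\textbf{Step 1: Rewriting $\Pker V$.} Starting from \eqref{eq:def_kenel_MH} and using the extension $\Vgeo(\q,\p) = \Vgeo(\q)$ on $\rset^{2d}$, I would first write
\begin{equation*}
\Pker \Vgeo(\q) = \Vgeo(\q) + \int_{\rset^{2d}} [\Vgeo(\projq(z)) - \Vgeo(\q)]\alphagen(\q,z)\Kker(\q,\rmd z),
\end{equation*}
and then add and subtract $\int [\Vgeo(\projq(z)) - \Vgeo(\q)]\Kker(\q,\rmd z) = \Kker\Vgeo(\q)-\Vgeo(\q)$ to get the key identity
\begin{equation*}
\Pker \Vgeo(\q) = \Kker \Vgeo(\q) + \int_{\rset^{2d}} [\Vgeo(\q) - \Vgeo(\projq(z))]\,[1-\alphagen(\q,z)]\,\Kker(\q,\rmd z).
\end{equation*}

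\medskip

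\textbf{Step 2: Locating where the correction is positive.} Since $1-\alphagen(\q,z)$ vanishes outside $\rejectregion(\q)$ and $\Vgeo(\q)-\Vgeo(\projq(z))\geq 0$ precisely on $\ballV(\q)$, the correction integral is bounded above by its restriction to $\rejectregion(\q)\cap\ballV(\q)$, on which the integrand is at most $\Vgeo(\q)$. Hence
\begin{equation*}
\Pker \Vgeo(\q) \leq \Kker \Vgeo(\q) + \Vgeo(\q)\,\Kker(\q,\rejectregion(\q)\cap\ballV(\q)).
\end{equation*}
Combined with \eqref{eq:assum:geo_ergo_1} this yields
\begin{equation*}
\Pker \Vgeo(\q) \leq \bigl(\lambdageo + \Kker(\q,\rejectregion(\q)\cap\ballV(\q))\bigr)\Vgeo(\q) + \bgeo.
\end{equation*}

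\medskip

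\textbf{Step 3: Closing with assumption \eqref{eq:assum:geo_ergo_2}.} Pick $\varepsilon \in (0, 1-\lambdageo)$ and set $\lambdageotilde = \lambdageo + \varepsilon < 1$. By \eqref{eq:assum:geo_ergo_2}, there exists $M>0$ such that $\Kker(\q,\rejectregion(\q)\cap\ballV(\q)) \leq \varepsilon$ whenever $\Vgeo(\q)\geq M$. For such $\q$, the bound of Step 2 already gives $\Pker \Vgeo(\q) \leq \lambdageotilde \Vgeo(\q) + \bgeo$. For $\q$ with $\Vgeo(\q) < M$, one uses the trivial bound $\Kker(\q,\rejectregion(\q)\cap\ballV(\q))\leq 1$ together with $\Vgeo(\q) < M$ to obtain $\Pker \Vgeo(\q) \leq \lambdageotilde \Vgeo(\q) + \bgeo + M$. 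Setting $\bgeotilde = \bgeo + M$ gives the claim on all of $\rset^d$.

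\medskip

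No step here is delicate; the only subtle point is the algebraic rearrangement in Step 1, which reveals that the rejection penalty is only harmful when the proposal tends to decrease $\Vgeo$, i.e.\ precisely on $\rejectregion(\q)\cap\ballV(\q)$. Once this is recognized, the assumptions plug in almost mechanically.
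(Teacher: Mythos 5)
Your proof is correct and follows essentially the same route as the paper: the identity in your Step 1 is exactly the paper's decomposition $\Pker \Vgeo = \Kker\Vgeo + \int \{\Vgeo(\q)-\Vgeo(\projq(z))\}\{1-\alphagen(\q,z)\}\Kker(\q,\rmd z)$, and the restriction of the correction term to $\rejectregion(\q)\cap\ballV(\q)$ with the bound by $\Vgeo(\q)\,\Kker(\q,\rejectregion(\q)\cap\ballV(\q))$ is the paper's key step as well. Your Step 3 merely makes explicit the final bookkeeping (choice of $\lambdageotilde$ and the bound on the sublevel set $\{\Vgeo < M\}$) that the paper leaves implicit in its concluding limit statement, so there is nothing to add.
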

\begin{proof}
The proof is postponed to \Cref{sec:proof-crefpr}.
\end{proof}
We show below that under appropriate conditions, the proposal kernel $\PkerhmcD[h][T]$ and
the acceptance probability $\tildeAlphaacc$ given by \eqref{eq:def_Pker_proposition_double} and
\eqref{eq:def_alpha_acc_tilde_hmc} satisfy the conditions of
\Cref{propo:geo_drift_MH} which imply that the HMC kernel
$\Pkerhmc[h][T]$ is $V$-uniformly geometrically ergodic. 
For $\m \in \ocint{1,2}$, consider the following assumption:

\begin{assumption}[$m$]
  \label{assum:potential:c}
There exist $\constthree \in \rset^*_+$ and $\constfour \in \rset$ such that for all $\q \in \rset^d$,
  \begin{equation}
    \ps{\nabla \F(\q)}{\q} \geq \constthree \norm{\q}^{m} -\constfour \eqsp.
  \end{equation}
\end{assumption}
For all $\a \in \rset_+^*$ and $\q \in \rset^d$, define
\begin{equation}
\label{eq:def_Va}
\Vdrifta[a] (\q) = \exp(\a \norm{\q}) \eqsp.
\end{equation}
\begin{proposition}
\label{lem:drift_uhmc}
\begin{enumerate}[label=(\alph*)]
\item   \label{lem:drift_uhmc_1}
 Assume   \Cref{assum:regOne}$(m-1)$ and  \Cref{assum:potential:c}$(\m)$ for some $\m \in \ooint{1,2}$. Then, for all $T \in \nsets$,  $h \in \rset^*_+$, and $\a \in \rsetep$, there exist $\lambda \in \coint{0,1}$ and $\b \in \rsetp$ such that
  \begin{equation}
    \label{eq:drift_lem}
      \PkerhmcD[h][T] \Vdrifta[\a] \leq \lambda  \Vdrifta[\a] + \b \eqsp.
  \end{equation}
\item
\label{lem:drift_uhmc_2}
 Assume   \Cref{assum:regOne}$ (1)$ and  \Cref{assum:potential:c}$ (2)$.  Let $\bar{S} > 0$ be such that $\Theta(S) < \constthree$ for any $S \in \ocint{0,\bar{S}}$, where
\begin{align}
\label{eq:definition-function-C}
  \Theta(s)&= 2 \constzero^{1/2} \vartheta_2(s) \{ \rme^{\constzero^{1/2} s \vartheta_1(\constzero^{1/2} s)} - 1\} \\
  & \qquad  \qquad + 6 s^2 \left(   \constzeroT ^2  +  \constzero \vartheta_2^2(s) \{ \rme^{\constzero^{1/2} s \vartheta_1(\constzero^{1/2} s)} - 1\}^2\right) \eqsp.
\end{align}
Then, for all $a \in \rsetep$,  $T \in \nsets$ and  $h \in \ocint{0,\bar{S}/T}$,  there exist  $\lambda \in \coint{0,1}$ and $\b \in \rsetp$ which satisfy \eqref{eq:drift_lem}.
\end{enumerate}
\end{proposition}
\begin{proof}
  The proof is postponed to \Cref{sec:proof-crefl-2}
\end{proof}
We now derive sufficient conditions under which the condition \eqref{eq:assum:geo_ergo_2} of
\Cref{propo:geo_drift_MH} is satisfied.

\begin{assumption}[$\m$]
\label{assum:potential}
\begin{enumerate}[label = (\roman*)]
\item \label{assum:potential:a}
$\F \in C^3(\rset^d)$  and there exists $\constone \in \rset_+^*$ such that for all $\q \in \rset^d$ and $k=2,3$:
\begin{equation}
\norm{D^k \F(\q)}\leq \constone \defEns{1+\norm{\q}}^{\m-k} \eqsp.
\end{equation}
\item \label{assum:potential:b}
There exist $\consttwo \in \rset_+^* $ and $\rhtwo \in \rset^+$ such that for all $\q \in \rset^d$, $\norm{q}\geq \rhtwo$,
\begin{equation}
D^2\F(\q)\defEns{ \nabla \F(\q)\otimes  \nabla \F(\q)}  \geq \consttwo \norm{\q}^{3\m-4} \eqsp.
\end{equation}
  \end{enumerate}
\end{assumption}

It is easily checked that under \Cref{assum:potential}, the results of \Cref{sec:ergodicity-hmc} can be applied, \ie~$\nabla \F$ satisfies \Cref{assum:regOne}($\m-1$); see \Cref{lem:grad_Lip_F}.

Condition \Cref{assum:potential:c}$(m)$ and \Cref{assum:potential}$(m)$ are satisfied by power functions $\q \mapsto c\norm{\q}^\m$. More generally, they are satisfied by $\m$-homogeneously quasiconvex functions with convex level sets  outside a ball and by  perturbations of such functions.

We say that a function $\F_0$ is $m$-homogeneous quasi-convex
outside a ball of radius $\Rexp$ if the following conditions are satisfied:
\begin{enumerate}[(QC-1)]
\item for all $t \geq 1$ and $q \in \rset^d$, $\norm{\q}\geq \Rexp$, $\F_0(t \q)= t^\m \F_0(\q)$.
\item for all $\q \in \rset^d$, $\norm{\q} \geq \Rexp$, the level sets $\{ \x\, :\, \F_0(\x) \leq \F_0(\q)\}$ are convex.
\end{enumerate}
\begin{proposition}
\label{le:convex}
Let $m \in \ccint{1,2}$  and $\Rexp \in \rset_+$.  Assume that the potential $\F$ may be decomposed as
$$
\F(\q)=\F_0(\q)+G(\q) \eqsp, \quad \text{$\q\in \rset^d$, $\norm{\q} \geq \Rexp$} \eqsp,
$$
where the functions $\F_0,G \in C^3(\rset^d)$ satisfy the following two conditions:
  \begin{enumerate}[(A)]
  \item $\F_0$ is $\m$-homogeneously quasiconvex outside a ball of radius $\Rexp$ and $\lim_{\norm{\q} \to \plusinfty} \F_0(\q)=\infty$.
\label{le:convex:a}
\item
\label{le:convex:b}
For $k=2,3$, $\lim_{\norm{\q} \to \plusinfty}\normop{D^k G(\q)}/  \norm{\q}^{\m-k}= 0$.
  \end{enumerate}
Then $\F$ satisfies  \Cref{assum:potential:c}$(m)$ and  \Cref{assum:potential}$(\m)$.
\end{proposition}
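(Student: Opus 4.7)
The plan is to verify each of the three conditions \ref{assum:potential:a}--\ref{assum:potential:c} of \Cref{assum:potential}($\m$) in turn. Conditions \ref{assum:potential:a} and \ref{assum:potential:c} will follow relatively directly from the homogeneity of $\F_0$, while the sharp Hessian lower bound \ref{assum:potential:b} is the main obstacle and requires converting convex sublevel sets into an operator-monotonicity statement via the convexity of $\F_0^{1/\m}$.

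First, hypothesis \ref{le:convex:a} yields the representation $\F_0(\q) = \norm{\q}^{\m}\phi(\q/\norm{\q})$ for $\norm{\q}\geq \Rexp$, with $\phi(u) = \Rexp^{-\m}\F_0(\Rexp u)\in C^3(S^{d-1})$. Differentiating, $D^k \F_0$ is positively homogeneous of degree $\m-k$ outside $\ball{0}{\Rexp}$, so $\normop{D^k \F_0(\q)}\leq C\norm{\q}^{\m-k}$ there; combined with \ref{le:convex:b} and continuity of $D^k\F$ on $\clos{\ball{0}{\Rexp}}$, this establishes \ref{assum:potential:a}. Next, $\phi > 0$ on $S^{d-1}$: otherwise $\phi(u_0)\leq 0$ would force $\F_0(t\Rexp u_0)=t^{\m}\F_0(\Rexp u_0)\leq 0$ for all $t\geq 1$, contradicting $\F_0(\q)\to\infty$. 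By compactness, $c_0 := \Rexp^{-\m}\min_{S^{d-1}}\phi>0$ and $\F_0(\q)\geq c_0 \norm{\q}^{\m}$ for $\norm{\q}\geq \Rexp$. Euler's identity then gives $\ps{\nabla \F_0(\q)}{\q} = \m \F_0(\q) \geq \m c_0\norm{\q}^{\m}$, while integrating $D^2 G$ along radial rays via \ref{le:convex:b} produces $\norm{\nabla G(\q)} = o(\norm{\q}^{\m-1})$ as $\norm{\q}\to\infty$, whence \ref{assum:potential:c}.

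The heart of the argument is \ref{assum:potential:b}. The key observation is that homogeneity upgrades quasiconvexity of $\F_0$ to outright convexity of $\F_0^{1/\m}$. Define $\psi:\rset^d\to\rset_+$ by $\psi(\q) = \norm{\q}\Rexp^{-1}\F_0(\Rexp \q/\norm{\q})^{1/\m}$ for $\q\neq 0$ and $\psi(0)=0$; this is positively $1$-homogeneous and equals $\F_0^{1/\m}$ on $\{\norm{\q}\geq \Rexp\}$ by the homogeneity of $\F_0$. Choose $L$ so large that $L\geq \max_{\clos{\ball{0}{\Rexp}}} \F_0$; a direct check shows $\{\psi\leq L^{1/\m}\}=\{\F_0\leq L\}$, which is convex by \ref{le:convex:a}. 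Since $\psi$ is $1$-homogeneous, every sublevel set is a scaling of this one, so $\psi$ is quasiconvex, and a positively $1$-homogeneous non-negative quasiconvex function is convex. Hence $D^2\psi(\q)\succeq 0$ on $\rset^d\setminus\{0\}$; differentiating $\psi = \F_0^{1/\m}$ at $\norm{\q}\geq \Rexp$ translates this into the matrix inequality
\begin{equation*}
\m \F_0(\q)\, D^2 \F_0(\q) \succeq (\m-1)\nabla\F_0(\q)\otimes \nabla\F_0(\q) \eqsp.
\end{equation*}
Evaluating on $\nabla\F_0(\q)$ and using $\F_0(\q)\leq C\norm{\q}^\m$ together with $\norm{\nabla\F_0(\q)}\geq \m c_0 \norm{\q}^{\m-1}$ from Euler yields $D^2 \F_0(\q)[\nabla\F_0,\nabla\F_0]\geq c_1 \norm{\q}^{3\m-4}$ for some $c_1>0$, where $\m>1$ is essential.

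To conclude, expand $D^2 \F[\nabla\F,\nabla\F]$ around $D^2 \F_0[\nabla\F_0,\nabla\F_0]$; using $\normop{D^2 \F_0}=O(\norm{\q}^{\m-2})$, $\norm{\nabla\F_0}=O(\norm{\q}^{\m-1})$ from homogeneity, and the small-order bounds $\normop{D^2 G}=o(\norm{\q}^{\m-2})$ and $\norm{\nabla G}=o(\norm{\q}^{\m-1})$ from \ref{le:convex:b}, each cross term and $G$-only term is $o(\norm{\q}^{3\m-4})$. Hence $D^2\F(\q)[\nabla\F,\nabla\F]\geq (c_1/2)\norm{\q}^{3\m-4}$ for $\norm{\q}$ large enough, giving \ref{assum:potential:b}. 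The main obstacle throughout is the convexity of $\psi$: it requires carefully extending $\F_0^{1/\m}$ from $\{\norm{\q}\geq \Rexp\}$ to all of $\rset^d$ by $1$-homogeneity, and then bootstrapping from the convexity of a single sublevel set of $\F_0$ to the convexity of $\psi$ via that homogeneity.
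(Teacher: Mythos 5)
Your proof is correct, but its core argument is genuinely different from the paper's. For \ref{assum:potential:a} and \ref{assum:potential:c} you do essentially what the paper does (homogeneity of the derivatives plus Euler's identity), and you additionally make explicit the radial integration giving $\norm{\nabla G(\q)}=o(\norm{\q}^{\m-1})$, which the paper leaves implicit. For the key bound \ref{assum:potential:b}, the paper fixes one convex sublevel set $\lset_M$ with $M>\sup_{\ball{0}{\Rexp}}\F_0$, proves strict positivity of $D^2\F_0(\x)\defEns{\nabla\F_0(\x)\otimes\nabla\F_0(\x)}$ at each $\x\in\partial\lset_M$ by comparing $\F_0$ from below, near $\x$, with the $\m$-homogeneous power of the supporting linear form (matching value and gradient at $\x$ and comparing second derivatives along the normal line), and then uses compactness of $\partial\lset_M$ together with the rescaling $\q\mapsto t_{\q}\q$ to propagate a uniform constant to all large $\q$ with the rate $\norm{\q}^{3\m-4}$. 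You instead pass to the gauge-type function $\psi$, the $1$-homogeneous extension of $\F_0^{1/\m}$: a single convex sublevel set plus $1$-homogeneity gives quasiconvexity of $\psi$, hence convexity, hence the pointwise matrix inequality $\m\F_0\, D^2\F_0\succeq(\m-1)\nabla\F_0\otimes\nabla\F_0$, which combined with the two-sided homogeneous bounds on $\F_0$ and $\nabla\F_0$ yields the lower bound with an explicit constant and no compactness step; the final perturbation by $G$ is the same in spirit as the paper's (and you spell it out in more detail). Your route is more quantitative and makes the role of $\m>1$ transparent through the factor $\m-1$; the paper's route avoids the homogeneous extension and convexification and stays local. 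Two small points to tidy: choose the level $L$ to be an attained value $\F_0(\q)$ with $\norm{\q}\geq\Rexp$ (immediate from homogeneity along a ray), so that hypothesis \ref{le:convex:a} applies to $\{\F_0\leq L\}$; and, as you note, the endpoint $\m=1$ must be excluded, which is consistent with \Cref{assum:potential} being stated only for $\m\in\ocint{1,2}$ and is equally implicit in the paper's own proof.
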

\begin{proof}
The proof is postponed to \Cref{sec:proof-crefle:convex}.
\end{proof}

To show that the condition \eqref{eq:assum:geo_ergo_2} of
\Cref{propo:geo_drift_MH} is satisfied under
\Cref{assum:potential}$(m)$, we rely on the following important result which implies that the probability of accepting a move goes to 1 as $\norm{q} \to \infty$.
\begin{proposition}
  \label{propo:accept} Assume
  \Cref{assum:potential}$(m)$ for some $\m \in \ocint{1,2}$. Let $\gamma \in \ooint{0,\m-1}$.
  \begin{enumerate}[label=(\alph*)]
  \item
  \label{propo:accept_1}
  If $\m\in (1,2)$, for all $T \in \nsets$, $h \in \rset_+^*$, there exists $R_{\Ham} \in \rset_+$ such that for
  all $\q_0,\p_0 \in \rset^d$, $\norm{q_0} \geq R_{\Ham}$ and
  $\norm{p_0} \leq \norm{\q_0}^{\gamma}$, $ \Ham(\Phiverlet[h][T](q_0,p_0)) -
  \Ham(q_0,p_0) \leq 0$.
\item
  \label{propo:accept_2}
  If $\m=2$,   there exists $\bar{S} >0$ such that for any $T \in \nsets$ and $h \in \ocint{0, \bar{S}/T^{3/2}}$,  there exists $R_{\Ham} \in \rset_+$ satisfying for all $\q_0,\p_0 \in \rset^d$, $\norm{q_0} \geq R_{\Ham}$ and
  $\norm{p_0} \leq \norm{\q_0}^{\gamma}$, $ \Ham(\Phiverlet[h][T](q_0,p_0)) -
  \Ham(q_0,p_0) \leq 0$.
  \end{enumerate}
\end{proposition}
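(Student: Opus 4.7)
The plan is to expand $\Ham(\Phiverlet[h][T](\q_0,\p_0)) - \Ham(\q_0,\p_0)$ in powers of $h$ around the dominant balance $\nabla\U(\q_0) \gg \p_0$ (valid since $\norm{\p_0} \le \norm{\q_0}^{\gamma}$ and $\norm{\nabla\U(\q_0)} \asymp \norm{\q_0}^{\m - 1}$ with $\gamma < \m-1$), isolate a strictly negative leading term of size $h^4 \norm{\q_0}^{3\m - 4}$, and dominate the remainder. Write $G := \nabla\U(\q_0)$; under \Cref{assum:potential}($\m$) we have $\norm{G} \le C\norm{\q_0}^{\m - 1}$, $\normop{D^2\U(\q_0)} \le C\norm{\q_0}^{\m - 2}$, $\normop{D^3\U(\q_0)} \le C\norm{\q_0}^{\m-3}$, and the quadratic lower bound $G^{\Tr} D^2\U(\q_0)\,G \ge \consttwo \norm{\q_0}^{3\m-4}$ from \Cref{assum:potential}($\m$)-\ref{assum:potential:b}.

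I would first invoke \Cref{lem:bound_first_iterate_leapfrog}-\ref{lem:bound_first_iterate_leapfrog_1} and \eqref{eq:qk}-\eqref{eq:pk} to show that for $\norm{\q_0}$ large --- and, only in the case $\m=2$, for $h$ below a threshold $h_{\Ham}$ depending on $T,\constone$ --- the iterates satisfy $\norm{\q_k - \q_0} \le C(h^2\norm{\q_0}^{\m-1} + h\norm{\q_0}^{\gamma})$ uniformly in $k\le T$, so $\norm{\q_k}$ stays at the scale of $\norm{\q_0}$. Iterating Taylor expansions of $\nabla\U$ around $\q_0$ in \eqref{eq:qk}-\eqref{eq:pk} then yields structural expansions
\begin{equation*}
\q_T - \q_0 = Th\p_0 - \tfrac{T^2 h^2}{2} G + \rho^q, \qquad \p_T - \p_0 = -ThG - \tfrac{T^2 h^2}{2} D^2\U(\q_0)\p_0 + \kappa_T h^3 D^2\U(\q_0)\,G + \rho^p,
\end{equation*}
with $T$-dependent combinatorial constants arising from sums like $\sum_{i=1}^{T-1} i(T-i)$ entering $\gperthmc$, and with remainders $\rho^q,\rho^p$ of strictly lower order in $\norm{\q_0}$. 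Inserting these into
\begin{equation*}
\Ham(\q_T,\p_T) - \Ham(\q_0,\p_0) = \{\U(\q_T) - \U(\q_0)\} + \p_0\cdot(\p_T - \p_0) + \tfrac12\norm{\p_T - \p_0}^2
\end{equation*}
and Taylor-expanding $\U(\q_T) - \U(\q_0)$ to second order around $\q_0$ with a $D^3\U$ remainder, a direct computation shows that all contributions at orders $h$ and $h^2$ and all ``pure $G$'' contributions at order $h^3$ cancel --- the algebraic signature of the symplecticness and second-order accuracy of Verlet. The surviving leading term is $-c_T h^4\, G^{\Tr} D^2\U(\q_0)\,G$ with an explicit $c_T > 0$ ($c_1 = 1/8$; positivity for general $T$ is an elementary combinatorial check), hence at most $-c_T\consttwo h^4\norm{\q_0}^{3\m-4}$ for $\norm{\q_0}\ge\rhtwo$.

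It remains to dominate every other term in the expansion by a strictly smaller power of $\norm{\q_0}$. The binding case is the $h^3$ mixed cross term $h^3 \p_0^{\Tr} D^2\U(\q_0)\,G$, of absolute value at most $h^3\norm{\q_0}^{\gamma+2\m-3}$: its exponent is strictly less than $3\m-4$ \emph{iff} $\gamma<\m-1$, which is exactly the hypothesis. The $h^4$ contributions quadratic in $\p_0$ are bounded by $h^4\norm{\q_0}^{2\gamma+2\m-4}$, the cubic Taylor remainder by $h^6\norm{\q_0}^{4\m-6}$, and the higher-order pieces of $\rho^q,\rho^p$ by analogous estimates --- all of strictly smaller order in $\norm{\q_0}$ under $\gamma<\m-1$ and $\m\le 2$. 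Taking $R_{\Ham}$ large enough then yields $\Ham(\Phiverlet[h][T](\q_0,\p_0)) \le \Ham(\q_0,\p_0)$. The main obstacle is the algebraic bookkeeping of the cancellations through order $h^3$ for the full $T$-step composition and the verification that $c_T$ is uniformly positive for every $T\ge 1$; once that is granted, the rest is exponent counting against the quantitative lower bound \Cref{assum:potential}($\m$)-\ref{assum:potential:b}.
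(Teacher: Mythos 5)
Your route---one global expansion of the $T$-step map around $(\q_0,\p_0)$ with everything frozen at $\q_0$---is genuinely different from the paper's proof, which telescopes $\Ham(q_T,p_T)-\Ham(q_0,p_0)$ over the individual leap-frog steps and applies the exact one-step identity of \Cref{lem:diff_hamiltonian_taylor_exp}, in which the single $h^4$ pure-gradient term $-(h^4/4)\int_0^1 D^2\F(q_{t,k})\defEns{\nabla \F(q_k)}^{\otimes 2} t \,\rmd t$ is manifestly negative of size $h^4\norm{\q_0}^{3\m-4}$ by \Cref{lem:variation_assum_hessian}. The problem with your proposal as written is that it defers precisely the step that carries the whole proposition: the cancellations through order $h^3$ and the positivity of $c_T$. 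This is not reducible to ``elementary exponent counting,'' because after $k$ steps the momentum satisfies $p_k\approx p_0-kh\nabla\F(\q_0)$, i.e.\ it is of size $h\norm{\q_0}^{\m-1}$ rather than $\norm{\q_0}^{\gamma}$; consequently the cross terms you relegate to remainders generate pure-gradient contributions of size exactly $h^4\norm{\q_0}^{3\m-4}$---the same order, in both $h$ and $\norm{\q_0}$, as your claimed leading term, and with both signs (for instance $\tfrac12 D^2\F(\q_0)\defEns{(q_T-q_0)^{\otimes 2}}$ contributes $+\tfrac{T^4h^4}{8}\,\nabla\F(\q_0)^{\Tr}D^2\F(\q_0)\nabla\F(\q_0)$, while the kinetic cross term $\langle p_T-p_0$ at order $h$, $p_T-p_0$ at order $h^3\rangle$ contributes a negative multiple). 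Since in case (a) $h$ is arbitrary, none of these can be dominated; they must be summed exactly, and the negativity of the net coefficient only emerges from that algebra (it does: one finds $c_T=T^2/8$, consistent with summing the per-step coefficients $1/8+k/4$ of the paper's identity, so your approach is viable---but this computation is the crux, not a routine check, and it is missing). Your hypothesis $\gamma<\m-1$ only disposes of the genuine $\p_0$-part of the cross term; the paper handles the large part of $p_k$ not by counting exponents but by substituting \eqref{eq:pk} into the $h^3$ term $\int_0^1 D^2\F(q_{t,k})\defEns{p_k\otimes\nabla\F(q_k)}\rmd t$ and invoking the robust Hessian lower bound of \Cref{lem:variation_assum_hessian} to give those contributions a favorable sign, which is exactly what the telescoping structure buys.

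Apart from this, the rest of your sketch is in line with the paper's estimates: freezing the Hessian at $\q_0$ (or at $q_k$, as the paper does) costs errors of order $h^5$--$h^6$ times $\norm{\q_0}^{4\m-6}$, the $D^3\F$ remainders are of the same type, and $4\m-6<3\m-4$ precisely when $\m<2$, which is why smallness of $h$ is needed only in case (b); the comparability $\norm{q_k}\asymp\norm{q_0}$ for $\m=2$ also requires $h$ small, as in \Cref{lem:prepa_bound_diff_ham}. So the plan can be completed, but to be a proof it must actually carry out the $T$-step bookkeeping (or, more economically, switch to the paper's per-step telescoping where the sign of the leading term is visible without any combinatorics).
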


\begin{proof}
  The proof is postponed to  \Cref{sec:proof-crefth}.
\end{proof}

This result  means that far in the tail the HMC proposal are "inward".
We illustrate the result of \Cref{propo:accept}-\ref{propo:accept_1}
in \Cref{fig:H_behaviour} for $U$ given by $\q \mapsto
(\norm[2]{\q}+\delta)^{\kappa}$ for $\kappa=3/4$, $h = 0.9$ and
$p_0 \in \rset^d$, $\norm{p_0}=1$. Note that this potential satisfies
the condition of the proposition. We can observe that choosing the different
initial conditions $q_0$ with increasing norm imply that $\tilde{T} =
\max\{k \in \nset ; \Ham(\Phiverlet[h][k](q_0,p_0)) - \Ham(q_0,p_0)
<0\}$ increases as well.

 \begin{figure}[h]
   \centering
   \includegraphics[scale=0.3]{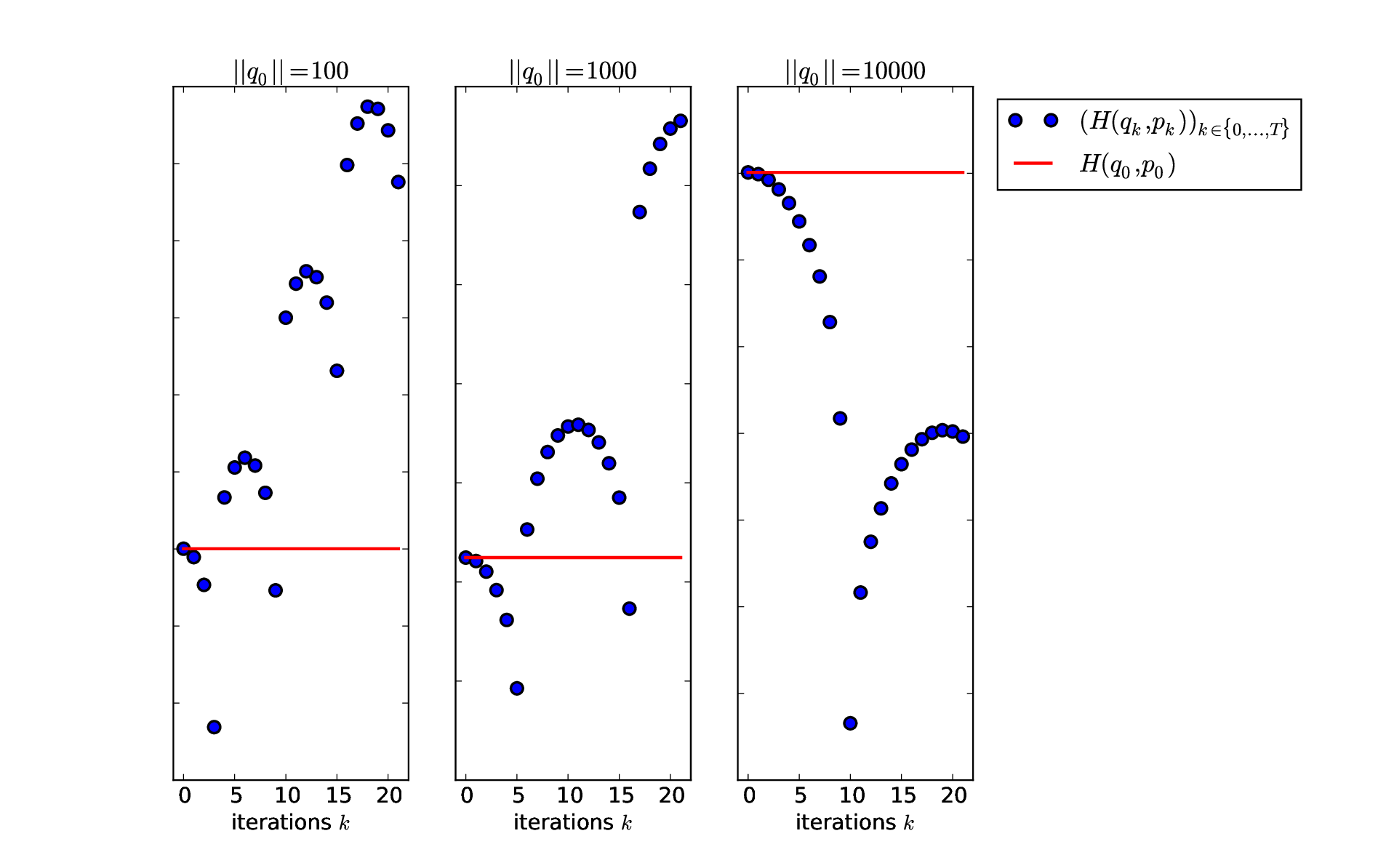}
   \caption{Behaviour of $(\Ham(\Phiverlet[h][k](q_0,p_0)))_{k \in \{0,\ldots,T\}}$ for different initial conditions $q_0$.}
   \label{fig:H_behaviour}
 \end{figure}

 However, in the case $m=2$, \Cref{propo:accept}-\ref{propo:accept_2} only implies that the HMC proposal is inward only if the step size $h$ is sufficiently small with respect to the number of leapfrog step $T$, \ie~is of order $\bigO(T^{-3/2})$. To relax this condition, we strengthen \Cref{assum:potential}($2$) by assuming that $U$ is a smooth perturbation of a quadratic function.
 \begin{assumption}
   \label{ass:pertub}
   There exist $\tilde{U} : \rset^d \to \rset$, continuously differentiable, and  a positive definite matrix $\Sigmabf$ such that
   $U(q) = \ps{\Sigmabf q}{q}/2 + \tilde{U}(q)$ and there exist $\constfive \geq 0$ and  $\varrho \in \coint{1,2}$  such that for any $q,x \in \rset^d$,
   \begin{align}
     \absLigne{\tilde{U}(q)} &\leq \constfive(1+\norm[\varrho]{q}) \eqsp, \quad  \normLigne{\nabla \tilde{U}(q)} \leq \constfive(1+\norm[\varrho-1]{q}) \eqsp,\\  \qquad \qquad \qquad & \normLigne{\nabla \tilde{U}(q) - \nabla \tilde{U}(x)} \leq \constfive \norm{q-x} \eqsp.
   \end{align}
 \end{assumption}
Note that it is straightforward to check that under \Cref{ass:pertub}, the conditions \Cref{assum:regOne}$(1)$ and  \Cref{assum:potential:c}$(2)$ hold.

 \begin{proposition}
  \label{propo:accept_pertub} Assume  \Cref{ass:pertub}  and let $\gamma \in \ooint{0,1}$.
There exists a  constant $\bar{S} >0$ such that for all $T \in \nsets$, $h \in \ocint{0,\bar{S}/T}$, there exists $R_{\Ham} \in \rset_+$ such that for
  all $\q_0,\p_0 \in \rset^d$, $\norm{q_0} \geq R_{\Ham}$ and
  $\norm{p_0} \leq \norm{\q_0}^{\gamma}$, $ \Ham(\Phiverlet[h][T](q_0,p_0)) -
  \Ham(q_0,p_0) \leq 0$.
\end{proposition}
\begin{proof}
  The proof is postponed to  \Cref{sec:proof-crefth_accept_2}.
\end{proof}
We now can establish the geometric ergodicity of the HMC sampler.
\begin{theorem}
  \label{theo:geoErg}
  \begin{enumerate}[label=(\alph*)]
  \item
  \label{item:theo_1}
 If   \Cref{assum:potential:c}$(\m)$ and  \Cref{assum:potential}$(m)$ hold for some $\m\in (1,2)$, then for all $a \in \rset_+^*$,  $T \in \nset^*$ and $h > 0$, the HMC kernel $\Pkerhmc[h][T]$ is $\Vdrifta[\a]$-uniformly geometrically ergodic, where $\Vdrifta[a]$ is defined by \eqref{eq:def_Va}.
\item
  \label{item:theo_2}
  If  \Cref{assum:potential:c}$(2)$ and \Cref{assum:potential}$(2)$ hold, then there exists $\bar{S}>0$ such that for all  $a \in \rset^*_+$, $T \in \nset^*$ and $h \in \ooint{0,\bar{S}/T^{3/2}}$, $\Pkerhmc[h][T]$ is $\Vdrifta[\a]$-uniformly geometrically ergodic.
\item \label{item:theo_3}
  If \Cref{ass:pertub} holds, then there exists $\bar{S}>0$ (depending only on $\Sigmabf$ and $\constfive$) such that for all $a \in \rset^*_+$, $T \in \nset^*$ and $h \in \ooint{0,\bar{S}/T}$, $\Pkerhmc[h][T]$ is $\Vdrifta[\a]$-uniformly geometrically ergodic.
  \end{enumerate}
\end{theorem}
\begin{proof}[Proof of \Cref{theo:geoErg}]
  It is enough to consider \ref{item:theo_1} as the proof
  of  \ref{item:theo_2} and \ref{item:theo_3} follows exactly the same lines
  taking $\bar{S}$ small enough.  \Cref{lem:drift_uhmc} shows that for all $T \in \nsets$,  $h \in \rset^*_+$, and $\a \in \rsetep$, there exist $\lambda \in \coint{0,1}$ and $\b \in \rsetp$ such that the Foster-Lyapunov drift condition
  $\PkerhmcD[h][T] \Vdrifta[\a] \leq \lambda  \Vdrifta[\a] + \b$ is satisfied.
  By \Cref{propo:accept}, there exists $R_{\Ham} \geq 0$ such that for all $\q \in \rset^d$, $\norm{\q} \geq R_{\Ham}$,
\begin{equation}
\label{eq:proofpgeo_erg_0}
  \int_{\rejectregion(q)}   \PkerhmcD[h][T](\q, \rmd z ) \leq (2\uppi)^{-d/2} \int_{\{ \norm{\p} \geq \norm{q}^{\gamma}\} } \rme^{-\norm{p}^2/2} \rmd p \eqsp,
\end{equation}
for $\gamma \in \ooint{0,m-1}$ where $\rejectregion(\q) = \set{z \in \rset^{2d}}{\tildeAlphaacc(\q,z) < 1 }$ (see~\eqref{eq:def_alpha_acc_tilde_hmc}), which implies that
\begin{equation}
\label{eq:proof:geo_erg:1}
  \lim_{M \to \plusinfty} \sup_{\norm{\q} \geq M} \int_{\rejectregion(q)} \PkerhmcD[h][T](\q, \rmd z ) = 0 \eqsp,
\end{equation}

Since $\Vdrifta[a]$ is norm-like,  \Cref{propo:geo_drift_MH} implies that  for all $T > 0$ and $h > 0$, there exists $\lambdageotilde$ and $\bgeotilde$ (depending upon $a$, $h$ and $T$) such that
$\Pkerhmc[h][T] \Vdrifta[a] \leq \lambdageotilde \Vdrifta[a] + \bgeotilde$.
For all $M \geq 0$ the level sets $\{ \Vdrifta[a] \leq M\}$ are compact and hence small by \Cref{theo:irred_D}.
\cite[Corollary~14.1.6]{douc:moulines:priouret:2018} then shows that there exists a small set $\msc$, $\check{\lambda} \in \coint{0,1}$ and $\check{b} \in \coint{0,1}$ such that $\Pkerhmc[h][T] \Vdrifta[a] \leq \check{\lambda} \Vdrifta[a] + \check{b} \1_{\msc}$.  Since $\Pkerhmc[h][T]$ is aperiodic, the result follows from \cite[Theorem~15.2.4]{douc:moulines:priouret:2018}.
\end{proof}

We  finally consider the case where the number of leapfrog steps is a random variable
independent of the current state.
\begin{theorem}
\label{coro:geo_ergod-hmc-algor}
\begin{enumerate}[label=(\alph*)]
  \item
  \label{item:geo_ergod-hmc-algor:theo_1}
 If   \Cref{assum:potential:c}$(\m)$ and  \Cref{assum:potential}$(m)$ hold for  $\m\in (1,2)$,  then for all probability distributions $\bfvarpi=(\omega_i)_{i \in \nset^*}$ on $\nset^*$, all sequences $\mathbf{h}= (h_i)_{i \in \nset*}$ of positive numbers,  and $a \in \rset^*_+$, the randomized kernel  $\randomkerhmc_{\mathbf{h},\bfvarpi}$ \eqref{eq:randomhmc} is $\Vdrifta[\a]$-uniformly geometrically ergodic, where $\Vdrifta[a]$ is defined by \eqref{eq:def_Va}.
\item
  \label{item:geo_ergod-hmc-algor:theo_2}
  If  \Cref{assum:potential:c}$(2)$ and \Cref{assum:potential}($2$) hold, then there exists $\bar{S}>0$ such that for all probability distributions $\bfvarpi=(\omega_i)_{i \in \nset^*}$ on $\nset^*$, all sequences $\mathbf{h}= (h_i)_{i \in \nset^*}$ satisfying $\max_{i \in \supp(\bfvarpi)} i^{3/2} h_i \leq \bar{S}$,  and $a \in \rset^*_+$, $\randomkerhmc_{\mathbf{h},\bfvarpi}$ is $\Vdrifta[\a]$-uniformly geometrically ergodic.
\item   \label{item:geo_ergod-hmc-algor:theo_3}
  If \Cref{ass:pertub} holds, then there exists $\bar{S}>0$ (depending only on $\Sigmabf$ and $\constfive$) such that for all  probability distributions $\bfvarpi=(\omega_i)_{i \in \nset^*}$ on $\nset^*$, all sequences $\mathbf{h}= (h_i)_{i \in \nset^*}$ satisfying $\max_{i\in \supp(\bfvarpi)} i h_i \leq \bar{S}$,  and $a \in \rset^*_+$, $\randomkerhmc_{\mathbf{h},\bfvarpi}$ is $\Vdrifta[\a]$-uniformly geometrically ergodic.
\end{enumerate}
\end{theorem}
\begin{proof}
It is enough to consider \ref{item:geo_ergod-hmc-algor:theo_1} as the proofs
of \ref{item:geo_ergod-hmc-algor:theo_2} and \ref{item:geo_ergod-hmc-algor:theo_3} are along the same lines.
Set $a \in \rset_+^*$. It is established in the proof of \Cref{theo:geoErg}  that  for all $i \in \nset^*$
$\Pkerhmc[i][h_i]$ satisfies a Foster-Lyapunov drift condition:
there exists $\check{\lambda}_i \in \coint{0,1}$ and $\check{b}_i < \infty$ such that
$\Pkerhmc[i][h_i] \Vdrifta[a] \leq \lambda_i \Vdrifta[a] + b_i$,
By \Cref{coro:ergod-hmc-algor},   $\randomkerhmc_{\mathbf{h},\bfvarpi}$ is irreducible and aperiodic and all the compact sets are small. We conclude by applying \cite[Theorem~15.2.4]{douc:moulines:priouret:2018}.
\end{proof}

Compared to \cite{livingstone:betancourt:byrne:girolami:2016}, which
establishes geometric ergodicity of the HMC kernel under an implicit
assumption on the behaviour of the acceptance rate, our conditions are
directly verifiable on the potential $U$.

On the other hand, our conditions are different than the one given by
\cite{bou:sanz:2017} to establish the geometric ergodicity of the
idealized randomized HMC, which assumed to exactly solve the Hamiltonian
ODE \eqref{eq:hamil_ode}. These conditions are the following
1)$\int_{\rset^d} \norm[2]{q} \rmd \pi(q) < \plusinfty$, 2) there
exist $C_1 \in \ooint{0,1}$ and $C_2 >0$ such that for all $q \in \rset^d$
\begin{equation}
\label{eq:hyp_boo_rabee_sanz_serna}
  (1/2) \ps{\nabla U(q)}{q} \geq C_1 U(q) + \frac{(\tau^{-1}C_1/4)^2+\tau^{-2}C_1(1-C_1)/4}{2(1-C_1)}\norm[2]{q} -C_2 \eqsp,
\end{equation}
where $\tau>0$ is the duration parameter of the RHMC algorithm.  Note
that these conditions assumed that the target density is lighter than
Gaussian. In comparison, our results can be applied to sub-quadratic
potentials. In addition, it can be shown that HMC is not geometrically
ergodic under \eqref{eq:hyp_boo_rabee_sanz_serna} on the following example associated with the potential defined by  \eqref{eq:def_U_mixture_gaussian} below.

The main difference with the
setting of \cite{bou:sanz:2017} is that HMC has a acceptance/rejection
step and the integrated acceptance ratio
\[ q \mapsto \int_{\rset^d} \alphaacc\defEnsLigne{(\q,\p),\Phiverlet[h][T](\q,\p)}
\rme^{-\norm[2]{p}/2} (2 \uppi)^{-d/2} \rmd p
\]
must not go to $0$ as
$\norm{q}$ goes to $\plusinfty$. This is essentially the reason why
\Cref{assum:potential} differs from \eqref{eq:hyp_boo_rabee_sanz_serna}. Indeed, to show that an
irreducible Markov kernel $\mathrm{P}$ on $(\rset^d, \mcb(\rset^d))$ is not geometrically
ergodic with respect to an invariant measure $\mu$, \cite[Theorem
5.1]{roberts:tweedie:1996:biometrika} states the following sufficient condition
\begin{equation}
  \label{eq:condition_non_geo_ergodicity}
 \mathrm{ess \, sup}_{q \in \rset^d} \mathrm{P}(q, \{q\}) = 1 \eqsp,
\end{equation}
 where
$\mathrm{ess\, sup}$ is taken with respect to $\mu$. Consider then
the target density $\pi$ with potential $U$ given for all $q =(q_1,q_2) \in \rset^2$ by
\begin{equation}
\label{eq:def_U_mixture_gaussian}
  U(q) = -\log(\rme^{-q_1^2 - 5 q_2^2} + \rme^{-5q_1^2 - q_2^2}) \eqsp.
\end{equation}
Note that $U$ satisfies the condition
\eqref{eq:hyp_boo_rabee_sanz_serna}. On the contrary, we may show that
\eqref{eq:condition_non_geo_ergodicity} holds, and therefore HMC is
not geometrically ergodic for such a potential $U$.  However, the
detailed calculations are very technical and not particularly
informative and we prefer to present a numerical evidence that
\eqref{eq:condition_non_geo_ergodicity} holds. Indeed,
\Cref{fig:accept_mix_gaussian} displays numerical computations of the mean acceptance ratio,
$\int_{\rset^2} \alphaacc\defEnsLigne{(\q,\p),\Phiverlet[h][T](\q,\p)}
\rme^{-\norm[2]{p}/2} (2 \uppi)^{-1} \rmd p= 1 -\Pkerhmc[h][T](q,\{q\})$ for $q_1 \in \{200,250,$
$300,350,400,450,500\}$,
$q_2 \in \ccint{q_1+10^{-4},q_1+2\cdot 10^{-4}}$ and $T=1$  which
corresponds to MALA. We can observe that the larger $q_1$, the smaller $1-\Pkerhmc[h][T](q,\{q\})$, which illustrates that  \eqref{eq:condition_non_geo_ergodicity}  holds for the HMC
kernel.

 \begin{figure}[h]
   \centering
   \includegraphics[scale=0.4]{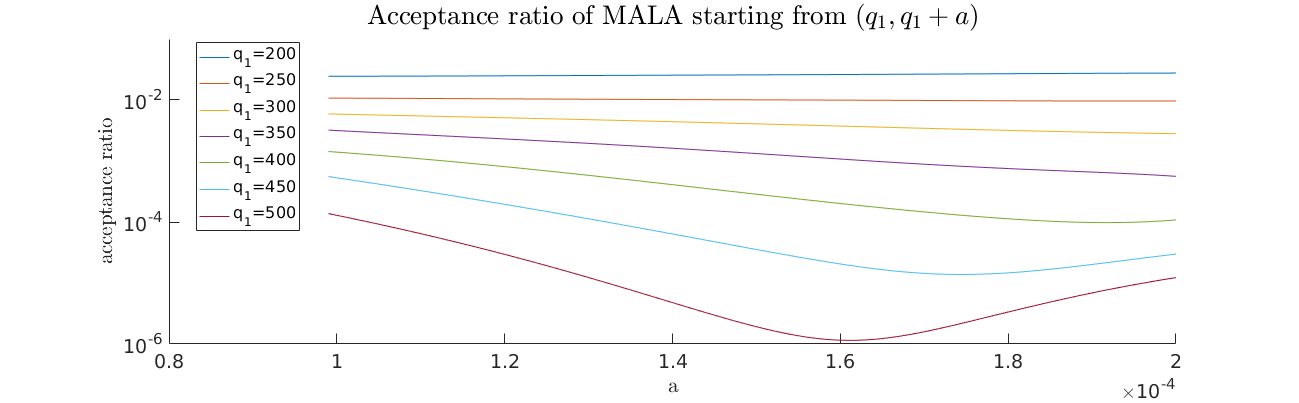}
   \caption{}
   \label{fig:accept_mix_gaussian}
 \end{figure}



\section{Irreducibility for a class of iterative models}
\label{sec:irred-class-iter}

In this Section we establish the irreducibility of a Markov kernel associated to a random iterative model.
These results are of independent interest.
Let $\hfunb : \rset^d \times \rset^d \to \rset^d$ and $\alphagen: \rset^d \times \rset^d \to \ccint{0,1}$ be Borel measurable
functions and $\phib : \rset^d \to \ccint{0,\plusinfty}$ be a
probability density with respect to the Lebesgue measure.  Consider the Markov kernel $\Pkerb$ defined for all $x \in \rset^d$ and $\eventA \in \borelSet(\rset^d)$ by
\begin{equation}
\label{eq:def_pkerb}
  \Pkerb(x,\eventA) = \int_{\rset^d} \indi{\eventA}{\hfunb(x,z)} \alpha(x,z) \phib(z) \rmd z + \bar{\alpha}(x) \delta_x(\eventA) \eqsp,
\end{equation}
where $\bar{\alpha}(x)= \int_{\rset^d} \alpha(x,z) \phib(z) \rmd z$. 
Define for all $x \in \rset^d$, $\hfunb_x : \rset^d \to \rset^d$ by $\hfunb_x = \hfunb(x,\cdot)$.

First, we give  a result  from geometric measure
theory together with a proof for the reader's convenience, which will be essential for the proof of the statements of this section.  Let
$\ouvert\subset\rset^d$ be an open set and $\Theta: \ouvert\to \rset^d$ be
a measurable function such that there exist $y_0 , \tilde{y}_0 \in \rset^d$ and
$M, \tilde{M} > 0$ satisfying $\ball{\tilde{y}_0}{\tilde{M}} \subset \ouvert$ and
  \begin{equation}
    \label{eq:condition_sigma_finite}
\ball{y_0}{M} \subset \Theta(\ball{\tilde{y}_0}{\tilde{M}}) \eqsp.
  \end{equation}
 Define the measure $\lambda_{\Theta}$ on $(\rset^d,\borelSet(\rset^d))$
  by setting for any $\eventA \in \borelSet(\rset^d)$
\begin{equation}
  \label{eq:push_forward_mes}
\lambda_\Theta (\eventA) \eqdef \Leb\defEns{\Theta^{-1}(\eventA) \cap \ball{\tilde{y}_0}{\tilde{M}}}  \eqsp.
\end{equation}
Note 
that $\lambda_\Theta$ is  a finite measure. Therefore by the Lebesgue decomposition theorem (see
\cite[Section 6.10]{rudin:1987}) there exist two  measures
$\lambda_\Theta^{(\text{a})}, \lambda_\Theta^{(\text{s})}$ on
$(\rset^d,\borelSet(\rset^d))$, which are absolutely continuous and
singular with respect to the Lebesgue measure on $\rset^d$
respectively, such that $\lambda_\Theta = \lambda_\Theta^{(\text{a})} +
\lambda_\Theta^{(\text{s})}$.
\begin{proposition}\label{le:simple}
  Let $\ouvert\subset\rset^d$ be open and $\Theta: \ouvert\to \rset^d$ be a Lipschitz function
  satisfying \eqref{eq:condition_sigma_finite}.
 For any version  $\phi_\Theta$ of the density of $\lambda_\Theta^{(\text{a})}$ with respect to the Lebesgue
  measure on $\rset^d$, it holds
$$
\phi_\Theta(y)\geq \1_{ \ball{y_0}{M}}(y) \norm{\Theta}_{\Lip}^{-d}\eqsp, \quad \text{$\Leb$-a.e.}
$$
\end{proposition}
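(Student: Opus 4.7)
The plan is to derive the pointwise inequality $\phi_\Theta \geq \|\Theta\|_{\Lip}^{-d}$ on $\ball{y_0}{M}$ from two ingredients: the classical Lipschitz area estimate, which gives a lower bound on $\lambda_\Theta$ evaluated on small balls; and the Lebesgue differentiation theorem applied to the Lebesgue decomposition $\lambda_\Theta = \lambda_\Theta^{(\text{a})} + \lambda_\Theta^{(\text{s})}$.

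First, I would show the uniform lower bound
\begin{equation*}
\lambda_\Theta(\ball{y}{r}) \geq \|\Theta\|_{\Lip}^{-d} \Leb(\ball{y}{r}),
\end{equation*}
valid for every $y \in \ball{y_0}{M}$ and every $r>0$ small enough that $\ball{y}{r} \subset \ball{y_0}{M}$. To see this, set $E \defeq \Theta^{-1}(\ball{y}{r}) \cap \ball{\tilde y_0}{\tilde M}$, which is Borel measurable because $\Theta$ is continuous. By the hypothesis \eqref{eq:condition_sigma_finite} every point of $\ball{y}{r} \subset \ball{y_0}{M}$ admits a preimage in $\ball{\tilde y_0}{\tilde M}$, so $\Theta(E) = \ball{y}{r}$. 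The standard Lipschitz area inequality $\Leb(\Theta(E)) \leq \|\Theta\|_{\Lip}^d \Leb(E)$ (a direct consequence of covering arguments for Lipschitz maps between Euclidean spaces) then yields $\Leb(\ball{y}{r}) \leq \|\Theta\|_{\Lip}^d \lambda_\Theta(\ball{y}{r})$, which is the claimed bound.

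Next, I would pass from balls to the density. Since $\lambda_\Theta^{(\text{s})}$ is singular with respect to $\Leb$, a classical fact (see, e.g., \cite[Theorem~7.15]{rudin:1987}) gives $\lim_{r \to 0^+} \lambda_\Theta^{(\text{s})}(\ball{y}{r})/\Leb(\ball{y}{r}) = 0$ for $\Leb$-a.e.\ $y$, while the Lebesgue differentiation theorem gives $\lim_{r \to 0^+} \lambda_\Theta^{(\text{a})}(\ball{y}{r})/\Leb(\ball{y}{r}) = \phi_\Theta(y)$ for $\Leb$-a.e.\ $y$. Combining these two limits with the bound from the previous step, we obtain for $\Leb$-a.e.\ $y \in \ball{y_0}{M}$,
\begin{equation*}
\phi_\Theta(y) = \lim_{r \to 0^+} \frac{\lambda_\Theta(\ball{y}{r})}{\Leb(\ball{y}{r})} \geq \|\Theta\|_{\Lip}^{-d},
\end{equation*}
which is the claim (the boundary $\partial \ball{y_0}{M}$ being $\Leb$-null).

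The argument is essentially routine once the Lipschitz area inequality is invoked, so there is no real obstacle; the only slightly delicate point is making sure measurability of $E$ is clear and that the hypothesis \eqref{eq:condition_sigma_finite} is used precisely to guarantee $\Theta(E) = \ball{y}{r}$ (not merely $\Theta(E) \subseteq \ball{y}{r}$), since otherwise we could only bound $\Leb(\Theta(E))$ and not $\Leb(\ball{y}{r})$.
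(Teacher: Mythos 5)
Your proof is correct, and it reaches the same key ball estimate $\lambda_\Theta(\ball{y}{r})\geq \norm{\Theta}_{\Lip}^{-d}\Leb(\ball{y}{r})$ followed by the same differentiation-of-measures step as the paper, but the estimate itself is obtained by a different mechanism. The paper argues pointwise: it picks one preimage $z\in\ball{\tilde y_0}{\tilde M}$ of $y$, notes that Lipschitz continuity forces $\Theta(\ball{z}{\delta/L}\cap\open)\subset\ball{y}{\delta}$, and hence that $\Theta^{-1}(\ball{y}{\delta})\cap\ball{\tilde y_0}{\tilde M}$ contains the ball $\ball{z}{\delta/L}$ for all $\delta\leq\delta_0(y)$; this needs nothing beyond the definition of a Lipschitz map, but it requires choosing a preimage with some room around it inside $\ball{\tilde y_0}{\tilde M}$. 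You instead apply the Lipschitz measure-expansion (area/covering) inequality to the entire set $E=\Theta^{-1}(\ball{y}{r})\cap\ball{\tilde y_0}{\tilde M}$, using the hypothesis \eqref{eq:condition_sigma_finite} only through the surjectivity $\Theta(E)\supseteq\ball{y}{r}$. This is a slightly heavier tool (the paper's route avoids any covering argument), but it buys robustness: you never have to select a particular preimage or worry about whether a small ball around it stays inside $\ball{\tilde y_0}{\tilde M}$, and only the inclusion $\Theta(E)\supseteq\ball{y}{r}$ matters, not the equality you state. Two small points to phrase carefully: the image $\Theta(E)$ of a Borel set need not obviously be Borel, so the area inequality should be invoked for outer measure, i.e. $\Leb(\ball{y}{r})\leq\Leb^*(\Theta(E))\leq\norm{\Theta}_{\Lip}^{d}\Leb(E)$, which is all you need; and your restriction to $r$ with $\ball{y}{r}\subset\ball{y_0}{M}$ confines the bound to interior points of $\ball{y_0}{M}$, which is harmless since, as you note, the boundary is $\Leb$-null. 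The final passage via the Lebesgue differentiation theorem and the vanishing a.e.\ derivative of the singular part is exactly the paper's argument.
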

\begin{proof}
  Denote by $L = \norm{\Theta}_{\Lip}$. Let $y\in \ball{y_0}{M}$. By  \eqref{eq:condition_sigma_finite}, we may pick
  $z \in \ball{\tilde{y}_0}{\tilde{M}}$ such that $\Theta(z) = y$. Let
  $\delta_0 >0$ be such that $\ball{z}{\delta_0/L} \subset
  \ball{\tilde{y}_0}{\tilde{M}}$.  Since $\Theta$ is
  Lipschitz continuous,  for all
  $\delta \in \rset_+^*$,
  $\Theta(\ball{z}{\delta/L} \cap \open)\subset \ball{y}{\delta}$. Hence, for all
  $\delta \in \ocint{0,\delta_0}$, we have
  $$
\lambda_\Theta(\ball{y}{\delta})\geq
  L^{-d}\Leb(\ball{z}{\delta}) =   L^{-d}\Leb(\ball{y}{\delta}) \eqsp.
$$
 The claim follows from the differentiation theorem
  for measures, see \cite[Theorem 7.14]{rudin:1987}.
\end{proof}

We can now state our main results. Let $\rassG,\MassG \in \rset^*_+$ and $y_0 \in \rset^d$.
Consider the following assumptions.
\begin{assumptionG}
\label{assumG:phi}
$\phib$ and $\alphagen$ are lower semicontinuous and positive on $\rset^d$ and $\rset^{2d}$ respectively.
\end{assumptionG}


\begin{assumptionG}[$\rassG,y_0,\MassG$]
  \label{assumG:irred_b}
  \begin{enumerate}[label=(\roman*), wide, labelwidth=!, labelindent=0pt]
  \item \label{assumG:irred_b_item_i} There exists $\constLiphx \in \rset_+$ such that for all $x \in \ball{0}{\rassG}$, $\hfunb_x$ is
    $\constLiphx$-Lipschitz, \ie~for all $z_1,z_2 \in \rset^d$,
    $\norm{\hfunb_x(z_1)-\hfunb_x(z_2)} \leq \constLiphx
    \norm{z_1-z_2}$.
\item \label{assumG:irred_b_item_ii} There exist $\tilde{y}_0 \in \rset^d$ and $\tMassG \in \rset_+^*$,
  such that for all $x \in \ball{0}{\rassG}$, $\ball{y_0}{\MassG} \subset \hfunb_x(\ball{\tilde{y}_0}{\tMassG})$.
  \end{enumerate}
\end{assumptionG}

\begin{theorem}
\label{theo:irred}
Assume \Cref{assumG:phi} and that there exist $y_0 \in \rset^d$, $R > 0$ and $M > 0$ such that \Cref{assumG:irred_b}($\rassG,y_0,\MassG$)  is satisfied. Then $\ball{0}{\rassG}$ is $1$-small for $\Pkerb$: for all $x \in \ball{0}{\rassG}$ and $\eventA \in \borelSet(\rset^d)$,
  \begin{equation}
    \Pkerb(x,\eventA) \geq \constLiphx^{-d} \min_{(x,z) \in \ball{0}{R} \times \ball{\tilde{y}_0}{\tMassG}} \defEns{\alphagen(x,z) \phib(z)} \Leb\defEns{\eventA \cap \ball{y_0}{\MassG}} \eqsp,
  \end{equation}
where $(\tilde{y}_0,\tilde{M}) \in \rset^d \times \rset_+^*$ is defined in \Cref{assumG:irred_b}($\rassG,y_0,\MassG$).
\end{theorem}

\begin{proof}
For all $x  \in \ball{0}{\rassG}$ and $\eventA \in \borelSet(\rset^d)$ we get
\begin{align}
\nonumber
\Pkerb(x,\eventA)  &= \int_{\rset^d}\indi{\eventA}{\hfunb(x,z)} \alphagen(x,z) \phib(z) \rmd z  = \int_{\rset^d}\indi{\hfunb_x^{-1}(\eventA)}{z} \alphagen(x,z) \phib(z) \rmd z \\
&\geq  \min_{(x,z) \in \ball{0}{R} \times \ball{\tilde{y}_0}{\tilde{M}}} \defEns{\alphagen(x,z)\phib(z)} \Leb\defEns{\hfunb_x^{-1}(\eventA) \cap\ball{\tilde{y}_0}{\tilde{M}}} \eqsp.
\label{eq:coro_leb_irred_1}
\end{align}
The proof follows from \Cref{le:simple} and \Cref{assumG:irred_b}$(R,y_0,M)$-\ref{assumG:irred_b_item_i} which imply
$ \Leb\defEns{\hfunb_x^{-1}(\eventA )\cap \ball{\tilde{y}_0}{\tilde{M}}} \geq  \constLiphx^{-d}  \Leb\defEns{\eventA  \cap \ball{y_0}{M}}$.
\end{proof}
The following Corollary is a straightforward consequence of \Cref{theo:irred}.
\begin{corollary}
\label{coro:irred}
Assume \Cref{assumG:phi} and  that there exists $(y_0,M) \in \rset^d \times \rset_+^*$ such that  for all $\rassG \in \rset_+^*$ \Cref{assumG:irred_b}($\rassG,y_0,\MassG$). Then $\Pkerb$ is irreducible with irreducibility measure $\Leb\defEns{\cdot \cap \ball{y_0}{\MassG}} $. In addition, all the compact sets are $1$-small.
\end{corollary}
In the next proposition, we give examples of functions $f$ which satisfy \Cref{assumG:irred_b}.
\begin{proposition}\label{le:degree_application}
Let  $\ga$ a function from $ \rset^d\times \rset^d$ to $\rset^d$ and $\ra \in \rset^{*}_+$. Assume that
\begin{enumerate}[label=(\roman*)]
\item
\label{propo:irred_b_item_i}
 there exists $\lipgr \in \rset_+$  such that for all $z_1,z_2,x \in \rset^d$, $\norm{x} \leq \ra$,
  \begin{equation}
    \label{eq:5}
    \norm{g(x,z_1) - g(x,z_2)} \leq \lipgr\norm{z_1-z_2} \eqsp.
  \end{equation}
\item
\label{propo:irred_b_item_ii}
there exist $ \Cga_{\ra,0} , \Cga_{\ra,1}
\in \rset_+$ such that for all $x,z \in \rset^d$, $\norm{x} \leq \ra$
\begin{equation}
  \label{eq:4}
\norm{g(x,z)} \leq  \Cga_{\ra,0} +   \Cga_{\ra,1} \norm{z}
\end{equation}
\end{enumerate}

Let $\bg \in \rset$ and define $\hga : \rset^d \times \rset^d$ for all $x,z \in \rset^d$ by
\begin{equation}
  \hga(x,z) =  \bg z + \ga(x,z) \eqsp.
\end{equation}
If $\norm{\bg} > \Cga_{\ra,1} $, then $\hga$ satisfies \Cref{assumG:irred_b}($\ra,0,\MassG$) for all $\MassG \in \rset_+^*$ with $\tilde{y}_0=0$ and 
\begin{equation}
  \label{eq:deftildeM}
\tilde{M} = \{M  + \Cga_{\ra,0} \}/(\norm{\bg}-\Cga_{\ra,1} ) \eqsp.
\end{equation}
\end{proposition}
We preface the proof by recalling some basic notions of degree theory.
\label{sec:defin-usef-results}
Let $\Dset$ be a bounded open set of $\rset^d$. Let $f:
\Dsetc \to \rset^d$ be a continuous function on
$\Dsetc$ continuously differentiable on $\Dset$. An element $x \in
\Dset$ is said to be a \emph{regular point} of $f$ if the Jacobian matrix of $f$ at $x$, $\Jac_f(x)$, is invertible.
An element $y \in f(\Dset)$ is said to be a \emph{regular value} of $f$ if any $x \in
f^{-1}(\{ y\})$ is a regular point.  

Let $f : \Dsetc \to \rset^d$ be a continuous function,  $C^{\infty}$-smooth on $\Dset$. Let $y \in \rset^d
  \setminus f(\partial \Dset)$ be a regular value of $f$. It is shown in \cite[Proposition and Definition 1.1]{outerelo:ruiz:2009} that the set $f^{-1}(\{y\})$ is finite. The degree of $f$ at $y$ is defined by
\begin{equation}
  \deg(f,\Dset,y) = \sum_{x \in f^{-1}(\{y \})} \sign\defEns{\det \parenthese{\Jac_f(x)}} \eqsp.
\end{equation}

\begin{proposition}[\protect{\cite[Proposition and Definition 2.1]{outerelo:ruiz:2009}}]
\label{defProp:degree_cont}
  Let $f : \Dsetc \to \rset^d$ be a continuous function and $y \in
  \rset^d \setminus f(\partial \Dset)$.
  \begin{enumerate}[label=(\alph*)]
  \item
\label{defProp:degree_cont_i}
 Then there exists  $g \in C(\Dsetc, \rset^d) \cap C^{\infty}(\Dset, \rset^d)$ such that $y$ is a regular value of $g$
  and $\sup_{x \in \Dsetc} \abs{f(x)-g(x)} < \dist(y,f(\partial
  \Dset))$.
\item For all functions $g_1,g_2:\Dsetc \to \rset^d$ satisfying \ref{defProp:degree_cont_i},
  \begin{equation}
    \deg(g_1,\Dset,y) = \deg(g_2,\Dset,y) \eqsp.
  \end{equation}
  \end{enumerate}
\end{proposition}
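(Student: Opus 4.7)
The plan is to prove (a) by a standard mollification-plus-Sard argument, and (b) by a straight-line homotopy whose preimage of a jointly regular value forms a compact $1$-manifold to which the cobordism principle applies. For (a), set $\delta := \dist(y, \boundary{\Dset}\text{-image of }f) = \dist(y, f(\boundary{\Dset}))$, which is strictly positive since $\boundary{\Dset}$ is compact (as $\Dset$ is bounded), $f$ is continuous, and $y \notin f(\boundary{\Dset})$. Extend $f$ continuously to all of $\rset^d$ (Tietze) and convolve with a standard mollifier $\rho_\epsilon$ to obtain $g_0 \in C^{\infty}(\rset^d, \rset^d)$ with $\sup_{x \in \Dsetc} \norm{f(x) - g_0(x)} < \delta/3$ for $\epsilon$ small. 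Applying Sard's theorem to $g_0$ restricted to $\Dset$, its critical values have Lebesgue measure zero, hence regular values of $g_0$ are dense; pick a regular value $y'$ of $g_0$ with $\norm{y - y'} < \delta/3$ and set $g := g_0 + (y - y')$. Then $\Jac_g \equiv \Jac_{g_0}$ on $\Dset$ and $g^{-1}(\{y\}) = g_0^{-1}(\{y'\})$, so $y$ is a regular value of $g$, while $\sup_{x \in \Dsetc} \norm{f(x) - g(x)} \leq \sup_{x \in \Dsetc} \norm{f(x) - g_0(x)} + \norm{y - y'} < 2\delta/3 < \delta$, proving (a).

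For (b), given $g_1, g_2 : \Dsetc \to \rset^d$ both satisfying (i), form the straight-line homotopy $H(t,x) := (1-t) g_1(x) + t g_2(x)$, continuous on $[0,1] \times \Dsetc$ and smooth on $(0,1) \times \Dset$. Convexity of the norm gives $\norm{H(t,x) - f(x)} < \delta$ for every $x \in \boundary{\Dset}$ and $t \in [0,1]$, so $H$ never takes the value $y$ on $[0,1] \times \boundary{\Dset}$, and by compactness the same holds for every $y''$ in a small enough neighborhood of $y$. Apply Sard's theorem to $H$ viewed as a smooth map between manifolds of dimensions $d+1$ and $d$: its regular values are dense. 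Combined with the inverse function theorem (regular values of each $g_i$ near $y$ form an open neighborhood of $y$), pick $y''$ arbitrarily close to $y$ that is simultaneously a regular value of $H$, $g_1$ and $g_2$, and is still avoided by $H([0,1] \times \boundary{\Dset})$. Then $M := H^{-1}(\{y''\}) \cap ([0,1] \times \Dset)$ is a compact smooth $1$-manifold with boundary $\boundary{M} = \bigl(\{0\} \times g_1^{-1}(\{y''\})\bigr) \sqcup \bigl(\{1\} \times g_2^{-1}(\{y''\})\bigr)$, and a signed-orientation count of the endpoints of its arc components (classification of compact $1$-manifolds) forces $\deg(g_1, \Dset, y'') = \deg(g_2, \Dset, y'')$. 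The same cobordism argument applied to a constant homotopy shows that $y \mapsto \deg(g_i, \Dset, y)$ is locally constant at regular values inside each connected component of $\rset^d \setminus g_i(\boundary{\Dset})$, so $\deg(g_i, \Dset, y'') = \deg(g_i, \Dset, y)$ for $i=1,2$. Concatenation yields the conclusion of (b).

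The central obstacle is the cobordism step: one must orient $M$ coherently and track $\sign(\det \Jac_{g_i})$ at each boundary point, invoking that a compact oriented $1$-manifold has vanishing signed count of boundary points. There is also an apparent circularity with the local constancy of $y \mapsto \deg(g, \Dset, y)$ at regular values, which is best resolved by extracting both facts simultaneously from a single parameterized application of Sard's theorem and the classification of $1$-manifolds. Since these are standard facts in degree theory, we would follow the exposition of \cite{outerelo:ruiz:2009} rather than reproving them in full detail here.
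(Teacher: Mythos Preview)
The paper does not prove this proposition at all: it is quoted verbatim as \cite[Proposition and Definition 2.1]{outerelo:ruiz:2009} and used as a black box to define the degree of a merely continuous map. There is therefore nothing to compare against on the paper's side.

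Your argument is the standard one and is essentially correct. Part (a) via Tietze extension, mollification, Sard, and a constant shift is clean. Part (b) via the straight-line homotopy and the cobordism/1-manifold argument is also the textbook route; your handling of the boundary issue (that $H$ avoids $y$, hence a neighborhood of $y$, on $[0,1]\times\boundary{\Dset}$, so $M$ stays in a compact subset of $[0,1]\times\Dset$ where everything is smooth) is the right idea. One minor point: $H$ is actually smooth on all of $[0,1]\times\Dset$, not just $(0,1)\times\Dset$, since it is affine in $t$; this makes the manifold-with-boundary structure of $M$ at $t\in\{0,1\}$ immediate. The orientation bookkeeping you flag is indeed the only place where real care is needed, and deferring to \cite{outerelo:ruiz:2009} for that is exactly what the paper itself does for the entire statement.
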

Under the assumptions of \Cref{defProp:degree_cont}, the degree of $f$ at $y$ is then defined for any $g:\Dsetc \to \rset^d$ satisfying \ref{defProp:degree_cont_i} by
\begin{equation}
  \deg(f,\Dset,y) =  \deg(g,\Dset,y) \eqsp.
\end{equation}

\begin{proposition}[\protect{\cite[Proposition
  2.4]{outerelo:ruiz:2009}}]
  \label{theo:deg_modif}
  Let $f,g : \Dsetc \to \rset^d$ be  continuous functions. Define
  $\hpy:\ccint{0,1} \times \rset^d \to \rset^d$ for all $t \in
  \ccint{0,1}$ and $x \in \rset^d$ by $\hpy(t,x) = t f(x) +
  (1-t)g(x)$. Let $y \in \rset^d \setminus \hpy(\ccint{0,1} \times \partial \Dset)$. Then
\begin{equation}
  \deg(f,\Dset,y) =  \deg(g,\Dset,y) \eqsp.
\end{equation}
\end{proposition}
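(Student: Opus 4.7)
My plan is to prove the homotopy invariance of the Brouwer degree by the classical differential-topology route: reduce to the smooth case using \Cref{defProp:degree_cont}, and then realize both degrees as the signed boundary count of a single compact oriented $1$-manifold. As a preliminary, note that $\ccint{0,1}\times\partial\Dset$ is compact, so continuity of $\hpy$ gives $\delta:=\dist(y,\hpy(\ccint{0,1}\times\partial\Dset))>0$; this positive gap is the cushion that will let me perturb $f$ and $g$ without violating the hypothesis $y\notin\hpy(t,\partial\Dset)$ for any $t$.

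First, I would invoke \Cref{defProp:degree_cont}\ref{defProp:degree_cont_i} to pick $\tilde f,\tilde g\in C(\Dsetc,\rset^d)\cap C^{\infty}(\Dset,\rset^d)$ within $\delta/3$ of $f$ and $g$ in sup-norm over $\Dsetc$, each admitting $y$ as a regular value. The well-definedness part of \Cref{defProp:degree_cont} then gives $\deg(f,\Dset,y)=\deg(\tilde f,\Dset,y)$ and $\deg(g,\Dset,y)=\deg(\tilde g,\Dset,y)$, reducing the problem to $\tilde f$ and $\tilde g$. The linear interpolation $\widetilde{\hpy}(t,x):=t\tilde f(x)+(1-t)\tilde g(x)$ is smooth on $\ccint{0,1}\times\Dset$ and satisfies $\sup_{\ccint{0,1}\times\Dsetc}\norm{\hpy-\widetilde{\hpy}}\le\delta/3$, so $\dist(y,\widetilde{\hpy}(\ccint{0,1}\times\partial\Dset))\ge 2\delta/3>0$. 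By a further small smooth perturbation of $\widetilde{\hpy}$ on $(0,1)\times\Dset$ that leaves the values at $t=0$ and $t=1$ untouched, combined with Sard's theorem, I may assume that $y$ is also a regular value of $\widetilde{\hpy}$ viewed as a smooth map on the open set $(0,1)\times\Dset$.

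The preimage theorem then guarantees that $W:=\widetilde{\hpy}^{-1}(\{y\})\cap(\ccint{0,1}\times\Dset)$ is a compact smooth $1$-manifold whose boundary lies entirely in $\{0,1\}\times\Dset$; the distance gap above forces $W$ to stay away from $\ccint{0,1}\times\partial\Dset$. Orient $W$ by the standard convention inherited from the canonical orientation of $\ccint{0,1}\times\Dset$, so that the induced boundary sign at each $(1,x)$ with $x\in\tilde f^{-1}(\{y\})$ equals $\sign\det(\Jac_{\tilde f}(x))$ and at each $(0,x)$ with $x\in\tilde g^{-1}(\{y\})$ equals $-\sign\det(\Jac_{\tilde g}(x))$. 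Every connected component of $W$ is either a circle (which contributes nothing to the boundary) or a closed arc with exactly two signed endpoints, and the algebraic count of the boundary of a compact oriented $1$-manifold vanishes. Summing the endpoint signs over all components of $W$ therefore yields
\begin{equation*}
\sum_{x\in\tilde f^{-1}(\{y\})}\sign\det(\Jac_{\tilde f}(x))\;=\;\sum_{x\in\tilde g^{-1}(\{y\})}\sign\det(\Jac_{\tilde g}(x)),
\end{equation*}
which is precisely $\deg(\tilde f,\Dset,y)=\deg(\tilde g,\Dset,y)$, and the reduction of the previous paragraph closes the proof.

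The main obstacle is the orientation bookkeeping at the boundary of $W$: one has to verify that the canonical boundary orientation induced on $\partial W$ from a chosen orientation of $W$ reproduces precisely the Jacobian signs in the definition of $\deg$, with the $\pm$ flip between the $t=0$ and $t=1$ slices. This amounts to a local tangent-space computation that splits $T_{(t,x)}(\ccint{0,1}\times\Dset)$ into the $\partial_t$ direction and the spatial directions, and uses that $\widetilde{\hpy}^{-1}(\{y\})$ is cut out transversally at each endpoint. The calculation is standard but sign-sensitive, and is the only step of the argument that demands genuine care.
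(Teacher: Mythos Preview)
The paper does not supply its own proof of this proposition: it is quoted verbatim as \cite[Proposition 2.4]{outerelo:ruiz:2009} and used as a black box in the proof of \Cref{le:degree_application}. Your argument is the classical differential-topology proof of homotopy invariance of the Brouwer degree (reduce to the smooth regular-value case via \Cref{defProp:degree_cont}, then read off the equality of degrees from the vanishing of the signed boundary of the compact oriented $1$-manifold $\widetilde{\hpy}^{-1}(\{y\})$), and it is correct as sketched. The only point worth flagging is that when you say ``by a further small smooth perturbation of $\widetilde{\hpy}$ on $(0,1)\times\Dset$ that leaves the values at $t=0$ and $t=1$ untouched'' you are implicitly invoking the relative transversality/extension theorem (regularity of $y$ for the boundary restrictions $\tilde f,\tilde g$ allows one to perturb rel boundary to make $y$ regular for the full homotopy); this is standard but is the one step, together with the orientation bookkeeping you already acknowledge, that would need to be spelled out in a fully self-contained write-up.
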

We have now all the necessary results to prove \Cref{le:degree_application}.
\begin{proof}[Proof of \Cref{le:degree_application}]
Since $\hga(x,z) =  \bg z + \ga(x,z)$ and $\ga(x,\cdot)$ is Lipschitz with a Lipschitz constant which is uniformly bounded over the ball $\ball{0}{R}$,  $\hga_x$ is Lipschitz with bounded Lipschitz constant over this ball. Hence \Cref{assumG:irred_b}($\ra,0,\MassG$)-\ref{assumG:irred_b_item_i} holds.

  For all $x \in \rset^d$, denote by $\hga_x : z \mapsto \hga(x,z)$ where $\hga(x,z)=bz + g(x,z)$.
  Let $\MassG \in \rset_+^*$. We show that for all $x \in
  \ball{0}{\ra}$, $\ball{0}{\MassG} \subset
  \hga_x(\ball{0}{\tMassG})$, where $\tMassG$ is given by
  \eqref{eq:deftildeM}, which is precisely
  \Cref{assumG:irred_b}($\ra,0,\MassG$)-\ref{assumG:irred_b_item_ii}.

  Let $x \in \ball{0}{\ra}$ and consider the continuous homotopy $\hog : \ccint{0,1}
\times \rset^d$ between the functions $z \mapsto \bg z$ and $\hga_x$ defined for all
$t \in \ccint{0,1}$ and $z \in \rset^d$ by
\begin{equation}
  \hog(t,z) = t \bg z + (1-t)\hga_x(z) = \bg z + (1-t)  \ga(x,z)  \eqsp.
\end{equation}
Then by \ref{propo:irred_b_item_ii}, since $\abs{\bg} \geq \Cga_{\ra,1} $, for all $t\in \ccint{0,1}$ and $z \not \in
\ball{0}{\tMassG}$, where $\tMassG$ is given by \eqref{eq:deftildeM},
\begin{equation}
   \abs{\hog(t,z)} \geq  \abs{\bg z} -(1-t)\defEns{\Cga_{\ra,0} +\Cga_{\ra,1} \abs{z} } \geq \MassG \eqsp.
\end{equation}
In particular, we have $\hog(\ccint{0,1} \times \partial
\ball{0}{\tMassG}) \subset \rset^d \setminus \ball{0}{\MassG}$. Let
$z \in \ball{0}{\MassG}$, then by
\Cref{theo:deg_modif} we have
\begin{equation}
  \deg(\hga_x,\ball{0}{\tMassG},z) = \deg(\bg \Id, \ball{0}{\tMassG},z) = 1 \eqsp.
\end{equation}
Besides, by \cite[Corollary 2.5, Chapter IV]{outerelo:ruiz:2009},
$\deg(\hga_x,\ball{0}{\tMassG},z) \not = 0$ implies that there exists
$y \in \ball{0}{\tMassG}$ such that $\hga_x(y) = z$. Finally \Cref{assumG:irred_b}($\ra,0,\MassG$)-\ref{assumG:irred_b_item_ii} follows since this
result holds for all $z \in \ball{0}{\MassG}$.
\end{proof}

\section{Proofs}
\label{sec:postponed-proofs}
 In the sequel, $C \geq 0$
  is a constant which can change from line to line but does not depend
  on $h$. Let $h >0$ and $T \in \nset^*$. Note that a simple induction (see \cite[Proposition 4.2]{livingstone:betancourt:byrne:girolami:2016})  implies that for all $(\q_0,\p_0) \in \rset^d
\times \rset^d$ and $k \in \{1,\ldots T\}$, the $k^{\text{th}}$
iteration of the leap-frog integration, $(q_k,p_k) = \Phiverlet[h][k](\q,\p)$, where $\Phiverlet[h][k]$ is defined by \eqref{eq:def_Phiverlet}, takes the form
\begin{align}
\label{eq:qk}
\q_k&=\q_0+kh\p_0-\frac{kh^2}{2} \nabla \F(\q_0)-h^2 \gperthmc[k](\q_0,\p_0)\\
\label{eq:pk}
\p_{k}&= \p_0-\frac{h}{2} \defEns{\nabla \F(\q_0)+\nabla \F \circ \Phiverletq[h][k] (\q_0,\p_0)}-h \sum_{i=1}^{k-1}\nabla \F \circ \Phiverletq[h][i](\q_0,\p_0)  \eqsp,
\end{align}
where  $\gperthmc[k] :\rset^d \times \rset^d \to \rset^d$ is given  for all $(\q,\p) \in \rset^d \times
\rset^d$ by
\begin{equation}
  \label{eq:def_gperthmc}
\gperthmc[k](\q,\p) =    \sum_{i=1}^{k-1}(k-i)\nabla \F \circ \Phiverletq[h][i](\q,\p) \eqsp.
\end{equation}

We prefaces the proofs of our main results by useful bounds  on
  the position and the momentum in the intermediate steps of the leap-frog integration.

  \begin{lemma}
    \label{lem:bound_first_iterate_leapfrog_a}
  Assume \Cref{assum:regOne}$(\expozero)$-\ref{assum:regOne_a}. Then, for any $k \in \nsets$, $h \geq 0$, $(q_0,p_0) \in \rset^{2d}$ and $(x_0,v_0) \in \rset^{2d}$,
  \begin{align}
   & \norm{q_k-x_k} + \constzero^{-1/2} \norm{p_k-v_k} \\
    & \qquad \qquad \leq \defEns{1+h \constzero^{1/2} \vartheta_1(h \constzero^{1/2})}^{k} \defEns{\norm{q_0-x_0} + \constzero^{-1/2} \norm{p_0-v_0}} \eqsp,
  \end{align}
  where $(q_k,p_k)= \Phiverlet[h][k](q_0,p_0)$, $(x_k,v_k)= \Phiverlet[h][k](x_0,v_0)$ and  $\Phiverlet[h][k]$ and $\vartheta_1$ are defined by \eqref{eq:def_Phiverlet} and \eqref{eq:def_vartheta_1}, respectively.
\end{lemma}
\begin{proof}
  Note that it is sufficient to show the result for $k=1$ and to apply a straightforward induction.
  Let $h >0$, $(q_0,p_0) \in \rset^{2d}$ and $(x_0,v_0) \in \rset^{2d}$. Using \eqref{eq:qk}, the triangle inequality and   \Cref{assum:regOne}$(\expozero)$-\ref{assum:regOne_a}, we first obtain
  \begin{align}
    \nonumber
    \norm{q_1-x_1} & = \norm{q_0 -h^2 \nabla U(q_0) /2 + h p_0 - \defEns{x_0 - h^2/2 \nabla U(x_0) + h v_0}} \\
    \label{eq:bound_q_1_x_1}
         & \leq (1+ h^2 \constzero /2) \norm{q_0 - x_0} + h \norm{p_0 - v_0} \eqsp.
  \end{align}
  Second, similarly using \eqref{eq:pk}, we have that
\begin{align}
\label{eq:bound_p_1_v_1}
&    \norm{p_1-v_1}  \\
&= \norm{p_0-v_0 -(h/2) \defEns{\nabla U(q_1) + \nabla U(q_0)} + (h/2) \defEns{\nabla U(x_1) + \nabla U(x_0)}} \\
\nonumber
&\leq \norm{p_0 - v_0} + (h \constzero/2) \defEns{ \norm{x_1- q_1} + \norm{x_0 - q_0}} \\
\nonumber
& \leq \left(1+h^2 \constzero/2\right) \norm{p_0-v_0} +h\constzero(1+h^2\constzero /4) \norm{q_0-x_0} \eqsp,
\end{align}
  where we have used \eqref{eq:bound_q_1_x_1} for the last inequality. Summing up \eqref{eq:bound_q_1_x_1} and \eqref{eq:bound_p_1_v_1}, we get the desired result for $k=1$.
  \end{proof}
  \begin{lemma}
  \label{lem:bound_first_iterate_leapfrog_b}
Let $\beta \in \ccint{0,1}$ and assume \Cref{assum:regOne}$(\expozero)$-\ref{assum:regOne_b}.
  \begin{enumerate}[label=(\roman*)]
\item
\label{lem:bound_first_iterate_leapfrog_1}
For any $h_0 >0$, $T \in \nsets$, there exists $C < \infty$ (which depends only on $T,h_0$
 and $\constzeroT$) such that for all $h \in \ocint{0,h_0}$,
  $(\q_0,\p_0) \in \rset^d \times \rset^d$ and $k \in \{1,\ldots, T\}$
  \begin{align}
\label{lem:bound_first_iterate_leapfrog_1_q}
    \norm{\q_k-\q_0} &\leq C h\defEns{  \norm{\p_0} +h(1+ \norm{\q_0}^{\expozero})}\\
\label{lem:bound_first_iterate_leapfrog_1_p}
    \norm{\p_k-\p_0}& \leq C h\defEns{  1+ \norm{\p_0}^{\expozero}+ \norm{\q_0}^{\expozero}} \eqsp,
  \end{align}
where $(\q_k,\p_k) = \Phiverlet^{\circ k}_{h}(\q_0,\p_0)$ and  $\Phiverlet^{\circ k}_{h}$ is defined by \eqref{eq:def_Phiverlet}.
\item \label{lem:bound_first_iterate_leapfrog_b_2}
  If in addition \Cref{assum:regOne}$(\expozero)$-\ref{assum:regOne_a} holds, for any $k \in \nsets$, $h >0$, $(q_0,p_0) \in \rset^{2d}$,
  \begin{align}
    &    \norm{q_k-q_0} + \constzero^{-1/2}\norm{p_k - p_0} \leq (\constzero^{1/2}\vartheta_1(h \constzero^{1/2}))^{-1} \\
    & \quad \times \defEns{(1+h \constzero^{1/2}\vartheta_1(h \constzero^{1/2}))^{k+1} - 1} \defEns{\vartheta_2(h) (\norm[\beta]{q_0} +1) + \vartheta_3(h) \norm{p_0}} \eqsp,
  \end{align}
  where $\vartheta_1$ is defined by \eqref{eq:def_vartheta_1} and
  \begin{align}
  \label{eq:definition-vartheta-2}
    \vartheta_2(h) &= \constzeroT/\constzero^{1/2} + \constzeroT h/2 + \constzero^{1/2}\constzeroT h^2/4 \eqsp, \\
  \label{eq:definition-vartheta-3}
    \vartheta_3(h) &= 1+\constzero^{1/2} h /2 \eqsp.
  \end{align}
  \end{enumerate}
\end{lemma}
\begin{proof}
  \begin{enumerate}[label={(\roman*)},wide=0pt, labelindent=\parindent]
  \item Let $T \in \nsets$ and $h_0 >0$.  We prove by induction that for all $k \in \{1,\ldots,T\}$ there exists $C_k \geq 0$ (which depends only on $T,h_0$
and $\constzeroT$) such that for all $h \in \ocint{0,h_0}$ and
  $(q_0,p_0) \in \rset^d \times \rset^d$
\begin{equation}
\label{lem:bound_first_iterate_leapfrog_1_q}
\begin{aligned}
\norm{q_k-q_0} \leq C_k h\defEns{  \norm{p_0} +h(1+ \norm{q_0}^{\expozero})} \\
\norm{p_k-p_0} \leq C_k h \defEns{ 1 + \norm{p_0}^{\expozero} + \norm{q_0}^{\expozero}} \eqsp.
\end{aligned}
\end{equation}
  where $(q_k,p_k) = \Phiverlet^{\circ k}_{h}(q_0,p_0)$.
 Let $ h \in \ocint{0, h_0}$ and $(q_0,p_0) \in \rset^d \times
    \rset^d$.
 The case $k=1$ is immediate by
\Cref{assum:regOne}($\beta$)-\ref{assum:regOne_b} and \eqref{eq:qk}. Let $k \in \{1,\cdots,
T-1\}$ and assume that the inequalities hold for all $i \in
\{1,\dots, k\}$. Then by \eqref{eq:qk} and
\Cref{assum:regOne}($\beta$)-\ref{assum:regOne_b}, we get
\begin{align}
\label{eq:lem:bound_first_iterate_leapfrog_1}
\norm{q_{k+1}-q_0}
&\leq (k+1)h \norm{p_0}+\frac{k+1}{2}h^2 \constzeroT  \defEns{1+\norm{ q_0}^{\expozero} }\\
&\qquad \qquad +h^2 \constzeroT \sum_{i=1}^{k}(k+1-i)\defEns{1+ \norm{q_i}^{\expozero} }\eqsp.
\end{align}
By the induction hypothesis and using that $t \mapsto t^{\expozero}$ is sub-additive on $\rset^+$ and $t^\beta \leq 1 + t$ for $t \in \rset^+$, we get for all $i \in \{1,\cdots, k\}$,
\begin{equation}
\norm{q_i}^{\expozero} \leq \norm{q_0}^{\expozero} + \norm{q_i-q_0}^\beta \leq 1+ \norm{q_0}^{\expozero} +C_i h\defEns{\norm{p_0}+h(1+ \norm{q_0}^{\expozero})} \eqsp,
\end{equation}
Plugging this inequality in \eqref{eq:lem:bound_first_iterate_leapfrog_1}
conclude the proof of \eqref{lem:bound_first_iterate_leapfrog_1_q}.
Consider now \eqref{lem:bound_first_iterate_leapfrog_1_p}. Since by
definition $p_{k+1} = p_k-(h/2)\defEnsLigne{\nabla \F(q_{k}) + \nabla
  \F(q_{k+1})}$, using the triangle inequality,
\Cref{assum:regOne}($\beta$)-\ref{assum:regOne_b},
\eqref{lem:bound_first_iterate_leapfrog_1_q} to bound $\norm{q_{k}}$ and
$\norm{q_{k+1}}$, and the induction hypothesis, we get that there exist some
constants $C_{k+1,1},C_{k+1,2}$ which only depend on $T,h_0$ and
$\constzeroT$ such that
\begin{align}
&  \norm{p_{k+1} - p_0} \leq \norm{p_k-p_0} + (h/2)\defEnsLigne{\norm{\nabla \F(q_{k})} + \norm{\nabla \F(q_{k+1})}} \\
&\quad \leq C_{k+1,1}h\defEns{1+\norm{p_0}^{\expozero}+\norm{q_0}^{\expozero}} + (\constzeroT h/2)\defEns{2+\norm{q_{k}}^{\expozero}+ \norm{q_{k+1}}^{\expozero}}\\
&\quad \leq C_{k+1,1}h\defEns{1+\norm{p_0}^{\expozero}+\norm{q_0}^{\expozero}} + (C_{k+1,2}h/2)\defEns{1+ \norm{q_0}^{\expozero} + \norm{p_0}^{\expozero}} \eqsp.
\end{align}
Therefore, \eqref{lem:bound_first_iterate_leapfrog_1_q} is satisfied which concludes the induction and the proof.
\item Let $k \in \nset$, $h >0$ and $(q_0, p_0) \in \rset^{2d}$. Using \eqref{eq:qk}, the triangle inequality and \Cref{assum:regOne}($\beta$), we have
  \begin{align}
\label{eq:1_lem_bound_first_iterate_leapfrog_b_2}
&    \norm{q_{k+1} - q_0}  = \norm{q_k - q_0 - (h^2/2) \nabla U(q_k) + h p_k} \\
\nonumber
&                          \leq (1+h^2 \constzero/2) \norm{q_k - q_0} + (h^2\constzeroT/2) (\norm[\beta]{q_0} +1) + h \norm{p_k-p_0} + h \norm{p_0}  \eqsp.
  \end{align}
Second, similarly using \eqref{eq:pk}, we get that
\begin{align}
\nonumber
&    \norm{p_{k+1} - p_0}   \leq  \norm{p_k-p_0 + (h/2) \defEns{\nabla U(q_{k+1}) + \nabla U(q_k)} } \\
    \nonumber
                         & \leq \norm{p_k - p_0} + (h \constzero/2) \defEns{ \norm{q_{k+1}- q_0} + \norm{q_k - q_0}} + h \constzeroT \defEns{\norm[\beta]{q_0} +1} \\
    \nonumber
                         & \leq \norm{p_k - p_0}  +  h \constzeroT \defEns{\norm[\beta]{q_0} +1} +(h \constzero/2) \defEns{  + h \norm{p_k-p_0} + h \norm{p_0}} \\
    \label{eq:2_lem_bound_first_iterate_leapfrog_b_2}
&     +(h \constzero/2) \defEns{ (2+\constzero h^2/2) \norm{q_k-q_0} + (h^2\constzeroT/2) (\norm[\beta]{q_0}+1)}  \eqsp.
  \end{align}
  where we have used \eqref{eq:1_lem_bound_first_iterate_leapfrog_b_2} for the last inequality.
  Summing up \eqref{eq:1_lem_bound_first_iterate_leapfrog_b_2} and \eqref{eq:2_lem_bound_first_iterate_leapfrog_b_2} and using the definition \eqref{eq:def_vartheta_1} of $\vartheta_1(h)$, we get that, setting $A_k= \norm{q_k-q_0}  + \constzero^{-1/2} \norm{p_k-p_0}$,
  \[
  A_{k+1} \leq (1+h\constzero^{1/2}\vartheta_1(h\constzero^{1/2})) A_k    + h \defEns{ \vartheta_2(h) (\norm[\beta]{q_0} +1) +  \vartheta_3(h) \norm{p_0}} \eqsp.
\]
  By a straightforward induction, we obtain that
  \begin{equation}
    A_{k+1} \leq \sum_{i=1}^{k+1} \parentheseDeux{(1+h \constzero^{1/2} \vartheta_1(h \constzero^{1/2} ))^{k+1-i} h \defEns{ \vartheta_2(h) (\norm[\beta]{q_0} +1) +  \vartheta_3(h) \norm{p_0} }} \eqsp,
  \end{equation}
  which completes the proof of \ref{lem:bound_first_iterate_leapfrog_b_2}.
\end{enumerate}
\end{proof}

\begin{lemma}
\label{lem:inverse_1}
Let $\beta \in \ccint{0,1}$ and assume \Cref{assum:regOne}($\beta)$. Then for any $T \in \nsets$, $h >0$,
\begin{align}
\label{eq:lem:inverse_1}
  &  \sup_{(\q,\p,v) \in \rset^{3d}} \{ \norm{ \gperthmc[T](q,\p) - \gperthmc[T](q,v)} / \norm{\p-v} \} \\
  & \qquad \qquad \qquad \qquad  \leq   (T/h)  \left\{ (1+\ h \constzero^{1/2} \vartheta_1(h \constzero^{1/2}))^T - 1 \right\} \eqsp,
\end{align}
where  $\gperthmc[T]$ and $\vartheta_1$ are defined by  \eqref{eq:def_gperthmc} and \eqref{eq:def_vartheta_1} respectively.
  In addition, for any $q \in \rset^d$,
  \begin{equation}
    \label{eq:lem:inverse_2}
    \norm{\gperthmc[T](q,0)} \leq  (T/h)  \{ (1+h \constzero^{1/2} \vartheta_1(h \constzero^{1/2} ))^{T}-1\}  \vartheta_2(h) (\norm[\beta]{q} +1)  + T^2 \norm{\nabla U(q)} \eqsp,
  \end{equation}
  where $\vartheta_2$ is defined in \eqref{eq:definition-vartheta-2}.
\end{lemma}
\begin{proof}
By
\Cref{lem:bound_first_iterate_leapfrog_a}, for any $i \in \nsets$, we get
\[
\sup_{(q,\p,v) \in \rset^{3d}} \defEns{ \normLigne{ \Phiverletq[h][i](q,\p) - \Phiverletq[h][i](q,v)} / \norm{\p-v} }
\leq \constzero^{-1/2} A^i
\]
where $A=(1+h \constzero^{1/2} \vartheta_1(h \constzero^{1/2} ))$.
Therefore by definition of $\gperthmc[T]$ \eqref{eq:def_gperthmc} and using  \Cref{assum:regOne}($\beta$), for any $h >0$, $T \in \nsets$, we get that
\begin{align}
&  \sup_{(\q,\p,v) \in \rset^{3d}} \{ \norm{ \gperthmc[T](q,\p) - \gperthmc[T](q,v)} / \norm{\p-v} \} \\
&\qquad  \leq \constzero   \sum_{k=1}^{T-1} (T-i) \sup_{(q,\p,v) \in \rset^{3d}} \defEns{ \normLigne{ \Phiverletq[h][i](q,\p) - \Phiverletq[h][i](q,v)} / \norm{\p-v} }\\
& \qquad \leq    T \parentheseDeux{ A^{T} - 1 } /(h \vartheta_1(\constzero^{1/2} h)) \,
\end{align}
showing \eqref{eq:lem:inverse_1} since $\vartheta_1(h\constzero^{1/2}) \geq 1$.

We now consider \eqref{eq:lem:inverse_2}. By \eqref{eq:def_gperthmc}, \Cref{assum:regOne}($\beta$)-\ref{assum:regOne_a} and \Cref{lem:bound_first_iterate_leapfrog_b}-\ref{lem:bound_first_iterate_leapfrog_b_2}, we have that for any $q \in \rset^d$,
\begin{align}
&  \norm{\gperthmc[T](q,0)}  \leq \sum_{i=1}^{T-1} (T-i)  \norm{ \nabla U \circ \Phiverletq[h][i](\q,0) - \nabla U(q)} + T^2 \norm{\nabla U(q)} \\
& \leq  T \constzero \sum_{i=1}^{T-1}  \norm{ \Phiverletq[h][i](\q,0) -q} + T^2 \norm{\nabla U(q)} \\
& \quad \leq T \constzero^{1/2} \vartheta_1^{-1} (h \constzero^{1/2}) \sum_{i=1}^{T-1}\{A^{i+1}-1\} \defEns{\vartheta_2(h) (\norm[\beta]{q} +1)}  + T^2 \norm{\nabla U(q)} \eqsp,
\end{align}
which completes the proof of \eqref{eq:lem:inverse_2} using that $\vartheta_1(h \constzero^{1/2}) \geq 1$.
\end{proof}

\begin{lemma}
  \label{lem:bounded_cum_error}
  Assume \Cref{assum:regOne}$(1)$. Then for any $T \in \nsets$, $h >0$, and $q,p \in \rset^d$,
  \begin{align}
    &\sum_{i=1}^{T}  \norm{ \Phiverletq[h][i](\q,p) -q} \\
    &\leq  L^{-1/2}_1 T [\{1 + h  \constzero^{1/2} \vartheta_1(h\constzero^{1/2})\}^T  -1] \defEns{\vartheta_2(h) (1+\norm{q})  + \vartheta_3(h) \norm{p}} \eqsp,
  \end{align}
where  $\vartheta_1$ is defined by  \eqref{eq:def_vartheta_1}.
\end{lemma}
\begin{proof}
For  $T \geq 2$ and $h >0$,  by  \Cref{lem:bound_first_iterate_leapfrog_b}-\ref{lem:bound_first_iterate_leapfrog_b_2},   we have
\begin{align}
\sum_{i=1}^{T-1}  \norm{ \Phiverletq[h][i](\q,p) -q}
&\leq 
 \constzero^{-1/2} \vartheta_1^{-1}(h \constzero^{1/2}) T \defEns{\{1+h \constzero^{1/2}\vartheta_1(h\constzero^{1/2})\}^{T}-1} \\
 & \qquad \qquad \times \, \defEns{\vartheta_2(h) (\norm{q} +1) + \vartheta_3(h) \norm{p}} \eqsp.
\end{align}
The proof is completed upon using that $\vartheta_1(h \constzero^{1/2}) \geq 1$.
\end{proof}


\subsection{Proofs of \Cref{sec:ergodicity-hmc}}


\subsubsection{Proof of \Cref{theo:irred_harris} }
\label{sec:proof-crefth-harris_0}
We first prove  \eqref{theo:irred_harris_a}.  Under the assumption that $\F$ is twice continuously
  differentiable, it follows by a straightforward induction, that for
  all $h >0$ and $q \in \rset^d$, $p \mapsto
  \Phiverletq[h][k](q,p)$, defined by  \eqref{eq:def_Phiverletq}, and $p \mapsto \gperthmc[k](q,p)$, defined by \eqref{eq:def_gperthmc}, are
  continuously differentiable and for all $(q,p) \in \rset^d \times
  \rset^d$,
\begin{equation}
  \Jac_{p,\gperthmc[T]}(q,p) =  \sum_{i=1}^{T-1}(T-i)\defEns{\nabla^2 \F \circ \Phiverletq[h][i](q,p)} \Jac_{p,\Phiverletq[h][i]}(q,p) \eqsp,
\end{equation}
where for all $q \in \rset^d$, $\Jac_{p,\gperthmc[k]}(q,p)$ ($\Jac_{p,\Phiverletq[i][h]}(q,p)$ respectively) is the Jacobian of the function $\tilde{p} \mapsto
\gperthmc[k](q,\tilde{p})$ ($\tilde{p} \mapsto
\Phiverletq[i][h](q,\tilde{p})$ respectively) at $p \in \rset^{d}$.

Under \Cref{assum:regOne}, $\sup_{x \in \rset^d} \normLigne{\nabla^2 \F(x)}
 \leq \constzero$, therefore by \Cref{lem:inverse_1}, we have that for any $T \in \nsets$ and $h >0$,
\begin{equation}
  \label{eq:inverse_1}
 \sup_{(\q,\p) \in \rset^d \times \rset^d} \norm{\Jac_{p,\gperthmc[T]}(q,p)}
 \leq  T (\{1 + h \constzero^{1/2} \vartheta_1(h \constzero^{1/2})\}^T  -1) /h  \eqsp.
\end{equation}
For any $q \in \rset^d$, $T\in \nsets$ and $h >0$, define $\phia_{q,T,h}(p)$  for all  $p \in \rset^d$ by
\begin{equation}
  \phia_{q,T,h} (p) = p-(h/T) \gperthmc[T](q,p) \eqsp.
\end{equation}
It is a well known fact (see for example
\cite[Exercise 3.26]{duistermaat:kolk:2004}) that if
\begin{equation}
  \label{eq:inverse_1_2}
  \sup_{(q,p) \in \rset^d \times \rset^d} (h/T)\norm{ \Jac_{p,\gperthmc[T]}(q,p)} < 1 \eqsp,
\end{equation}
then for any $q \in \rset^d$, $\phia_{q,T,h}$ is a
diffeomorphism and  therefore by \eqref{eq:qk}, the same conclusion holds
for $p \mapsto \Phiverletq[h][T](q,p)$. Using \eqref{eq:inverse_1}, if $T \in \nsets$ and $h > 0$  satisfies \eqref{eq:condition-h,T-harris},
then the condition \eqref{eq:inverse_1_2} is verified and as a result \eqref{theo:irred_harris_a}.

Denoting for any $q \in \rset^d$ by $\Phiverletqi[h][T](q,\cdot) : \rset^d \to \rset$ the
continuously differentiable inverse of $p \mapsto
\Phiverletq[h][T](q,p)$ and using a change of variable with $\Phiverletqi[h][T](q,\cdot)$ in \eqref{eq:def_kernel_hmc} concludes the proof of \eqref{eq:def_kernel_hmc_false_density}.

We now show that $\Tker_{h,T}$ satisfies the condition which implies that $\Pkerhmc[h][T]$ is a \Tkernel. We first establish some estimates on the function $(q,p) \mapsto \Phiverletqi[h][T](q,p)$. By
\eqref{eq:inverse_1_2} and \eqref{eq:qk}, for any $q,p,v \in \rset^d$, there exists $\varepsilon \in \ooint{0,1}$ such that $  \normLigne{\Phiverletq[h][T](q,p)-\Phiverletq[h][T](q,v)} \geq (hT) \normLigne{\phi_{q,T,h}(p)-\phi_{q,T,h}(v)} \geq (hT) (1-\varepsilon)\norm{p-v}$ which implies that that there exists $C \geq 0$ satisfying
\begin{equation}
  \label{eq:regularity_phinverse1}
  \begin{aligned}
    \norm{\Phiverletqi[h][T](q,p)-\Phiverletqi[h][T](q,v)} &\leq (1-\varepsilon)^{-1} \norm{v-p}\eqsp, \\
    \norm{  \Phiverletqi[h][T](q,p)} &\leq C\defEns{\norm{\p} + \norm{\Phiverletq[h][T](q,0)}} \eqsp.
  \end{aligned}
\end{equation}
In addition, for $q,x,p \in \rset^d$, we have setting $\tilde{q} = \Phiverletqi[h][T](q,p)$ that
\begin{align}
  \nonumber
  \normLigne{\Phiverletqi[h][T](q,p) - \Phiverletqi[h][T](x,p)} &= \normLigne{\tilde{q} - \Phiverletqi[h][T](x, \Phiverletq[h][T](q,\tilde{q}))} \\
  \nonumber
                                                                &= \normLigne{\Phiverletqi[h][T](x, \Phiverletq[h][T](x,\tilde{q})) - \Phiverletqi[h][T](x, \Phiverletq[h][T](q,\tilde{q}))} \eqsp,
\end{align}
which implies by \eqref{eq:regularity_phinverse1} and \Cref{lem:bound_first_iterate_leapfrog_a}
that there exists $C \geq 0$ satisfying
\begin{equation}
  \label{eq:regularity_phinverse2}
  \norm{\Phiverletqi[h][T](q,p) - \Phiverletqi[h][T](x,p)} \leq C \norm{q-x} \eqsp.
\end{equation}

We now can prove that $\Tker_{h,T}$ is the continuous component of $\Pkerhmc[h][T]$. First by \eqref{eq:def_tker}, for all $\eventB \in \borelSet(\rset^d)$,
\begin{equation}
\label{eq:minoration_pseudo_density_P}
    \Tker_{h,T}(q, \eventB) \geq (2 \uppi)^{-d/2} \Leb(\eventB)
 \times \inf_{\bar{q} \in \eventB} \defEns{ \balphaacc(q,\bar{q}) \rme^{-\norm{\Phiverletqi_q(\bar{q})}^2/2}\detj_{\Phiverletqi[h][T](q,\cdot)}(\bar{q})} \eqsp,
\end{equation}
with the convention $0 \times \plusinfty = 0$ and
\begin{equation}
  \balphaacc(q,\bar{q}) =  \alphaacc\defEns{(q,\Phiverletqi[h][T](q,\bar{q})),\Phiverlet[h][T](q,\Phiverletqi[h][T](q,\bar{q}))}\eqsp. 
\end{equation}
Since the function $  (q,p) \mapsto (\Phiverletq[h][T](q,p),\Phiverletqi[h][T](q,p), \detj_{\Phiverletqi[h][T](q,\cdot)}(p)) $
is continuous on $\rset^d\times \rset^d$ by \Cref{lem:bound_first_iterate_leapfrog_a}, \eqref{eq:regularity_phinverse1} and \eqref{eq:regularity_phinverse2}, and for any $q,p \in \rset^d$, $\Jac_{\Phiverletq[h][T](\q,\cdot)}(\Phiverletqi[h][T](q,p))
\Jac_{\Phiverletqi[h][t](q,\cdot)}(\p) = \operatorname{I}_n$, we get that  $\Tker_{h,T}(q,\eventB) >0$ for all $q \in \rset^d$ and all compact set $\eventB$ satisfying $\Leb(\eventB) > 0$. Therefore, using that the Lebesgue measure is regular which implies that for any $\msa \in \mcb(\rset^d)$ with $\Leb(\msa) >0$, there exists a compact set $\msb \subset\msa$, $\Leb(\msb)>0$, we can conclude that $\Pkerhmc[h][T]$ is irreducible with respect to the Lebesgue measure. In addition, we get  $\Tker_{h,T}(q,\rset^d) >0$, and therefore we obtain that $\Pkerhmc[h][T]$ is aperiodic.  Similarly we get that any compact set is $(1,\Leb)$-small.

It remains to show that for any $\eventB \in\mcb(\rset^d)$, $q \mapsto \Tker_{h,T}(q,\eventB)$ is lower semi-continuous which is a straightforward consequence of Fatou's Lemma and that for any $p \in \rset^d$, $q \mapsto (\Phiverlet[h][T](q,p), \Phiverletqi[h][T](q,p),\detj_{\Phiverletqi[h][T](q,\cdot)}(p))$ is continuous.

Finally, the last statements of \ref{theo:irred_harris_c} follows from \Cref{propo:harris_rec} in \Cref{sec:harr-recurr-metr} which implies that  $ \Pkerhmc[h][T]$ is Harris recurrent and  \cite[Theorem 13.0.1]{meyn:tweedie:2009} which implies  \eqref{eq:harris-theorem}.

\subsubsection{Proof of \Cref{theo:irred_D}}
\label{sec:proof-crefth_irred_D}
We use \Cref{coro:irred}. Indeed $\Pkerhmc[h][T]$ is
of form \eqref{eq:def_pkerb} and it is straightforward to check that it
satisfies \Cref{assumG:phi} (note that \Cref{lem:bound_first_iterate_leapfrog_a}
shows that $\Phiverlet[h][T]$ is a Lipshitz function on $\rset^{2d}$).

We now check that $\Pkerhmc[h][T]$ satisfies \Cref{assumG:irred_b}($\rassG,0,\MassG$) for all $\rassG,\MassG \in
\rset_+^*$ using \Cref{le:degree_application}.  By \eqref{eq:qk}, for all $T \in \nsets$, $h >0$, $q,p \in \rset^d$,
\begin{equation}
  \label{eq:phiverlet_gqth}
  \Phiverletq[h][T](q,p) = T
h p + g_{q,T,h}(p)
\end{equation}
where $g_{q,T,h}(p) = q - (Th^2/2) \nabla \F(q) -
h^2 \gperthmc[T](q,p)$ where $\gperthmc[T]$ is defined by \eqref{eq:def_gperthmc}. \Cref{lem:inverse_1} shows that for any $T \in \nsets$ and $h >0$, it holds that
\begin{equation}
    \label{eq:2:theo:irred_D}
\sup_{p,v,q \in  \rset^d} \frac{\norm{g_{q,T,h}(p)-g_{q,T,h}(v)}}{\norm{p - v}} \leq T h [\{1 + h \constzero^{1/2} \vartheta_1(h \constzero^{1/2} )\}^T-1] \eqsp,
\end{equation}
which implies that the condition
\Cref{le:degree_application}-\ref{propo:irred_b_item_i} is satisfied. To check that
condition  \Cref{le:degree_application}-\ref{propo:irred_b_item_ii} holds, we consider separately the two cases: $\beta <1$ and $\beta =1$.

\begin{enumerate}[label=$\bullet$, wide, labelwidth=!, labelindent=0pt]
\item Consider first the case $\beta <1$. By \Cref{assum:regOne}-\ref{assum:regOne_b},
for any $T \in \nsets$ and $h >0$, we get
\begin{equation}
\norm{\gperthmc[T](\q,\p)} \leq  T \sum_{i=1}^{T-1} \norm{\nabla \F \circ \Phiverletq[h][i](\q,\p)} \leq
\constzeroT T \sum_{i=1}^{T-1} \defEns{ 1 + \norm{\Phiverletq[h][i](\q,\p)}^{\expozero}}
 \eqsp.
\end{equation}
Hence, by \Cref{lem:bound_first_iterate_leapfrog_b}-\ref{lem:bound_first_iterate_leapfrog_1}
there exists $C \geq 0$ such that for all $R\in \rset_+^*$ and
$q,p \in \rset^d$, $\norm{q} \leq R$,
\begin{equation}
\label{eq:3:theo:irred_D}
\norm{g_{q,T,h}(p)} \leq C \defEns{1+R^{\beta} +\norm{p}^{\expozero}} \eqsp,
\end{equation}
which implies that condition \ref{propo:irred_b_item_ii} of \Cref{le:degree_application} holds for any $T \in \nsets$ and $h >0$.

\item Consider now the case $\beta =1$.  For any $T \in \nsets$, $h >0$,  $q,p \in \rset^d$ we get using \Cref{assum:regOne}-\ref{assum:regOne_a}
\begin{align}
  \norm{g_{q,T,h}(p)} &\leq \norm{q} + Th^2 \constzero  \norm{q} /2 + Th^2 \norm{\nabla U(0)} /2\\
  & \qquad \qquad +h^2 \norm{\gperthmc[T](q,p) - \gperthmc[T](q,0)} + h^2 \norm{ \gperthmc[T](q,0)} \eqsp.
\end{align}
Therefore using \Cref{lem:inverse_1}, for any $q,p \in \rset^d$, $\norm{q} \leq R$ for $R \geq 0$, for any $T \in \nsets$ and $h >0$ satisfying \eqref{eq:condition-h,T-harris}, there exists $C \geq 0$ such that
\begin{equation}
  \norm{g_{q,T,h}(p)} \leq C + h T  [ \{1+ h \constzero^{1/2} \vartheta_1(h\constzero^{1/2})\}^T-1]  \norm{p} \eqsp,
\end{equation}
showing that condition \ref{propo:irred_b_item_ii} of \Cref{le:degree_application} is satisfied.
\end{enumerate}

Therefore,  \Cref{le:degree_application} can be applied and for any $T \in \nsets$ and $h >0$ if $\beta <1$ and for any $h > 0$ and $T \in \nsets$ satisfying \eqref{eq:condition-h,T-harris} if $\beta =1$, $\Pkerhmc[h][T]$ satisfies \Cref{assumG:irred_b}($\rassG,0,\MassG$) for all $\rassG,\MassG \in
\rset_+^*$.  \Cref{coro:irred} concludes the proof of \ref{theo:irred_D_a} and \ref{theo:irred_D_b}.
The last statement then follows from   \cite[Theorem 14.0.1]{meyn:tweedie:2009}.




\subsection{Proofs of \Cref{sec:geom-ergod-hmc}}

\subsubsection{Proof of \Cref{propo:geo_drift_MH}}
\label{sec:proof-crefpr}
By construction   \eqref{eq:def_kenel_MH}, for all $\q \in \rset^d$, we have
\begin{align}
&\Pker \Vgeo (\q) - \Vgeo(\q) = \int_{\rset^{2d}} \defEns{\Vgeo(\projq(z)) - \Vgeo(\q)} \alphagen(\q,z) \Kker(\q, \rmd z )  \\
& \qquad = \Kker\Vgeo(\q)-\Vgeo(\q) +\int_{\rset^{2d}} \defEns{\Vgeo(\projq(z)) - \Vgeo(\q)} \defEns{\alphagen(\q,z)-1} \Kker(\q, \rmd z ) \eqsp.
\end{align}
Using \eqref{eq:assum:geo_ergo_1}, this implies for all $\q \in \rset^d$,
\begin{equation}
\label{eq:proof_geo_drift_MH_1}
\Pker \Vgeo (\q)  \leq \lambdageo \Vgeo(\q) + b
+ \int_{\rset^{2d}} \defEns{\Vgeo(\projq(z)) - \Vgeo(\q)} \defEns{\alphagen(\q,z)-1} \Kker(\q, \rmd z ) \eqsp.
\end{equation}

Note that by definition \eqref{eq:def_rej_ballV} of $\rejectregion(\q)$ and $\ballV(\q)$
\begin{align}
&\int_{\rset^{2d}} \defEns{\Vgeo(\projq(z)) - \Vgeo(\q)} \defEns{\alphagen(\q,z)-1} \Kker(\q, \rmd z )
\\ &  \qquad  \qquad  \qquad  \leq    \int_{\rejectregion(\q) \cap \ballV(\q) } \defEns{ \Vgeo(\q)-\Vgeo(\projq(z))}  \Kker(\q, \rmd z ) \eqsp.
\end{align}
Therefore by \eqref{eq:assum:geo_ergo_2}, we get
\begin{equation}
 \lim_{M\to \plusinfty} \sup_{\set{\q \in \rset^d}{\Vgeo(\q) \geq M}}  \int_{\rset^{2d}} \left\{\Vgeo(\projq(z))/\Vgeo(\q) -1 \right\} \left\{ \alphagen(\q,z)-1 \right\} \Kker(\q, \rmd z ) \leq 0 \eqsp.
\end{equation}
The proof then follows from combining this result and \eqref{eq:proof_geo_drift_MH_1} since they imply
\begin{equation}
   \lim_{M\to \plusinfty} \sup_{\set{\q \in \rset^d}{\Vgeo(\q) \geq M}}  \Pker \Vgeo (\q) / \Vgeo(\q) \leq \lambda  \eqsp.
\end{equation}

\subsubsection{Proof of \Cref{lem:drift_uhmc}}
\label{sec:proof-crefl-2}

Let $a \in \rset_+^*$. Under \Cref{assum:regOne}$(m-1)$ with $m \in \ocint{1,2}$,   \Cref{lem:bound_first_iterate_leapfrog_a} shows that, for all $\q_0 \in \rset^d$,
$\p \mapsto \Phiverletq[h][T](\q_0,\p)$ is Lipschitz, with a Lipschitz constant $L_{h,T} \in \rset_+$
\begin{equation}
\label{eq:definition-lipshitz}
L_{h,T} \eqdef \defEns{1+h \constzero^{1/2} \vartheta_1(h \constzero^{1/2})}^{T} \eqsp.
\end{equation}
Therefore by the log-Sobolev inequality \cite[Proposition 5.5.1, (5.4.1)]{bakry:gentil:ledoux:2014} and \eqref{eq:def_Pker_proposition_double}, we get for all $\q_0 \in \rset^d$
\begin{equation}
\PkerhmcD[h][T] \Vdrifta[\a](\q_0) \leq \exp\parenthese{(aL_{h,T})^2/2 + a \Eproof[h][T](\q_0)} \eqsp,
\end{equation}
with
\begin{equation}
\Eproof(\q_0) = (2\uppi)^{-d/2} \int_{\rset^d} \norm{\Phiverletq[h][T](\q_0,\p)} \rme^{-\norm{\p}^2/2} \rmd \p \eqsp.
\end{equation}
Set $p_0 \in \rset^d$.
Denote for all $k \in \{0,\ldots,T\}$, $q_k =
\Phiverletq[h][k](\q_0,\p_0)$ and consider the following decomposition given by  \eqref{eq:qk}:
\begin{equation}
\label{eq:drift_uhmc_3}
\norm{\q_T}^2  =  \norm{\q_0}^2 + \operatorname{A}^{(1)}_{h,T}(\q_0,\p_0) -2h^2 \operatorname{A}^{(2)}_{h,T}(\q_0,\p_0) \eqsp,
\end{equation}
where
\begin{align}
\operatorname{A}^{(1)}_{h,T}(\q_0,\p_0) & = 2Th \ps{\q_0}{ \p_0} + \norm{ Th\p_0-(Th^2/2) \nabla \F(\q_0)-h^2 \sum_{i=1}^{T-1}(T-i)\nabla \F (\q_i)}^2 \\
\operatorname{A}^{(2)}_{h,T}(\q_0, \p_0) & = \ps{\q_0}{ (T/2) \nabla \F(\q_0)+ \sum_{i=1}^{T-1}(T-i)\nabla \F (\q_i)} \eqsp.
\end{align}
Jensen's inequality shows that, for all $\q_0 \in \rset^d$,
\[
\Eproof[h][T](\q_0) \leq \left( \norm{\q_0}^2 + \bar{\operatorname{A}}^{(1)}_{h,T}(\q_0) - 2 h^2 \bar{\operatorname{A}}^{(2)}_{h,T}(\q_0) \right)^{1/2} \eqsp,
\]
where we have set $\bar{\operatorname{A}}_{h,T}^{(i)}(\q_0)= (2\uppi)^{-d/2} \int_{\rset^d} \operatorname{A}_{h,T}^{(i)}(\q_0,\p) \rme^{-\norm{\p}^2/2} \rmd \p$, $i=1,2$.
Therefore to conclude the proof, it is sufficient to show that
\begin{equation}
\label{eq:drift_uhmc_minus1}
\limsup_{\norm{\q_0} \to \plusinfty} \defEnsLigne{\Eproof(\q_0) - \norm{\q_0}} = - \infty.
\end{equation}
\begin{enumerate}[label=(\alph*),leftmargin=0cm,itemindent=0.5cm,labelwidth=1.2\itemindent,labelsep=0cm,align=left]
\item Consider the case $m \in \ooint{1,2}$. Using \Cref{assum:regOne}$(m-1)$ and  \Cref{lem:bound_first_iterate_leapfrog_b}-\ref{lem:bound_first_iterate_leapfrog_1}, we get that  there exists a constant $C_0 \geq 0$ such that for all $\p_0,\q_0 \in \rset^d$ and $i \in \{1,\dots,T-1\}$,
  \begin{equation}
    \label{eq:drift_nabla_U_q_i}
    \norm{\nabla \F(\q_i)} \leq C_0 \{1 + \norm{\p_0} + \norm{\q_0}^{m-1}\}
  \end{equation}
  which implies that
\begin{equation}
\label{eq:bound-A-1}
|\bar{A}^{(1)}_{h,T}(\q_0)| \leq C_1 \{1 + \norm{\q_0}^{2(m-1)} \} \eqsp,
\end{equation}
for some constant $C_1 \geq 0$.  On the other hand, note that for any $q_0,p_0 \in \rset^d$,  $\operatorname{A}^{(2)}_{h,T}(\q_0,\p_0) =  \operatorname{A}^{(2,1)}_{h,T}(\q_0,\p_0) +  \operatorname{A}^{(2,2)}_{h,T}(\q_0,\p_0)$ with
\begin{align}
\label{eq:definition-A-2-1}
\operatorname{A}^{(2,1)}_{h,T}(\q_0,\p_0) &= \frac{T}{2} \ps{\q_0}{ \nabla \F(\q_0)}+\sum_{i=1}^{T-1}(T-i)  \ps{\q_i}{\nabla \F (\q_i)}, \\ 
\label{eq:definition-A-2-2}
\operatorname{A}^{(2,2)}_{h,T} &=- \sum_{i=1}^{T-1}(T-i)  \ps{\q_0-\q_i}{\nabla \F (\q_i)} \eqsp.
\end{align}
Under \Cref{assum:potential:c}$(m)$, for any $q_0, p_0 \in \rset^d$, we have that
\begin{equation}
\label{eq:lower-bound-A-2-1}
\operatorname{A}_{h,T}^{(2,1)}(\q_0,\p_0) \geq \constthree \frac{T}{2} \norm{\q_0}^m - \frac{T (T-1)}{2} \constfour \eqsp.
\end{equation}
Further, by \eqref{eq:drift_nabla_U_q_i} and  \Cref{lem:bound_first_iterate_leapfrog_b}-\ref{lem:bound_first_iterate_leapfrog_1},  there exists  $C_2 \geq 0$, such that for all $\p_0,\q_0 \in \rset^d$,
\begin{equation}
\label{eq:bound-A-2-2}
|\operatorname{A}^{(2,2)}_{h,T}(\q_0,\p_0)| \leq C_2 \{1 + \norm{p_0}^2 + \norm{\q_0}^{2(m-1)} \} \eqsp,
\end{equation}
Combining \eqref{eq:lower-bound-A-2-1} and \eqref{eq:bound-A-2-2}, there exists  $C_3 \geq 0$ such that for any $q_0 \in \rset^d$,
\begin{equation}
\label{eq:bound-A-2}
\bar{\operatorname{A}}^{(2)}(\q_0) \geq \frac{T \constthree}{2} \norm{\q_0}^m - C_3 \{1 + \norm{\q_0}^{2(m-1)} \} \eqsp.
\end{equation}
Combining \eqref{eq:bound-A-1} and \eqref{eq:bound-A-2}, and using that $m < 2$, we finally obtain that \eqref{eq:drift_uhmc_minus1} holds.
\item  By Cauchy-Schwarz and Hölder inequality and since $\nabla U$ satisfies  \Cref{assum:regOne}$(1)$, we have for any $q_0,p_0 \in \rset^d$,
\begin{align}
&\operatorname{A}^{(1)}_{h,T}(\q_0,\p_0)
\leq 2hT \norm{q_0} \norm{p_0} \\
& +3 \parentheseDeux{ h^2 T^2  \norm[2]{\p_0} +  2 h^4 T^4 \constzeroT^2 (1+ \norm[2]{ \q_0}) +   2 h^4 T^2  \constzero^2  \defEns{ \sum_{i=1}^{T-1} \norm{\q_i - q_0}}^2} \eqsp,
\end{align}
which implies using \Cref{lem:bounded_cum_error}, $\vartheta_1(s) \geq 1$ for any $s \geq 0$, and the dominated convergence theorem that
\begin{align}
\label{eq:bound-A-1-m=2}
&\limsup_{\norm{q_0} \to \plusinfty} |\bar{\operatorname{A}}^{(1)}_{h,T}(\q_0)|/ \norm[2]{q_0}  \\
&\qquad \qquad \leq
6 h^4 T^4 \left(  \constzeroT^2 +  \constzero \vartheta_2^2(h) \left[ \{1 + h \constzero^{1/2} \vartheta_1(\constzero^{1/2} h)\}^T -1 \right]^2 \right)  \eqsp.
\end{align}
Similarly using in addition  \Cref{assum:potential:c}($2$), we get that for any $q_0,p_0 \in \rset^d$,
\begin{align}
\operatorname{A}^{(2)}_{h,T}(\q_0,\p_0)
&= \ps{\q_0}{ (T^2/2)  \nabla \F(\q_0)+ \sum_{i=1}^{T-1}(T-i)\{\nabla \F (\q_i) - \nabla U(q_0)\}}  \\
&\geq  (T^2/2) \{\constthree  \norm[2]{q_0} - \constfour\} - T \constzero \norm{q_0} \sum_{i=1}^{T-1}\norm{ q_i - q_0 }  \eqsp.
\end{align}
Then, \Cref{lem:bounded_cum_error} and the Fatou Lemma imply that
\begin{align}
& \liminf_{\norm{q_0} \to \plusinfty} h^2 \bar{\operatorname{A}}^{(2)}_{h,T}(\q_0)/\norm{q_0}^2  \\
 & \qquad \qquad  \qquad  \geq
h^2 T^2  \left( \constthree/2 - \constzero^{1/2} \vartheta_2(h) [(1+ h \constzero^{1/2}\vartheta_1(h \constzero^{1/2}))^T-1]\right)  \eqsp.
\end{align}
Therefore, for all  $h > 0$, and $T \in \nset^*$, one obtains
\[
\limsup_{\norm{q_0} \to \plusinfty} \{ \Eproof[h][T](\q_0) \}^2  /\norm[2]{q_0}   \leq 1 -  T^2 h^2 (\constthree -  \Theta(hT))  \eqsp,
\]
where $\Theta$ is defined in \eqref{eq:definition-function-C}. The proof follows.
\end{enumerate}

\subsubsection{Proof of \Cref{le:convex}}
\label{sec:proof-crefle:convex}

Note that condition~\ref{le:convex:a}  implies that
\begin{equation}
  \label{eq:le:convex:b:conseq}
  \inf_{\set{\q}{\norm{\q}=\Rexp}} \F_0(\q) >0 \eqsp.
\end{equation}
Condition \Cref{assum:potential}-\ref{assum:potential:a} follows from \ref{le:convex:b} using that, by \ref{le:convex:a},
 for all $\q \in \rset^d$, $\norm{q} \geq \Rexp$
\[
\F_0(q) =  (\norm{q}/\Rexp)^{\m}\F_0 (\Rexp  q / \norm{q} ) 
\]

  In addition, \Cref{assum:potential:c} is also
  easy to check using the Euler's homogeneous function theorem that
$\ps{\nabla \F_0(\q)}{
 \q}= \m \F_0(\q)$ for all $\q \in \rset^d$, $\norm{\q} \geq \Rexp$.

  We show below that \Cref{assum:potential}-\ref{assum:potential:b}
  holds. First, since $\lim_{\norm{\q} \to \plusinfty} \F_0(\q) = \plusinfty$ and $\F_0$ is continuous  for all $K \geq 0$, $\lset_K=\{ \q \in \rset^d \ ; \ \F_0(\q) \leq K\}$
 is compact. Besides, using \eqref{eq:le:convex:b:conseq} and that $\F_0$ is continuous,  we can define
  \begin{equation}
    \label{eq:def_M_le:convex}
M =  \sup_{\q \in \ball{0}{\Rexp}} \F_0(\q) +1 \in \ooint{1, \plusinfty}\eqsp,
  \end{equation}
and for all $\q \not \in \lset_M$,
\begin{equation}
  \label{eq:deftq_le:convex_0}
 t_q = \sup \defEns{ t \in \ccint{0,1} \ ; \ \F_0(t \q ) =M } \eqsp,
\end{equation}
 which satisfies
  \begin{equation}
    \label{eq:deftq_le:convex}
    \F_0(t_{\q} \q) = M > \sup_{\x \in \ball{0}{\Rexp}} \F_0(\x) \eqsp, \eqsp t_{\q} q \in \partial \lset_M \eqsp \text{ and } \eqsp t_{\q}
  \norm{\q} > \Rexp \eqsp.
  \end{equation}
 Finally using \ref{le:convex:a}, we get that the set
  $\lset_M$  is
  convex.

  To show \Cref{assum:potential}-\ref{assum:potential:b}, we check first
  that it is sufficient to prove that
  \begin{equation}
D^2\F_0(\x)\defEnsLigne{\nabla
    \F_0(\x)\otimes \nabla \F_0(\x)}>0 \text{ for any } \x \in \partial
  \lset_M \eqsp.
  \end{equation}
 Indeed note that if this statement holds, since $\F \in C^2(\rset^d)$
  and $\partial \lset_M$ is compact, we have
  \begin{equation}
    \label{eq:le:convex:1}
\varepsilon =  \inf_{x \in \partial \lset_M} D^2\F_0(\x)\defEnsLigne{\nabla
    \F_0(\x)\otimes \nabla \F_0(\x)}>0 \eqsp.
  \end{equation}
Let now $\q \not \in \lset_M$ and $t_{\q}$ defined by \eqref{eq:deftq_le:convex_0}.
Since by \ref{le:convex:a}, for all $u
  \geq 1$ and $z \in \rset^d$, $\norm{z} \geq \Rexp$, $\F_0(uz) = u^\m \F_0(z)$,
  differentiating with respect to $z$, we get $\nabla \F_0(uz) = u^{\m-1}
  \nabla \F_0(z)$ and $D^2 \F_0(uz) = u^{\m-2} D^2\F_0(z)$. Therefore by \eqref{eq:deftq_le:convex}, we get
 \begin{equation}
    \label{eq:le:convex:2}
   D^2\F_0(\q)\defEns{\nabla
  \F_0(\q)\otimes \nabla \F_0(\q)} = t_{\q}^{4-3m} D^2\F_0(t_{\q} \q)\defEns{\nabla
  \F_0(t_{\q} \q )\otimes \nabla  \F_0(t_{\q} \q)} \eqsp.
 \end{equation}
Using \eqref{eq:deftq_le:convex} again and since $\partial \lset_M$ is compact, we get that
there exists $R_2 \geq 0$ such that $t_{\q} \norm{q} \in \ccint{\Rexp,R_2}$. Hence by \eqref{eq:le:convex:2}, we have
\begin{equation}
   D^2\F_0(\q)\defEns{\nabla
  \F_0(\q)\otimes \nabla \F_0(\q)} \geq \varepsilon \norm{q}^{3m-4} \min\parentheseDeux{\Rexp^{4-3m},R_2^{4-3m}} \eqsp.
\end{equation}
Thus
 \Cref{assum:potential}-\ref{assum:potential:b} holds for $\F_0$.
Finally \ref{le:convex:b}
 implies that  the function $\F=\F_0+G$ satisfies
 \Cref{assum:potential}-\ref{assum:potential:b} as well.

 Let $ \x \in \partial \lset_M$, we now show that $D^2\F_0(\x)\defEns{\nabla
 \F_0(\x)\otimes \nabla \F_0(\x)}>0$. By Euler's homogeneous function theorem and since $M \geq 1$, we have that
 $\abs{\ps{\nabla \F_0(\x)}{\x}}\geq \m >0$.  Denote by $\Pi$ the tangent hyperplane
 of $\partial\lset_M$ at $\x$, defined by $\Pi = \{ \q \in \rset^d \, :
 \, \ps{\nabla \F_0(x) }{ \x-\q} = 0\}$.  Since $\lset_M$ is convex, for all $\q \in
 \lset_M$ and $t \in \ccint{0,1}$, $ t^{-1}( \F_0(t\q +(1-t)\x)
 -\F_0(\x))\leq 0$. So taking the limit as $t$ goes to $0$, we get that
 $\ps{\nabla \F_0(\x) }{ \q-\x} \leq 0$. Therefore, $\lset_M$ is
 contained in the half-space $\Pi^- = \{ q \in \rset^d \, ; \, \ps{\nabla
 \F_0(\x) }{ \q-\x} \leq 0 \}$.

Define the  $\m$-homogeneous
 function $\tilde{\F} : \rset^d \to \rset_+$  for all $\q \in \rset^d$  by
 \begin{equation}
\label{eq:def:tilde_F}
   \tilde{\F}(\q) = M \abs{\frac{\ps{\q}{ \nabla \F_0(\x)}}{\ps{\x}{ \nabla \F_0(\x)}}}^\m \eqsp.
 \end{equation}
 Since $\F_0(x) = M$, by \eqref{eq:def_M_le:convex}, $\norm{x} > \Rexp$
 and therefore there exists $\epsilon_0 \in \rset_+^*$ such that
 \begin{equation}
\label{eq:inclusion_ball}
   \ball{x}{\epsilon_0} \subset \rset^d \setminus \ball{0}{\Rexp}\eqsp.
 \end{equation}
 We
 now show that $\tilde{\F}(\q) \leq \F_0(\q)$ for all $\q \in
 \ball{\x}{\epsilon}$ with
 \begin{equation}
\epsilon = 2^{-1}\min\parentheseDeux{\epsilon_0, \{\ps{\x}{ \nabla \F_0(\x)}\}/\norm[2]{\nabla \F_0(\x)}} \eqsp.
 \end{equation}
  First consider $\q \in \Pi$. We next argue by contradiction that
 \begin{equation}
\label{eq:bound_Pi}
   \F_0(\q) \geq M = \tilde{\F}(\q)    \eqsp.
 \end{equation}
Indeed assume that  $\F_0(\q) < M$. Then by continuity of $\F_0$, we get that $\q \in \interior{\lset_{M}}$. But since $\lset_{M} \subset \Pi^-$, we get $\q \in \interior{(\Pi^-)}$ which is impossible since $\q \in \Pi = \boundary{\Pi^-} = \clos{\Pi^-} \setminus \interior{(\Pi^-)}$.

Let $\q \in
 \ball{\x}{\epsilon}$. Note that  $  \q= \x + \norm[-2]{\nabla \F_0(\x)}\ps{\q-\x}{\nabla \F_0 (\x)}\nabla \F_0(\x) + z$,
where  $z \in \rset^d$ is orthogonal to $\nabla \F_0(\x)$. Define
\begin{equation}
  u = \frac{\ps{\x }{ \nabla \F_0(\x)}}{\ps{\q}{ \nabla \F_0(\x)}}\eqsp.
\end{equation}
Then $u\q \in \Pi$ and by \eqref{eq:bound_Pi}, $\F_0(u \q) \geq M$. If $u \geq 1$, using \ref{le:convex:a} and \eqref{eq:def:tilde_F}, we get
\begin{equation}
\label{eq:6}
  \F_0(\q) \geq  u^{-\m}M = \tilde{\F}(\q) \eqsp.
\end{equation}
In turn, if $u < 1$, since $\norm{\q- \x} \leq \epsilon_0$, by \eqref{eq:inclusion_ball} and \ref{le:convex:a}, $\F_0(\q) = u^{-1} \F_0(u \q)$ and \eqref{eq:6} still holds.

Consider the three times differentiable functions $\phi$ and $\tilde{\phi}$
defined for all $v \in \rset$ by
$$
\phi(v)=\F_0(\x+v \nabla \F_0(\x))\quad\textrm{and}\quad \tilde{\phi}(v)= \tilde{\F}(\x+v \nabla \F_0(\x)) \eqsp.
$$
First, since for all $\q \in \ball{\x}{\epsilon}$, $\F_0(\q) \geq \tilde{\F}(\q)$, we have
\begin{equation}
  \label{eq:bound_f_tildef}
  \phi(v) \geq \tilde{\phi}(v) \eqsp, \text{for all $v \in \ccint{-\epsilon/\norm{\nabla \F_0(\x)},\epsilon/\norm{\nabla \F_0(\x)}}$}\eqsp.
\end{equation}
Moreover, by definition $\F_0(\x) = \tilde{\F}(\x)$ and $\nabla \tilde{\F}(\x)$
is colinear to $ \nabla \F_0(\x)$. Using  Euler's homogeneous function theorem for $\F_0$
and $\tilde{\F}$, we get that $\nabla \tilde{\F}(\x) = \nabla
\F_0(\x)$. Therefore $\phi(0) = \tilde{\phi}(0)$, $\phi'(0) = \tilde{\phi}'(0)$. Combining these equalities, \eqref{eq:bound_f_tildef} and using a Taylor expansion around $0$ of order $2$ with exact remainder for $\phi$ and $\tilde{\phi}$ shows that necessary
\begin{equation}
  D^2\F_0(\x)\defEns{\nabla
  \F_0(\x)\otimes \nabla \F_0(\x)} =   \phi''(0) \geq \tilde{\phi}''(0) >0 \eqsp,
\end{equation}
which concludes the proof.

\subsubsection{Proof of \Cref  {propo:accept}}
\label{sec:proof-crefth}
We preface the proof by several technical preliminary Lemmas.

\begin{lemma}
\label{lem:grad_Lip_F}
Assume \Cref{assum:potential}($\m$)-\ref{assum:potential:a} for some $m\in \ocint{1,2}$.
Then, for all $q,x \in \rset^d$,
$\norm{\nabla \F(\q) - \nabla \F(\x)} \leq \constone  \norm{\q-\x}$
and $\norm{\nabla \F(\q) - \nabla \F(\x)} \leq \constone (m-1)^{-1} \norm{\q-\x}^{m-1}$.
In particular, \Cref{assum:regOne}($m-1$) holds with $\constzero= \constone$ and $\constzeroT= \constone (m-1)^{-1} \vee \norm{\nabla \F(0)}$.
\end{lemma}

\begin{proof}
First by \Cref{assum:potential}($m$)-\ref{assum:potential:a}, we get for all $\q,\x \in \rset^d$,
\begin{align}
\nonumber
  \norm{\nabla \F(\q) - \nabla \F(\x)}
&= \norm{\int_{0}^1 \nabla^2 \F(\x +t(\q-\x)) \defEns{\q-\x} \rmd t } \\
\label{eq:lem_grad_Lip_F_eq_0}
&\leq \constone  \norm{\q-\x} \int_{0}^1 \defEns{1+\norm{\x +t(\q-\x)}}^{\m-2} \rmd t   \eqsp.
\end{align}
Therefore, for all $\q,\x \in \rset^d$, we get $ \normLigne{\nabla \F(\q) - \nabla \F(\x)} \leq \constone \normLigne{\q-\x}$. For all $\q, \x \in \rset^d$, since $m \in \ocint{1,2}$, we have
\begin{align}
&\int_{0}^1 \defEns{1+\norm{\x +t(\q-\x)}}^{\m-2} \rmd t  \leq  \int_{0}^1\defEns{ 1+\abs{\norm{\x} - t \norm{\q-\x}}}^{m-2}  \rmd t  \\
& \leq \int_{0}^{1 \wedge \frac{\norm{\x}}{\norm{\q-\x}}}\defEns{1+  \norm{\x} - t \norm{\q-\x}}^{m-2}  \rmd t
+ \int_{1 \wedge \frac{\norm{\x}}{\norm{\q-\x}}}^1 \defEns{1+ t \norm{\q-\x}-\norm{\x}}^{m-2}  \rmd t \\
&\leq (m-1)^{-1} \norm{\q-\x}^{m-2} \eqsp.
\end{align}
Plugging this result in \eqref{eq:lem_grad_Lip_F_eq_0} concludes the proof.

\end{proof}

\begin{lemma}
  \label{lem:prepa_bound_diff_ham}
Assume  \Cref{assum:regOne}$(\beta)$ for $\beta \in \ocint{0,1}$. Let $\gamma \in \ooint{0,\beta}$.
\begin{enumerate}[label=(\roman*)]
\item   \label{lem:prepa_bound_diff_ham_1}
If $ \beta \in \ooint{0,1}$, for any $T \in \nsets$ and  $h_0 \in \rset^*_+$, there exist $\kappa \in \rset^*_+$ and $R \in \rset_+$ such that for all $h \in \ocint{0,h_0}$,  $\q_0,p_0 \in \rset^d$ satisfying $ \norm{p_0} \leq
\norm{\q_0}^{\gamma}$ and $\norm{\q_0} \geq R$, and $i,j,k \in
\{0,\ldots,T\}$,
\begin{align}
\label{eq:bound_iterate_q_2_prood_diff_ham_eq}
&\norm{q_0} \leq \kappa  \norm{\Phiverletq[h][k](q_0,p_0)}\eqsp, \\ &\norm{\Phiverletq[h][i](q_0,p_0)-\Phiverletq[h][j](q_0,p_0)} \leq \kappa h \norm{\Phiverletq[h][k](q_0,p_0)}^{\beta}  \eqsp,
\end{align}
  where $\Phiverletq[h][\ell]$ are defined by \eqref{eq:def_Phiverletq} for $\ell \in \nset^*$.
\item \label{lem:prepa_bound_diff_ham_2}
If $\beta =1$, then there exist $\kappa, \bar{S} \in \rset_+^*$ (depending only on $\constzero$ and $\constzeroT$) such that for any $T \in \nsets$, $h \in \ooint{0,\bar{S}/T}$,  $\q_0,p_0 \in \rset^d$ satisfying $ \norm{p_0} \leq
\norm{\q_0}^{\gamma}$ and $\norm{\q_0} \geq 1$, and $i,j,k \in
\{0,\ldots,T\}$,
\begin{align}
\label{eq:bound_iterate_q_2_prood_diff_ham_eq_2}
&\norm{q_0} \leq 2  \norm{\Phiverletq[h][k](q_0,p_0)} \leq 3 \norm{q_0} \eqsp, \\
&\norm{\Phiverletq[h][i](q_0,p_0)-\Phiverletq[h][j](q_0,p_0)}
\leq \kappa Th
\rme^{(1+\vartheta_1(Th))Th}  \norm{\Phiverletq[h][k](q_0,p_0)} \eqsp,
\end{align}
where $\vartheta_1$ is defined in \eqref{eq:def_vartheta_1}.
\end{enumerate}
\end{lemma}

\begin{proof}
\begin{enumerate}[label=(\roman*),leftmargin=0cm,itemindent=0.5cm,labelwidth=1.2\itemindent,labelsep=0cm,align=left]
\item
Let $T \in \nsets$, $h_0 \in \rset_+^*$ and  $h \in \ocint{0,h_0}$.
Denote for all $k \in \{0,\ldots,T\}$ by $(q_k,p_k) =
  \Phiverlet[h][k](q_0,p_0)$, $q_0, p_0 \in \rset^d$.
  By  \Cref{assum:regOne}$(\beta)$ and  \Cref{lem:bound_first_iterate_leapfrog_b}-\ref{lem:bound_first_iterate_leapfrog_1}, there exist $C \geq 0$ and  $R_1 \geq 0$ such that for all $\q_0,\p_0 \in \rset^d$ satisfying $ \norm{\p_0} \leq \norm{\q_0}^{\gamma}$ and $\norm{\q_0} \geq R_1$, for all $k \in \{0,\ldots,T\}$, we have
\begin{equation}\label{eq:bound_iterate_q_1_prood_diff_ham_0}
 \norm{\q_k-\q_0} \leq C h \norm{\q_0}^{\m-1} \eqsp.
\end{equation}
Then since $m<2$, there exists $R_2 \geq R_1$ and $\omega >0$ such that such that for all $\q_0,p_0 \in \rset^d$ satisfying $ \norm{p_0} \leq \norm{\q_0}^{\gamma}$  and $\norm{\q_0} \geq R_2$, for all $k \in \{0,\ldots,T\}$,
\begin{equation}
  \norm{\q_0} \leq \omega \norm{\q_k} \eqsp.
\end{equation}
In addition, using this inequality and
\eqref{eq:bound_iterate_q_1_prood_diff_ham_0} again, we get that for
all $\q_0,p_0 \in \rset^d$ satisfying $ \norm{p_0} \leq
\norm{\q_0}^{\gamma}$ and $\norm{\q_0} \geq R_2$, for all $i,j,k \in
\{0,\ldots,T\}$,
\begin{equation}
\label{eq:bound_iterate_q_2_prood_diff_ham_2}
\norm{\q_{i}-\q_{j}} \leq 2C h \norm{\q_0}^{\m-1} \leq 2 Ch \omega^{\m-1}\norm{\q_k}^{\m-1}  \eqsp.
\end{equation}
\item Let $T \in \nsets$, $h \in \rset_+^*$.
Denote for all $k \in \{0,\ldots,T\}$ by $(q_k,p_k) =
  \Phiverlet[h][k](q_0,p_0)$, $q_0, p_0 \in \rset^d$.
 By  \Cref{assum:regOne}$(1)$ and \Cref{lem:bound_first_iterate_leapfrog_b}-\ref{lem:bound_first_iterate_leapfrog_b_2}, $\vartheta(s) \geq 1$ for any $s \geq 0$, we get that  for all $\q_0, \p_0 \in \rset^{2d}$ satisfying $\norm{\p_0} \leq \norm{\q_0}^\gamma$ and $k \in \{0,\ldots,T\}$,
\begin{equation}\label{eq:bound_iterate_q_1_prood_diff_ham_0_m_2}
 \norm{\q_k-\q_0} \leq \constzero^{-1/2}\left\{(1 + h \constzero^{1/2} \vartheta_1(h \constzero^{1/2}))^{k+1}-1\right\} \left\{ \vartheta_2(h) \vee \vartheta_3(h) \right\} (1 + \norm{\q_0}) \eqsp,
\end{equation}
where $\vartheta_1$, $\vartheta_2$ and $\vartheta_3$ are defined in \eqref{eq:def_vartheta_1} and  \eqref{eq:definition-vartheta-2} respectively.

Therefore,  there exists $\bar{S} >0$ (depending only on $\constzero$ and $\constzeroT$) such that for any $T \in \nsets$ and $h \in \ooint{0,\bar{S}/T}$, for any $q_0,p_0 \in \rset^d$ satisfying $ \norm{\p_0} \leq \norm{\q_0}^{\gamma}$ and $\norm{\q_0} \geq  1 $, $ \normLigne{\Phiverletq[h][k](q_0,p_0)-\q_0} \leq \norm{q_0}/2$ for any $k \in \{0,\ldots,T\}$. As a result, for any $T \in \nsets$ and $h \in \ooint{0,\bar{S}/T}$, for any $q_0,p_0 \in \rset^d$ satisfying $ \norm{\p_0} \leq \norm{\q_0}^{\gamma}$ and $\norm{\q_0} \geq  1 $,  for any $k \in \{0,\ldots,T\}$,
\begin{equation}
  \norm{\q_0} \leq 2 \norm{\Phiverletq[h][k](q_0,p_0)} \leq 3 \norm{q_0} \eqsp.
\end{equation}
In addition, using this inequality and
\eqref{eq:bound_iterate_q_1_prood_diff_ham_0_m_2} again, we get that there exists $C \geq 1$ (depending only on $\constzero$ and $\constzeroT$) such that  for any $T \in \nsets$ and $h \in \ooint{0,\bar{S}/T}$, setting $S= hT$, and  for
all $\q_0,p_0 \in \rset^d$ satisfying $ \norm{p_0} \leq
\norm{\q_0}^{\gamma}$ and $\norm{\q_0} \geq 1$, for all $i,j,k \in
\{0,\ldots,T\}$,
\begin{align}
\norm{\Phiverletq[h][i](q_0,p_0) - \Phiverletq[h][j](q_0,p_0)} 
&\leq 2C S \rme^{(1+\vartheta_1(S))S} \norm{\q_0} \\
&\leq 4 C S \rme^{(1+\vartheta_1(S))S}\norm{\Phiverletq[h][k](q_0,p_0)}  \eqsp.
\end{align}
\end{enumerate}
\end{proof}

\begin{lemma}
\label{lem:variation_assum_hessian}
Assume \Cref{assum:potential}($\m$) for some $m \in \ocint{1,2}$.
Then there exist $\delta \in \ooint{0,1}$, $R_0 \in \rset_+$ and  $\BB_0 \in \rset_+^*$ such that for all $\q,\x,z \in \rset^d$, with
\begin{equation}
\label{eq:hyp_variation_assum_hessian}
\norm{\q} \geq R_0 \eqsp, \qquad \text{ and } \quad \max\parenthese{\norm{\q-\x},\norm{\q-z}} \leq \delta \norm{\q}  \eqsp,
\end{equation}
we have
\begin{equation}
D^2 \F(\q) \defEns{\nabla \F(\x) \otimes \nabla \F(z)} \geq \BB_0 \norm{\q}^{3\m-4}  \eqsp.
\end{equation}
\end{lemma}
\begin{proof}
Under \Cref{assum:potential}($m$), using \Cref{lem:grad_Lip_F}, it can be easily checked that there
exists $C_U \geq 0$ (depending only on $\constone$ and $m$) such that for all  $\q,\x,z \in \rset^d$ satisfying  \eqref{eq:hyp_variation_assum_hessian},
for $\delta \in \ooint{0,1}$ and $R_0 \geq \rhtwo$,
\begin{equation}
D^2 \F(\q) \defEns{\nabla \F(\x) \otimes \nabla \F(z)}  \geq \consttwo  \norm{\q}^{3\m-4} - C_U \{1+ \delta^{m-1} \norm{\q}^{3m-4}\}\eqsp.
\end{equation}
The proof is concluded by taking $\delta$ sufficiently small and $R_0$ sufficiently large.
\end{proof}

 \begin{lemma}
 \label{lem:diff_hamiltonian_taylor_exp}
 Assume that $\F$ is twice continuously differentiable. Then for all $q_0,p_0 \in \rset^d$ and $h \in \rset^*_+$, the following identity holds
 \begin{align}
& \Ham \circ \Phiverlet[h][1](q_0,p_0) - \Ham(q_0,p_0) =  h^2\int_0^1D^2\F(q_t)\defEns{p_{0}}^{\otimes 2} (1/2-t) \  \rmd t \\
& + h^3 \int_0^1D^2\F(q_t)\defEns{p_{0}\otimes \nabla \F(q_{0})}(t-1/4) \ \rmd t  \\
&  -\frac{h^4}{4}\int_0^1 D^2\F(q_t)\defEns{\nabla \F(q_{0})}^{\otimes 2} \ t \  \rmd t
+\frac{h^4}{8}\norm{ \int_0^1 \nabla ^2 \F( q_{t}) p_{0} \ \rmd t  }^2\\
&   - \frac{h^5}{8}\ps{\int_0^1\nabla^2\F(q_t)\nabla \F(q_0) \ \rmd t }{\int_0^1 \nabla^2 \F(q_t)p_0 \ \rmd t }
   \\
   &+\frac{h^6}{32}\norm{\int_0^1 \nabla^2 \F( q_{t})\nabla \F (q_0) \ \rmd t }^2
\eqsp,
\end{align}
where  $\Phiverlet[h][1]$ is defined in \eqref{eq:def_Phiverlet}, $(q_1,p_1) = \Phiverlet[h][1](q_0,p_0)$, and $q_t = q_0 +t (q_1-q_0)$ for $t \in \ccint{0,1}$.
\end{lemma}
\begin{proof}
Using the definition of $\Ham(q,p)= \frac{1}{2}\norm{p}^2+ \F(q)$, we get
\begin{equation}
  \Ham(q_1,p_1) - \Ham(q_0,p_0) = (1/2)(\norm{p_1}^2 - \norm{p_0}^2) + \F(q_1) - \F(q_0) \eqsp.
\end{equation}
First, Taylor's formula  with exact remainder  enables us to write
\begin{equation}\label{eq:1}
\F(q_{1})-\F(q_0)= \ps{\nabla \F(q_0)}{ (q_{1}-q_0)}+\int_0^1D^2\F(q_{t})\defEns{q_{1}-q_0 }^{\otimes 2}(1-t) \ \rmd t \eqsp.
\end{equation}
Since  $\nabla \F(q_{1}) = \nabla \F(q_0) + \int_{0}^1 \nabla^2 \F(q_t) \defEns{q_1-q_0} \rmd t $,  we get
 \begin{equation}\label{eq:2}
p_{1}=p_0-\frac{h}{2} \parenthese{\nabla \F(q_{0})+\nabla \F(q_{1})}= p_{0}- h\nabla \F(q_0)-\frac{h}{2}\int_0^1\nabla ^2\F( q_{t}) \defEns{q_{1}-q_{0}} \rmd t \eqsp.
\end{equation}
Using that $q_1 = \Phiverletq[h][1](p_0,q_0)$, with $\Phiverletq[h][1]$ defined by \eqref{eq:def_Phiverletq}, in \eqref{eq:1} and \eqref{eq:2}, we get
\begin{align}
  &\F(q_{1})-\F(q_{0})\\
&=\ps{\nabla \F(q_{0})}{ hp_{0}-(h^2/2)\nabla \F(q_{0})} + \int_0^1D^2\F(q_t)\defEns{q_{1}-q_{0}}^{\otimes 2}(1-t) \rmd t   \eqsp,
\end{align}
and
\begin{align}
&\frac{1}{2}(\norm{p_{1}}^2-\norm{p_{0}}^2)= \frac{h^2}{2}\norm{\nabla \F(q_{0})}^2+ \frac{h^2}{8} \norm{\int_0^1 \nabla^2\F(q_t)\defEns{q_{1}-q_{0} }\rmd t }^2\\
&- h\langle p_{0},\nabla \F(q_{0})\rangle -(h/2)\int_0^1 D^2\F(q_t)\defEns{p_{0}\otimes (q_{1}-q_{0})} \rmd t \\
&+ (h^2/2)\int_0^1D^2\F(q_t) \defEns{\nabla \F(q_{0})\otimes (q_{1}-q_{0})} \rmd t \eqsp.
\end{align}
Summing these equalities up  and observing appropriate cancellations yields
\begin{flalign}
\nonumber
&H(q_{1},p_{1})-H(q_{0},p_{0})=\int_0^1D^2\F(q_t)\defEns{q_{1}-q_{0}}^{\otimes 2}(1-t) \rmd t\\
\nonumber
& - (h/2)\int_0^1D^2\F(q_t)\defEns{p_{0}\otimes (q_{1}-q_{0})} \rmd t +   (h^2/8) \norm{ \int_0^1 \nabla ^2\F(q_t)\defEns{q_{1}-q_{0}} \rmd t}^2\\
\nonumber
&+ (h^2/2)\int_0^1D^2\F(q_t)\defEns{\nabla \F(q_{0})\otimes (q_{1}-q_{0})} \rmd t
\end{flalign}
\vspace{-0.8cm}
\begin{equation}
\label{eq:decomp_lem_hamil}
= I_1+I_2+I_3+I_4 \eqsp.
\end{equation}
By using $q_1 = \Phiverletq[h][1](p_0,q_0)$ again in the definition of each $I_j$ we obtain successively
\begin{align}
I_1&=h^2\int_0^1D^2\F( q_t)\defEns{p_0}^{\otimes 2} (1-t) \rmd t- h^3 \int_0^1D^2\F(q_t)\defEns{p_{0}\otimes \nabla \F(q_{0})}(1-t)\rmd t\\
&\qquad \qquad \qquad +(h^4/4)\int_0^1D^2\F(q_t) \defEns{\nabla \F(q_{0})}^{\otimes 2}(1-t)\rmd t \eqsp, \\
I_2&= - (h^2/2)\int_0^1D^2\F(q_t) \defEns{p_{0}}^{\otimes 2} \rmd t + (h^3/4) \int_0^1D^2\F(q_t)\defEns{p_{0}\otimes \nabla   \F(q_{0})} \rmd t \eqsp,\\
I_3&= (h^4/8) \norm{ \int_0^1\nabla^2 \F(q_t)p_{0} \ \rmd t }^2+(h^6/32) \norm{\int_0^1\nabla^2\F(q_t)\nabla \F(q_{0})\  \rmd t }^2
\\
& \qquad \qquad \qquad  - (h^5/8)\ps{\int_0^1\nabla^2\F(q_t)\nabla \F(q_0) \ \rmd t }{\int_0^1 \nabla^2 \F(q_t)p_0 \ \rmd t } \eqsp.
\end{align}
and
\begin{align}
  I_4& = (h^3/2)\int_0^1D^2\F(q_t)\defEns{\nabla \F(q_{0})\otimes p_{0}}\rmd t \\
  & \qquad \qquad \qquad \qquad  \qquad \qquad - (h^4/4)\int_0^1D^2\F(q_t)\defEns{\nabla \F(q_{0})}^{\otimes 2}\rmd t \eqsp,
\end{align}
Gathering all these equalities  in \eqref{eq:decomp_lem_hamil} concludes the proof.
 \end{proof}

\begin{proof}[Proof of \Cref  {propo:accept}]
Let $\gamma \in \ooint{0,m-1}$, $T \in \nsets$, $h_0 \in \rset_+^*$ and  $h \in \ocint{0,h_0}$.
Denote for all $k \in \{0,\ldots,T\}$ by $(q_k,p_k) =
  \Phiverlet[h][k](q_0,p_0)$, $q_0, p_0 \in \rset^d$.
  For all $q_0,p_0 \in \rset^d$, consider the following decomposition
\begin{equation}
  \label{eq:diff_ham_decompo}
H(p_T,q_T)-H(p_0,q_0)=\sum_{k=0}^{T-1}\defEns{H(p_{k+1},q_{k+1})-H(p_{k},q_{k})} \eqsp.
\end{equation}
We show that each term in the sum in the right hand side of this equation is nonpositive if $\norm{\q_0}$ is large enough and $\norm{p_0} \leq \norm{q_0}^{\gamma}$.
By \Cref{lem:diff_hamiltonian_taylor_exp}, we have
\begin{equation}
\label{eq:diff_ham_k}
H(q_{k+1},p_{k+1})-H(q_{k},p_{k})
= -(h^4/4)A_k+h^2 B_k+ h^3C_k+(h^4/8) D_k \eqsp,
\end{equation}
where, setting $q_{t,k}= q_k + t(q_{k+1}-q_k)$ for $t \in \ccint{0,1}$,
\begin{align}
  A_k &=\int_0^1 D^2\F(q_{t,k})\defEns{\nabla \F(q_{k})}^{\otimes 2} \ t \  \rmd t \\
B_k& = \int_0^1D^2\F(q_{t,k})\defEns{p_{k}}^{\otimes 2} (1/2-t) \  \rmd t \\
C_k & = \int_0^1D^2\F(q_{t,k})\defEns{p_{k}\otimes \nabla \F(q_{k})}(t-1/4) \ \rmd t \\
D_k & = \norm{ \int_0^1 \nabla ^2 \F( q_{t,k}) p_{k} \ \rmd t  }^2 +(h^2/4)\norm{\int_0^1 \nabla^2 \F( q_{t,k})\nabla \F (q_{k}) \ \rmd t }^2 \\
& \qquad \qquad -  h \ps{\int_0^1\nabla^2\F(q_{t,k})\nabla \F(q_{k}) \ \rmd t }{\int_0^1 \nabla^2 \F(q_{t,k})p_{k} \ \rmd t }
\end{align}
Since $q_{t,k}-q_k= -(t h^2/2) \nabla \F(q_k) + th p_k$ and $\int_{0}^1(1/2-t) \, \rmd t = 0$, we have
for all $q_0,p_0 \in \rset^d$,
\begin{align}
B_k &= \int_{0}^1 \int_0^1 D^3 \F(q_{k} +s(q_{t,k} - q_{k})) \defEns{p_{k}^{\otimes 2} \otimes (q_{t,k}-q_{k})} (1/2-t)\rmd s \ \rmd t 
  \\
  \label{eq:definition-B_k}
    &= h B_{k,1} - h^2 B_{k,2}
\end{align}
where
\begin{align}
\label{eq:definition-B_k-1}
B_{k,1} &= \int_{0}^1 \int_0^1 D^3 \F(q_{k} +s(q_{t,k} - q_{k})) \defEns{p_{k}}^{\otimes 3}t (1/2-t)\rmd s \ \rmd t \\
\label{eq:definition-B_k-2}
B_{k,2} &=  -\frac{1}{2} \int_{0}^1 \int_0^1 D^3 \F(q_{k} +s(q_{t,k} - q_{k})) \defEns{p_{k}^{\otimes 2} \otimes \nabla \F(q_k) } t (1/2-t)\rmd s \ \rmd t \eqsp.
\end{align}
Consider now the term $C_k$ in \eqref{eq:diff_ham_k}. Similarly, using again $\int_0^1 (t- 1/2) \rmd t= 0$ and then \eqref{eq:pk}, we get $C_k = C_{k,1} +  C_{k,2} + C_{k,3}$,  where
\begin{align}
\label{eq:definition-C_k-1}
&C_{k,1} = h \int_{0}^1 \int_{0}^t D^3 \F(q_{k} +s(q_{k,t} - q_{k})) \defEns{p_{k}^{\otimes 2} \otimes \nabla \F(q_k) } t (t-1/2)\rmd s \ \rmd t \\
\nonumber
& -(h^2/2)\int_{0}^1 \int_{0}^t  D^3 \F(q_{k} +s(q_{k,t} - q_{k})) \defEns{p_{k} \otimes \parenthese{\nabla \F(q_k)}^{\otimes 2} } t (t-1/2)\rmd s \ \rmd t \\
\label{eq:definition-C_k-2}
&C_{k,2} =   \int_0^1D^2\F(q_{t,k})\defEns{p_{0}\otimes \nabla \F(q_{k})} \ \rmd t \eqsp, \\
\label{eq:definition-C_k-3}
&C_{k,3} = -h \sum_{i=1}^{k-1}\int_0^1D^2\F(q_{t,k})\defEns{\nabla \F(q_i)\otimes \nabla \F(q_{k})} \ \rmd t
\\
&\qquad \qquad  \qquad \qquad - (h/2)\int_0^1D^2\F(\q_{t,k})\defEns{\parenthese{\nabla \F(\q_0)+\nabla \F (\q_k)} \otimes \nabla \F(\q_{k})} \ \rmd t
\end{align}
We will next estimate each of these terms separately.
Let $\delta \in \ooint{0,1}$ and $\BB_0 \in \rset_+^*$ be the constants defined in \Cref{lem:variation_assum_hessian}.

\begin{enumerate}[label=(\alph*),leftmargin=0cm,itemindent=0.5cm,labelwidth=1.2\itemindent,labelsep=0cm,align=left]
\item
We first consider the case $m\in \ooint{1,2}$.
  By \Cref{lem:grad_Lip_F} and \Cref{lem:bound_first_iterate_leapfrog_b}-\ref{lem:bound_first_iterate_leapfrog_1}, there exist $C \geq 0$ and  $R_1 \geq \rhtwo$ such that for all $\q_0,p_0 \in \rset^d$ satisfying $ \norm{p_0} \leq
\norm{\q_0}^{\gamma}$ and $\norm{\q_0} \geq R_1$, for all $i \in
\{0,\ldots,T\}$,
\begin{equation}
\label{eq:bound_iterate_q_3_prood_diff_ham_3_0}
\begin{aligned}
&\norm{\q_{i}-\q_{0}}\leq (\delta/2) \norm{\q_0} \\ 
&\norm{\p_i - \p_0} \leq C(\norm{\p_0} + h \norm{q_0}^{m-1}) \leq C(\norm{\q_0}^\gamma + h \norm{q_0}^{m-1}) \eqsp.
\end{aligned}
\end{equation}
By  \Cref{lem:variation_assum_hessian}, \Cref{lem:prepa_bound_diff_ham}-\ref{lem:prepa_bound_diff_ham_1} and \eqref{eq:bound_iterate_q_3_prood_diff_ham_3_0}, there exists $R_2 \geq R_1$ such that for all $\q_0,\p_0 \in \rset^d$, $\norm{\q_0} \geq R_2$ and $\norm{\p_0} \leq \norm{\q_0}^{\gamma}$,
we get that
\begin{equation}
\label{eq:bound_A_k}
  \inf_{\norm{p_0} \leq \norm{q_0}^{\gamma}} A_k \geq \BB_0 \norm{q_k}^{3\m-4} \geq
  \BB_0 \{ (1-\delta/2)^{3m-4} \wedge (1+\delta/2)^{3m-4} \} \norm{q_0}^{3\m-4} \eqsp.
\end{equation}
Hence, $\limsup_{\norm{q_0} \to \plusinfty} \sup_{\norm{p_0} \leq \norm{q_0}^{\gamma}}\defEns{A_k/\norm{q_0}^{3\m-4}} > 0$.
We now bound $B_k$. Using \Cref{assum:potential}-\ref{assum:potential:a}, \Cref{lem:grad_Lip_F} and \eqref{eq:bound_iterate_q_3_prood_diff_ham_3_0}, we get by \eqref{eq:definition-B_k} that
\begin{equation}
\label{eq:bound_B_k}
\limsup_{\norm{q_0} \to \plusinfty} \sup_{\norm{p_0} \leq \norm{q_0}^{\gamma}}\defEns{\abs{B_k}/\norm{q_0}^{4\m-6}} < \infty \eqsp.
\end{equation}
Combining \Cref{assum:potential}-\ref{assum:potential:a}, \Cref{lem:grad_Lip_F} and \eqref{eq:bound_iterate_q_3_prood_diff_ham_3_0} again, we get by crude estimate that there exists $C \geq 0$ such that
\begin{equation}
  \label{eq:bound_D_k}
  \limsup_{\norm{q_0} \to \plusinfty} \sup_{\norm{p_0} \leq \norm{q_0}^{\gamma}} \defEns{\abs{D_k}/\norm{q_0}^{4\m-6}} \leq C h^2 \eqsp.
\end{equation}
We finally bound the two terms $C_{k,1}$ and $C_{k,2}$. First,
using the same reasoning as for $B_k$, we get that
\begin{equation}
\begin{aligned}
  &\limsup_{\norm{q_0} \to \plusinfty} \sup_{\norm{p_0} \leq \norm{q_0}^{\gamma}} \defEns{\abs{C_{k,1}}/\norm{q_0}^{4\m-6}} < \infty \eqsp, \\
  &\limsup_{\norm{q_0} \to \plusinfty} \sup_{\norm{p_0} \leq \norm{q_0}^{\gamma}} \defEns{\abs{C_{k,2}}/\norm{q_0}^{2\m-3+\gamma}} < \infty \eqsp.
\end{aligned}
\end{equation}
Arguing like in \eqref{eq:bound_A_k}, we get that
$\limsup_{\norm{q_0} \to \plusinfty}  \sup_{\norm{p_0} \leq \norm{q_0}^{\gamma}} \defEns{C_{k,3}/\norm{q_0}^{3\m-4}} < 0$.
Gathering all these results and  using that  $3m-4 \geq \max(4m-6,2m-3+\gamma)$ for $m \in \ooint{1,2}$ and $\gamma \in \ooint{0,m-1}$, we get that for all $k\in \{0, \ldots,T-1\}$,
\begin{equation}
  \limsup_{\norm{q_0} \to \plusinfty}  \sup_{\norm{p_0} \leq \norm{q_0}^{\gamma}} \defEns{\Ham(q_{k+1},p_{k+1})-\Ham(q_{k},p_{k})}/\norm{q_0}^{3m-4} < 0 \eqsp,
\end{equation}
which concludes the proof.
\item
  Consider now the case $\m=2$.
First   by \Cref{lem:grad_Lip_F} and \Cref{lem:bound_first_iterate_leapfrog_b}-\ref{lem:bound_first_iterate_leapfrog_b_2}, there exist $\bar{S}_1 \geq 0$ and  $R_1 \geq \rhtwo$ such that for all $T \in \nsets$ and $h \in \ocint{0,\bar{S}_1/T}$, $\q_0,p_0 \in \rset^d$ such that $ \norm{p_0} \leq
\norm{\q_0}^{\gamma}$ and $\norm{\q_0} \geq R_1$, and $i \in
\{0,\ldots,T\}$,
\begin{equation}\label{eq:bound_iterate_q_3_prood_diff_ham_3_0_2}
\norm{\q_{i}-\q_{0}}\leq (\delta/2) \norm{\q_0}  \eqsp.
\end{equation}
and
\begin{equation}
\label{eq:bound_iterate_q_3_prood_diff_ham_3_0_3}
\begin{aligned}
\norm{\q_{i}-\q_{0}}&\leq \norm{p_0} Th + (1/2) (T+1)^2 h^2 (\constzeroT + \constzero \delta/2) \norm{q_0}\eqsp, \\
 \norm{\p_i - \p_0} &\leq h T \{\constzeroT + (\constzeroT + \constzero \delta/2) \norm{q_0}  \} \eqsp,
\end{aligned}
\end{equation}
where $\constzero$ and $\constzeroT$ are defined in \Cref{lem:grad_Lip_F}. By  \Cref{lem:variation_assum_hessian}, \Cref{lem:prepa_bound_diff_ham}-\ref{lem:prepa_bound_diff_ham_2} and \eqref{eq:bound_iterate_q_3_prood_diff_ham_3_0_2}, there exists $R_2 \geq R_1$ such that for all $\q_0,\p_0 \in \rset^d$, $\norm{\q_0} \geq R_2$ and $\norm{\p_0} \leq \norm{\q_0}^{\gamma}$
\begin{equation}
\label{eq:bound_A_k_2}
  \inf_{\norm{p_0} \leq \norm{q_0}^{\gamma}} A_k \geq \BB_0 \norm{q_k}^{2} \geq
  \BB_0 (1-\delta/2)^{2} \norm{q_0}^{2} \eqsp.
\end{equation}
Hence,
\begin{equation}
\label{eq:bound_A_k_2_2}
\limsup_{\norm{q_0} \to \plusinfty} \sup_{\norm{p_0} \leq \norm{q_0}^{\gamma}}\defEns{A_k/\norm{q_0}^{2}} \geq   \BB_0 (1-\delta/2)^{2}  \eqsp.
\end{equation}
We now bound $B_k$. Using \Cref{assum:potential}-\ref{assum:potential:a}, \Cref{lem:grad_Lip_F} and \eqref{eq:bound_iterate_q_3_prood_diff_ham_3_0_3}, we get by \eqref{eq:definition-B_k} that there exists $\rmD_1 \geq 0$ which does not depend on $T$ and $h$ such that
\begin{equation}
\label{eq:bound_B_k_2}
\limsup_{\norm{q_0} \to \plusinfty} \sup_{\norm{p_0} \leq \norm{q_0}^{\gamma}}\defEns{\abs{B_k}/\norm{q_0}^{2}} \leq \rmD_1 h  \{(hT)^3 + (hT)^4\}  \eqsp.
\end{equation}
Combining \Cref{assum:potential}-\ref{assum:potential:a}, \Cref{lem:grad_Lip_F} and \eqref{eq:bound_iterate_q_3_prood_diff_ham_3_0_3} again, we get by crude estimate that there exists $\rmD_2 \geq 0$ which does not depend on $T$ and $h$ such that
\begin{equation}
  \label{eq:bound_D_k_2}
  \limsup_{\norm{q_0} \to \plusinfty} \sup_{\norm{p_0} \leq \norm{q_0}^{\gamma}} \defEns{\abs{D_k}/\norm{q_0}^{2}} \leq \rmD_2 (hT)^2 \eqsp.
\end{equation}
We finally bound the two terms $C_{k,1}$ and $C_{k,2}$. First,
using the same reasoning as for $B_k$, we get that there exists $\rmD_3 \geq 0$ which does not depend on $T$ and $h$ such that
\begin{equation}
\label{eq:bound_C_k_2}
\begin{aligned}
&\limsup_{\norm{q_0} \to \plusinfty} \sup_{\norm{p_0} \leq \norm{q_0}^{\gamma}} \defEns{\abs{C_{k,1}}/\norm{q_0}^{2}} < \rmD_3 h \{(hT)^4 + (hT)^5\} \eqsp,\\
&\limsup_{\norm{q_0} \to \plusinfty} \sup_{\norm{p_0} \leq \norm{q_0}^{\gamma}} \defEns{\abs{C_{k,2}}/\norm{q_0}^{1+\gamma}} < \infty \eqsp.
\end{aligned}
\end{equation}
Finally, arguing like in \eqref{eq:bound_A_k_2_2}, we get that
\begin{equation}
\label{eq:bound_C_k_2_2}
  \limsup_{\norm{q_0} \to \plusinfty}  \sup_{\norm{p_0} \leq \norm{q_0}^{\gamma}} \defEns{C_{k,3}/\norm{q_0}^{2}} < 0 \eqsp.
\end{equation}
Combining \eqref{eq:bound_A_k_2_2}-\eqref{eq:bound_B_k_2}-\eqref{eq:bound_D_k_2}-\eqref{eq:bound_C_k_2} and \eqref{eq:bound_C_k_2_2} in  \eqref{eq:diff_ham_k}, and using that  $2 \geq 1+ \gamma $ for  $\gamma \in \ooint{0,1}$, we get that for all $k\in \{0, \ldots,T-1\}$,
\begin{align}
  &  \limsup_{\norm{q_0} \to \plusinfty}  \sup_{\norm{p_0} \leq \norm{q_0}^{\gamma}} \defEns{\Ham(q_{k+1},p_{k+1})-\Ham(q_{k},p_{k})}/\norm{q_0}^{2} \\
  & \qquad \qquad\qquad \qquad \leq - \BB_0 (1-\delta/2)^{2} h^4 + \rmD_1\{(hT)^3 + (hT)^4\} h^3 \\
  & \qquad \qquad  \qquad \qquad \qquad + \rmD_2(hT)^2h^4 + \rmD_3\{(hT)^4 + (hT)^5\} h^4\eqsp.
\end{align}
Therefore, there exists $\bar{S}_4\leq \bar{S}_3$ such for any $T\in \nsets$, $h \in \ocint{0,\bar{S}_4/T^{3/2}}$,
\begin{equation}
    \limsup_{\norm{q_0} \to \plusinfty}  \sup_{\norm{p_0} \leq \norm{q_0}^{\gamma}} \defEns{\Ham(q_{k+1},p_{k+1})-\Ham(q_{k},p_{k})}/\norm{q_0}^{2}< 0 \eqsp,
  \end{equation}
  which completes the proof.
\end{enumerate}
\end{proof}

\subsubsection{Proof of \Cref{propo:accept_pertub}}
\label{sec:proof-crefth_accept_2}

\begin{lemma}
\label{lem:variation_assum_hessian_pertub}
Assume \Cref{ass:pertub}.
Then there exist $\delta \in \ooint{0,1}$, $R_1 \in \rset_+$ $\BB_1 \in \rset_+^*$ such that for all $\q,\x,z \in \rset^d$, with
\begin{equation}
\label{eq:hyp_variation_assum_hessian_pertub}
\norm{q} \geq R_1 \eqsp, \qquad  \max\parenthese{\norm{\q-\x},\norm{\q-z}} \leq \delta \norm{\q}  \eqsp,
\end{equation}
we have
\begin{equation}
  \ps{\Sigmabf \nabla \F(\x)}{\Sigmabf z} \geq \BB_1 \norm{\q}^{2}  \eqsp.
\end{equation}
\end{lemma}
\begin{proof}
Under \Cref{ass:pertub},  it can be easily checked that there
exists $\tilde{C}_U \geq 0$ (depending only on $\constfive$ and $\Sigmabf$) such that for all  $\q,\x,z \in \rset^d$ satisfying  \eqref{eq:hyp_variation_assum_hessian} for $\delta \in \ooint{0,1}$ and $R_1 \in \rset_+$,
\begin{equation}
 \ps{\Sigmabf \nabla \F(\x)}{\Sigmabf  z}  \geq \ps{\Sigmabf^2 q}{\Sigmabf q} - \tilde{C}_U  (\delta \norm{\q}^{2} + \norm[\rho]{\q})\eqsp, \quad \norm{q} \geq R_1 \eqsp.
\end{equation}
The proof is concluded by using that $\Sigmabf$ is definite positive and  taking $\delta$ sufficiently small and $R_1$ sufficiently large.
\end{proof}

\begin{proof}[Proof of \Cref{propo:accept_pertub}]
  Note that by \Cref{ass:pertub}, \Cref{lem:bound_first_iterate_leapfrog_b}-\ref{lem:bound_first_iterate_leapfrog_b_2}, \Cref{lem:prepa_bound_diff_ham}-\ref{lem:prepa_bound_diff_ham_2} and \Cref{lem:variation_assum_hessian_pertub}, there exists $\BB_1,\bar{S}_1 >0$, $R_1 \geq 0$, such that for any $T \in \nsets$, $h \in \ocint{0,\bar{S}_1/T}$, $q_0,p_0 \in \rset^d$, $\norm{p_0} \leq \norm{q_0}^{\gamma}$, $\norm{q_0} \geq \max(1,R_1)$ and $k,i \in \{0, \ldots,T\}$, $\norm{q_0} \leq 2 \norm{q_k} \leq 3 \norm{q_0}$, $\abs{\tilde{U}(q_k)} \leq C_1 \norm{q_0}^{\rho}$, 
  \begin{equation}
\label{eq:proof_pertub_accept_1}
 \ps{\Sigmabf \nabla U(q_i)}{\Sigma q_k} \geq \BB_1 \norm[2]{q_k}  \eqsp, \quad  \norm{\nabla U(q_i)} \leq C_1 \norm{q_k} \eqsp,
  \end{equation}
  where $q_k = \Phiverletq[T][k](q_0,p_0)$ and  $C_1= \max(4\constfive, 3(\norm{\Sigmabf} + 2 \constfive))$.
  Let now $T \in \nsets$, $h \in \ocint{0,\bar{S}_1/T}$ and denote for any $k \in \{0,\ldots,T\}$, $(q_k,p_k) = \Phiverlet[T][k](q_0,p_0)$ for $q_0,p_0 \in \rset^d$. We consider the following decomposition:
  \begin{equation}
\label{eq:proof_pertub_accept_2}
H(p_T,q_T)-H(p_0,q_0)=\sum_{k=0}^{T-1}\defEns{H(p_{k+1},q_{k+1})-H(p_{k},q_{k})} \eqsp.
\end{equation}
We show below that there exists $\bar{S} < \bar{S}_1$ such that, for all $h \geq 0$ and $T \geq 0$ satisfying $hT \leq \bar{S}$,
\begin{equation}
\label{eq:proof_pertub_accept_3}
\limsup_{\norm{q_0} \to \plusinfty} \sup_{\norm{p_0} \leq \norm{q_0}^{\gamma}} [ \defEns{H(p_{k+1},q_{k+1})-H(p_{k},q_{k})}/\norm[2]{q_0}] <0 \eqsp,
\end{equation}
from which  the proof follows.
First for any $q_0,p_0 \in \rset^d$, $k \in \{0,\ldots,T-1\}$, we have
\begin{equation}
\label{eq:proof_pertub_accept_4}
  H(p_{k+1},q_{k+1})-H(p_{k},q_{k}) = A_k + B_k + C_k \eqsp,
\end{equation}
where $2 A_k =  \ps{\Sigmabf q_{k+1}}{q_{k+1}} - \ps{\Sigmabf q_{k}}{q_{k}}$,
$B_k = \tilde{U}(q_{k+1}) - \tilde{U}(q_k)$, and $2 C_k = \norm[2]{p_{k+1}} - \norm[2]{p_k}$.
By \eqref{eq:proof_pertub_accept_1} and \Cref{ass:pertub}, we have
\begin{equation}
\label{eq:proof_pertub_accept_6}
  \lim_{\norm{q_0} \to \plusinfty} \sup_{\norm{p_0} \leq \norm{q_0}^\gamma} \abs{B_k}/\norm[2]{q_0} =0 \eqsp,
\end{equation}
and
\begin{align}
  \label{eq:proof_pertub_accept_7}
  A_k &= h\ps{\Sigmabf p_k}{q_k} + h^2 \ps{\Sigmabf p_k}{p_k}/2 -h^2 \ps{\Sigmabf q_k }{\Sigmabf q_k }/2 \\
  & \qquad \qquad - h^3 \ps{\Sigmabf p_k}{\Sigmabf q_k }/2  +  h^4 \ps{\Sigmabf^2 q_k}{\Sigmabf q_k } /8  + A_{k,1}  \eqsp,
\end{align}
\begin{align}
    C_k & = -h\ps{\Sigmabf p_k}{q_k} -h^2 \ps{\Sigmabf p_k}{p_k}/2 +h^2 \ps{\Sigmabf q_k }{\Sigmabf q_k }/2\\
  & \qquad \quad  +3h^3 \ps{\Sigmabf p_k}{\Sigmabf q_k}/4  + h^4 \ps{\Sigmabf p_k }{\Sigmabf p_k}/8  
    -h^4 \ps{\Sigmabf^2 q_k }{\Sigmabf q_k} /4 \\
  &\qquad \quad -h^5 \ps{\Sigmabf^2 p_k}{q_k}/8 
    + h^6 \ps{\Sigmabf^2 q_k }{\Sigmabf^2 q_k}/32  + C_{k,1} \eqsp,
    \label{eq:proof_pertub_accept_8}
\end{align}
where
\begin{equation}
\label{eq:proof_pertub_accept_9}
  \lim_{\norm{q_0} \to \plusinfty} \sup_{\norm{p_0} \leq \norm{q_0}^\gamma} \{\abs{A_{k,1}} + \abs{C_{k,1}}\}/\norm[2]{q_0} =0 \eqsp,
\end{equation}
Using \eqref{eq:proof_pertub_accept_4}, \eqref{eq:proof_pertub_accept_7} and \eqref{eq:proof_pertub_accept_8}, we obtain that for any $q_0,p_0 \in \rset^d$,
\begin{equation}
  \label{eq:proof_pertub_accept_10}
  H(q_{k+1},p_{k+1}) - H(q_k,p_k) = D_k + A_{k,1}+  B_k + C_{k,1} \eqsp,
\end{equation}
where
\begin{align}
  D_k &= h^3 \ps{\Sigmabf p_k}{\Sigmabf q_k}/4 + h^4 \ps{\Sigmabf p_k }{\Sigmabf p_k}/8   -h^4 \ps{\Sigmabf^2 q_k }{\Sigmabf q_k} /8 \\
&  \qquad \qquad \qquad -h^5 \ps{\Sigmabf^2 p_k}{q_k}/8 + h^6 \ps{\Sigmabf^2 q_k }{\Sigmabf^2 q_k}/32 \eqsp.
\end{align}
Using that for $k \in \{1,\ldots,T\}$,  $p_k= p_0 -(h/2)\{\nabla U(q_0) + \nabla U(q_k)\} - h \sum_{i=1}^{k-1} \nabla U(q_i)$ and \eqref{eq:proof_pertub_accept_1}, we obtain that for any $k\in \{1,\ldots,T\}$ and $q_0,p_0$, $\norm{q_0} \geq \max(1,R_1)$, $\norm{p_0} \geq \norm[\gamma]{q_0}$,
\begin{align}
  D_k &\leq -h^4 k \BB_1 \norm[2]{q_k}/8 + h^6k^2 \norm{\Sigmabf}^2 C_1 \norm[2]{q_k}  - h^4 \ps{\Sigmabf^2 q_k }{\Sigmabf q_k} /8 \\
      &  \qquad \qquad\qquad \qquad+  h^6 k C_1 \norm{\Sigmabf}^2 \norm[2]{q_k}/8 +  h^6 \norm{\Sigmabf}^4 \norm[2]{q_k}/32     + D_{k,1} \eqsp,
\end{align}
where
\begin{equation}
\label{eq:proof_pertub_accept_lim_D}
    \lim_{\norm{q_0} \to \plusinfty} \sup_{\norm{p_0} \leq \norm{q_0}^\gamma} \abs{D_{k,1}}/\norm[2]{q_0} =0\eqsp.
  \end{equation}
  Define
  \begin{equation}
\label{eq:proof_pertub_accept_def_S_2}
    \bar{S}_2 = \min\defEns{ S \in \ocint{0,\bar{S}_1} \, : \, S^2 ( 2 C_1  \norm{\Sigmabf}^2 + \norm{\Sigmabf}^4)  - \BB_1/8 \geq -\BB_1/16} \eqsp.
  \end{equation}
  Then, if $Th \leq \bar{S_2}$ for any  $q_0,p_0$, $\norm{q_0} \geq \max(1,R_1)$, $\norm{p_0} \geq \norm[\gamma]{q_0}$, we get that
  \begin{equation}
\label{eq:proof_pertub_accept_bound_D}
    D_k \leq -\BB_1h^4 k \norm[2]{q_k}/16 + D_{k,1} \eqsp.
  \end{equation}
  Similarly using that $\Sigmabf$ is definite positive, we obtain that there exist $\BB_2 >0$ and  $\bar{S}_3  \in \ocint{0,\bar{S}_1}$ such that if $hT \leq \bar{S}_3$, for any  $q_0,p_0$, $\norm{q_0} \geq \max(1,R_1)$, $\norm{p_0} \geq \norm[\gamma]{q_0}$, we get that
  \begin{equation}
\label{eq:proof_pertub_accept_bound_D_0}
    D_0 \leq -\BB_2 \norm[2]{q_0} + D_{0,1} \eqsp,
 \quad \text{   where }
    \lim_{\norm{q_0} \to \plusinfty} \sup_{\norm{p_0} \leq \norm{q_0}^\gamma} \abs{D_{0,1}}/\norm[2]{q_0} =0\eqsp.
  \end{equation}
  Combining \eqref{eq:proof_pertub_accept_6}-\eqref{eq:proof_pertub_accept_9}-\eqref{eq:proof_pertub_accept_lim_D}-\eqref{eq:proof_pertub_accept_bound_D} and \eqref{eq:proof_pertub_accept_bound_D_0} in \eqref{eq:proof_pertub_accept_10}, we obtain that \eqref{eq:proof_pertub_accept_3} holds with $\bar{S} = \min(\bar{S}_2,\bar{S}_3)$ since \eqref{eq:proof_pertub_accept_1} implies that  $\norm{q_k} \geq \norm{q_0}/2$.
\end{proof}


\appendix
\section{Harris recurrence for mixture of Metropolis-Hastings type Markov kernels}
\label{sec:harr-recurr-metr}
Let $(\Xset,\Xtribu)$ be a measurable space and $\lambda$ be a
$\sigma$-finite measure on $\Xtribu$.  For all $i \in \nset^*$,
let $\alpha_i: \Xset \times \Xset \to \ccint{0,1}$ be a measurable function and   $\qker_i : \Xset
\times \Xset \to \ccint{0,\plusinfty}$ be a Markov transition density \wrt\ $\lambda$. Consider the Markov kernel
$\kernel_i$ on $\Xset \times \Xtribu$ defined by
\begin{equation}
  \label{eq:form_MH_gene}
  \kernel_i(x,\eventA) = \int_{\eventA} \alpha_i(x,y) \qker_i(x,y) \lambda(\rmd y ) + \updelta_x(\eventA) r_i(x) \eqsp, \quad  \text{$x \in \Xset$  and $\eventA \in \Xtribu$,}
\end{equation}
where for all $x
\in \Xset$
\begin{equation}
  \label{eq:def_r_i_harris_tierney}
  r_i(x) = 1 - \int_{\Xset} \alpha_i(x,y) \qker_i(x,y) \lambda(\rmd y ) \eqsp.
\end{equation}
For instance,  $\kernel_i$ may be a Markov kernel associated to the Metropolis-Hastings
algorithm, \ie
\begin{equation}
\label{eq:definition-MH-ratio}
  \alpha_i(x,y) =
  \begin{cases}
\min\parentheseDeux{1, \frac{\pi(y) \qker_i(y,x)}{\pi(x) \qker_i(x,y)}} \eqsp, & \text{ if } \pi(x) \qker_i(x,y) >0 \eqsp,\\
1\eqsp, & \text{otherwise} \eqsp,
  \end{cases}
\end{equation}
for some probability density $\pi: \Xset \to \coint{0,\plusinfty}$
with respect to $\lambda$.
We use the results below in the case
where for any $i \in \nsets$, $\kernel_i$ is a Markov kernel associated to the HMC algorithm.
\cite[Corollary 2]{tierney:1994}
considers Metropolis-Hastings kernels $\kernel_i$ with $\alpha_i$ defined by \eqref{eq:definition-MH-ratio} and shows that that if $\kernel_i$ is irreducible, then $\kernel_i$ is Harris recurrent. We extend this
result to kernels $\kernel_i$ of the form \eqref{eq:form_MH_gene} (but that do not satisfy \eqref{eq:definition-MH-ratio}) and  mixture of Markov kernels $\kernel_{\bfvarpi}$ defined on
$(\Xset, \Xtribu)$ by
\begin{equation}
\label{eq:mixture_kernel_Harris}
  \kernel_{\bfvarpi} = \sum_{i \in \nset^*} \varpi_i \kernel_i
\end{equation}
where $(\varpi_i)_{i\in
  \nset^*}$ is a sequence of non-negative numbers satisfying $\sum_{i
  \in \nset^*} \varpi_i = 1$.

\begin{proposition}
  \label{propo:harris_rec}
  Let $\kernel_{\bfvarpi}$ be the Markov kernel given by
  \eqref{eq:mixture_kernel_Harris} and associated with the sequence of
  Markov kernel $(\kernel_i)_{i \in \nset^*}$ given by
  \eqref{eq:form_MH_gene}.  Let $\pi$ be a probability measure on
  $(\Xset,\Xtribu)$. Assume that $\pi$ and $\lambda$ are mutually
  absolutely continuous and for all $i \in \nset^*$, $\pi$ is invariant for $\kernel_i$. If $\kernel_{\bfvarpi}$ is irreducible and there exists $i \in \nset^*$ such that $\varpi_i >0$ and for all $x \in \Xset$ $r_i(x) <1$, with $r_i$ defined by \eqref{eq:def_r_i_harris_tierney},  then $\kernel$ is Harris
  recurrent.
\end{proposition}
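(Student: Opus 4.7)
My plan is to combine the Lebesgue-like structure of the component kernels $\kernel_i$ with the standard structure theorem for irreducible Markov chains admitting an invariant probability measure (see e.g.\ \cite[Theorem 9.0.1 and Proposition 9.1.7]{meyn:tweedie:2009}): since $\kernel_{\mathbf{a}}$ is irreducible and $\pi$-invariant, there exists a maximal absorbing set $H\in\Xtribu$ with $\pi(H^c)=0$ such that the restriction of $\kernel_{\mathbf{a}}$ to $H$ is Harris recurrent. In order to conclude that $\kernel_{\mathbf{a}}$ is itself Harris recurrent on all of $\Xset$, it suffices to show that from every $x\in H^c$ the hitting time $\tau_H$ of $H$ is almost surely finite.

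The key calculation exploits the mutual absolute continuity $\pi\sim\lambda$, which transfers $\pi(H^c)=0$ into $\lambda(H^c)=0$. Plugging this into the definition \eqref{eq:form_MH_gene} of each $\kernel_i$ and the mixture formula \eqref{eq:mixture_kernel_Harris} yields, for every $x\in\Xset$,
\begin{equation*}
\kernel_{\mathbf{a}}(x,H^c) \;=\; \updelta_x(H^c)\sum_{i\in\nset^*} a_i r_i(x),
\end{equation*}
since the absolutely continuous part of each $\kernel_i$ contributes zero mass to any $\lambda$-null set. For $x\in H^c$ this reduces to $\sum_i a_i r_i(x)$, and the hypothesis that some $i_0$ with $a_{i_0}>0$ satisfies $r_{i_0}(x)<1$ for every $x$ forces the uniform strict bound
\begin{equation*}
\sum_{i\in\nset^*} a_i r_i(x) \;\leq\; 1 - a_{i_0}\bigl(1-r_{i_0}(x)\bigr) \;<\; 1 .
\end{equation*}
Moreover, the identity above shows that the mass of $\kernel_{\mathbf{a}}(x,\cdot)$ supported in $H^c$ is entirely concentrated on $\{x\}$; hence from $x\in H^c$, every step of the chain either stays at $x$ or lands in $H$.

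From these two observations, the time the chain spends at $x\in H^c$ before entering $H$ is stochastically dominated by a geometric random variable with success probability $1-\sum_i a_i r_i(x)>0$, so $\mathbb{P}_x(\tau_H<\infty)=1$ for every $x\in\Xset$. To finish, I would note that any set $A$ accessible for $\kernel_{\mathbf{a}}$ satisfies $\kernel_{\mathbf{a}}^n(y,A)=\kernel_{\mathbf{a}}^n(y,A\cap H)$ for $y\in H$ and $n\geq 1$ (since $H$ is absorbing), so $A\cap H$ is accessible for the restricted kernel; Harris recurrence of the restriction then gives $\mathbb{P}_y(N_A=\infty)=1$ for every $y\in H$, and the strong Markov property at $\tau_H$ extends this to all $x\in\Xset$. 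The delicate point of the argument is the interplay between the three hypotheses: it is the mutual absolute continuity $\pi\sim\lambda$ that confines any possible loss to the Dirac part of the kernels, and it is precisely the assumption that some component $i_0$ has \emph{both} $a_{i_0}>0$ and $r_{i_0}<1$ everywhere that rules out the chain being trapped at a single point of $H^c$.
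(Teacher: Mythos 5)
Your argument is correct, but it takes a genuinely different route from the paper. The paper's proof stays entirely within the harmonic-function characterization of Harris recurrence: recurrence of $\kernel_{\mathbf{a}}$ follows from irreducibility plus the invariant probability $\pi$ (Meyn--Tweedie, Theorem 10.1.1), any bounded harmonic $\harmonic$ equals $\pi(\harmonic)$ $\pi$-a.s., and then the explicit computation $\kernel_{\mathbf{a}}\harmonic(x)=\sum_i a_i\{\pi(\harmonic)(1-r_i(x))+\harmonic(x)r_i(x)\}$ — valid because $\pi\sim\lambda$ makes the absolutely continuous parts blind to the $\pi$-null set where $\harmonic\neq\pi(\harmonic)$ — forces $\{\harmonic(x)-\pi(\harmonic)\}\sum_i a_i\{1-r_i(x)\}=0$ for every $x$, so the hypothesis on $i_0$ yields $\harmonic\equiv\pi(\harmonic)$ and Harris recurrence follows from the criterion ``recurrent + bounded harmonic functions constant'' (Theorems 17.1.1, 17.1.4, 17.1.7). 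You instead invoke the Harris decomposition $\Xset=H\cup H^{\complementary}$ with $H$ absorbing, the restriction to $H$ Harris recurrent and $H^{\complementary}$ transient and $\pi$-null, and then show the chain almost surely leaves $H^{\complementary}$: since $\lambda(H^{\complementary})=0$, from $x\in H^{\complementary}$ the chain can only remain in $H^{\complementary}$ by sitting at $x$ through the rejection part, which happens with probability $\sum_i a_i r_i(x)\leq 1-a_{i_0}(1-r_{i_0}(x))<1$ at each step, so $\tau_H<\infty$ a.s., and the strong Markov property at $\tau_H$ transfers Harris recurrence from $H$ to all of $\Xset$. Both proofs hinge on exactly the same two structural facts ($\pi\sim\lambda$ confines the problematic behaviour to the Dirac part, and the pointwise hypothesis $a_{i_0}>0$, $r_{i_0}(x)<1$ rules out getting stuck at an exceptional point); the paper's route is shorter once the Meyn--Tweedie harmonic-function machinery is accepted, while yours is more probabilistic and makes the mechanism transparent (the null set on which the chain could fail to be Harris can only be escaped by acceptance). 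One point to make explicit in your write-up: the decomposition theorem you cite requires recurrence, which you should first deduce from irreducibility and the invariant probability, and $\pi(H^{\complementary})=0$ deserves a one-line justification (an invariant probability measure cannot charge a transient set, or equivalently $\pi$ is equivalent to the maximal irreducibility measure for a recurrent chain); these are standard but are the hinges on which your version turns.
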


\begin{proof}
A bounded measurable function is said to be harmonic if  $\kernel_{\bfvarpi}\harmonic = \harmonic$.
By \cite[Theorem 17.1.4, Theorem
17.1.7]{meyn:tweedie:2009} a Markov kernel $\kernel_{\bfvarpi}$ is Harris recurrent if $\kernel_{\bfvarpi}$ is recurrent and any
bounded harmonic function $\harmonic : \rset^d \to \rset$ is constant.
By \cite[Theorem 10.1.1]{meyn:tweedie:2009}, since $\kernel_{\bfvarpi}$ is irreducible and admits $\pi$ as an invariant probability measure, then $\kernel_{\bfvarpi}$ is recurrent.
On the other hand, any bounded harmonic function $\phi$ is $\pi$-almost surely equal to $\pi(\phi)$ by \cite[Theorem 17.1.1, Lemma 17.1.1]{meyn:tweedie:2009}.
Using that $\pi$ and $\lambda$ are mutually
absolutely continuous, and $\pi$ is an invariant probability measure for $\kernel_i$ for all $i \in \nset^*$,  we get by \eqref{eq:form_MH_gene} that for all $x \in \Xset$
\begin{equation}
   \kernel_{\bfvarpi} \harmonic(x) =  \sum_{i\in \nset^*} \varpi_i \defEns{\pi(\harmonic) (1-r_i(x)) + \harmonic(x)r_i(x)} \eqsp.
\end{equation}
Combining this result with $\kernel_{\bfvarpi} \phi = \phi$, we get for all $x \in \Xset$
\begin{equation}
\{\harmonic(x)-\pi(\harmonic)\} \sum_{i\in \nset^*} \varpi_i \{1-r_i(x)\}= 0\eqsp.
\end{equation}
The condition that there exists $i \in \nset^*$ such that $\varpi_i >0$ and for all $x \in \Xset$ $r_i(x) <1$, implies  that for all $x \in \Xset$, $\harmonic(x) = \pi(\harmonic)$.
\end{proof}

\bibliographystyle{plain}
\bibliography{../Bibliography/bibliography}
\end{document}